\documentclass[11pt,a4paper]{article}

\usepackage{lipsum}
\usepackage{amsfonts}
\usepackage{graphicx}
\usepackage{epstopdf}
\usepackage{algorithmic}

\ifpdf
  \DeclareGraphicsExtensions{.eps,.pdf,.png,.jpg}
\else
  \DeclareGraphicsExtensions{.eps}
\fi


\usepackage{amsmath}
\usepackage{amssymb}
\usepackage{amsfonts}
\usepackage{amsthm}
\usepackage[utf8]{inputenc}
\usepackage[T1]{fontenc}
\usepackage{xspace}

\usepackage{thm-restate}
\usepackage{hyperref}
\usepackage{enumitem}
\usepackage{subcaption}
\usepackage{mdframed}
\usepackage{tabularx, environ}
\usepackage{cite}
\usepackage{tcolorbox}
\usepackage{cleveref}
\crefformat{footnote}{#2\footnotemark[#1]#3}

\usepackage[affil-sl]{authblk} 



\addtolength{\topmargin}{-1cm}
\addtolength{\textheight}{1.2cm}
\addtolength{\oddsidemargin}{-0.5cm}
\addtolength{\evensidemargin}{-0.5cm}
\addtolength{\textwidth}{1cm}
\addtolength{\footskip}{1.2cm}

\definecolor{dark-red}{rgb}{0.4,0.15,0.15}
\definecolor{dark-blue}{rgb}{0.15,0.15,0.4}
\definecolor{medium-blue}{rgb}{0,0,0.5}
\definecolor{gray}{rgb}{0.5,0.5,0.5}
\definecolor{color-Ig}{rgb}{0.15,0.7,0.15}

\hypersetup{
colorlinks = true,
urlcolor={medium-blue},
linkcolor = black!30!blue,
citecolor = black!30!green
}


\newtheorem{theorem}{Theorem}[section]

\newtheorem{corollary}[theorem]{Corollary}

\newtheorem{proposition}[theorem]{Proposition}

\newtheorem{lemma}[theorem]{Lemma}

\newtheorem*{definition*}{Definition}

\newtheorem{definition}[theorem]{Definition}


\usepackage{tikz}

\usetikzlibrary{mindmap,positioning}
\usetikzlibrary{graphs}
\usetikzlibrary[graphs]
\usetikzlibrary{patterns}
\usetikzlibrary{arrows,automata,positioning}
\usetikzlibrary{decorations.pathreplacing}
\usetikzlibrary{decorations}
\usetikzlibrary{decorations.pathmorphing}
\usetikzlibrary{shapes.geometric}
\usetikzlibrary{quotes}
\usetikzlibrary{shapes.callouts}
\usetikzlibrary{arrows.meta}
\usetikzlibrary{calc}
\usetikzlibrary{snakes}
\usetikzlibrary{fit}

\tikzset{faded/.style={gray,very thin}}
\tikzset{vertex/.style={draw,circle,minimum size=10pt,inner sep=0pt}}
\tikzset{novertex/.style={circle,minimum size=10pt,inner sep=0pt}}
\tikzset{blackvertex/.style={draw,circle,minimum size=10pt,inner sep=0pt, fill=black}}
\tikzset{redvertex/.style={draw,circle,minimum size=10pt,inner sep=0pt, fill=red}}
\tikzset{redvertexfaded/.style={draw,circle,faded,minimum size=10pt,inner sep=0pt, fill=red!50}}
\tikzset{greenvertex/.style={draw,circle,minimum size=10pt,inner sep=0pt, fill=green}}
\tikzset{greenvertexfaded/.style={draw,circle,faded,minimum size=10pt,inner sep=0pt, fill=green!50}}
\tikzset{bluevertex/.style={draw,circle,minimum size=10pt,inner sep=0pt, fill=blue}}
\tikzset{bluevertexfaded/.style={draw,circle,faded,minimum size=10pt,inner sep=0pt, fill=blue!50}}
\tikzset{yellowvertex/.style={draw,circle,minimum size=10pt,inner sep=0pt, fill=yellow}}
\tikzset{yellowvertexfaded/.style={draw,circle,faded,minimum size=10pt,inner sep=0pt, fill=yellow!50}}

\tikzset{edge/.style = {->,> = latex'}}

\tikzset{snake it/.style={decorate, decoration=snake}}

\newcounter{c}

\makeatletter

\newcolumntype{\expand}{}
\long\@namedef{NC@rewrite@\string\expand}{\expandafter\NC@find}

\NewEnviron{boxproblem}[2][]{%
  \def\boxproblem@arg{#1}%
  \def\boxproblem@framed{framed}%
  \def\boxproblem@lined{lined}%
  \def\boxproblem@doublelined{doublelined}%
  \ifx\boxproblem@arg\@empty%
    \def\boxproblem@hline{}%
  \else%
    \ifx\boxproblem@arg\boxproblem@doublelined%
      \def\boxproblem@hline{\hline\hline}%
    \else%
      \def\boxproblem@hline{\hline}%
    \fi%
  \fi%
  \ifx\boxproblem@arg\boxproblem@framed%
    \def\boxproblem@tablelayout{|>{\bfseries}lX|c}%
    \def\boxproblem@title{\multicolumn{2}{|l|}{%
        \raisebox{-\fboxsep}{\textsc{\normalsize #2}}%
      }}%
  \else
    \def\boxproblem@tablelayout{>{\bfseries}lXc}%
    \def\boxproblem@title{\multicolumn{2}{l}{%
        \raisebox{-\fboxsep}{\textsc{\normalsize #2}}%
      }}%
  \fi%
  \bigskip\par\noindent%
  \begin{tabularx}{\textwidth}{\expand\boxproblem@tablelayout}%
    \boxproblem@hline%
    \boxproblem@title\\[2\fboxsep]%
    \BODY\\\boxproblem@hline%
  \end{tabularx}%
  \medskip\par%
}
\makeatother

\usepackage[normalem]{ulem}
%

%
%


\DeclareMathOperator{\Ocal}{\mathcal{O}}

\newcommand{\RunTimeAlg}{2^{\Ocal(k \log k)} \cdot n^{\Ocal(1)}}

\DeclareMathOperator{\bn}{\text{bn}}
\DeclareMathOperator{\hn}{\text{hn}}
\DeclareMathOperator{\wlink}{\text{wlink}}

\DeclareMathOperator{\order}{\text{{\sf ord}}}

\newcommand{\wdth}{{\sf width}\xspace}
\renewcommand{\bn}{{\sf bn}\xspace}
\renewcommand{\hn}{{\sf hn}\xspace}
\renewcommand{\wlink}{{\sf wlink}\xspace}

\newcommand{\balsep}{balanced separator\xspace}

\newcommand{\dtw}{\text{{\sf dtw}}}
\newcommand{\ord}{\text{{\sf ord}}}
\newmdtheoremenv{box_problem}{Problem}

\renewenvironment{abstract}
 {\small
  \begin{center}
  \bfseries \abstractname\vspace{-.5em}\vspace{0pt}
  \end{center}
  \list{}{%
    \setlength{\leftmargin}{5mm}
    \setlength{\rightmargin}{\leftmargin}%
  }%
  \item\relax}
 {\endlist}

\title{Adapting the Directed Grid Theorem\\into an FPT Algorithm\thanks{Work supported by projects CAPES/STIC-AmSud 88881.197438/2018-01,
CNPq - Universal project 425297/2016-0, FUNCAP - PRONEM PNE-011200061.01.00/16, DEMOGRAPH (ANR-16-CE40-0028), ESIGMA (ANR-17-CE23-0010), ELIT (ANR-20-CE48-0008-01), and UTMA (ANR-20-CE92-0027). An extended abstract of this paper appeared in the \emph{Proc. of the X Latin and American Algorithms, Graphs and Optimization Symposium (\textbf{LAGOS}), volume 346 of ENTCS, pages 229--240, Belo Horizonte, Brazil, June \textbf{2019}.}}}

\author[1]{Victor Campos}
\author[1,2]{Raul Lopes}
\author[1]{Ana Karolinna Maia}
\author[2]{Ignasi Sau}

\affil[1]{\emph{\small ParGO group, Universidade Federal do Cear\'a, Fortaleza, Brazil}}
\affil[2]{\emph{\small LIRMM, Universit\'e de Montpellier, CNRS, Montpellier, France\newline \texttt{\{campos,karolmaia\}@lia.ufc.br, raulwtlopes@gmail.com, ignasi.sau@lirmm.fr}}}

\date{}

\usepackage{amsopn}


\ifpdf
\hypersetup{
  pdftitle={Adapting the Directed Grid Theorem into an FPT algorithm},
  pdfauthor={V. Campos, A. K. Maia, R. Lopes, and I. Sau}
}
\fi

\usepackage{fancyhdr}

\setlength{\headheight}{14pt}

\fancypagestyle{page-with-title-authors-header}{%
  \fancyhf{}
  \fancyhead[L]{%
  \ifodd\value{page}
  \centering\textsc{V. Campos, A. K. Maia, R. Lopes, and I. Sau}%
  \else
  \makebox[0pt][l]{\parbox[b]{\headwidth}{\centering \textsc{Adapting the Directed Grid Theorem into an FPT algorithm}}}%
  \fi}
  \fancyfoot[C]{\thepage}
}

\pagestyle{page-with-title-authors-header}

\begin{document}

\maketitle

\vspace{-1cm}
\begin{abstract}
The Grid Theorem of Robertson and Seymour~[JCTB, 1986] is one of the most important tools in the field of structural graph theory, finding numerous applications in the design of algorithms for undirected graphs. An analogous version of the Grid Theorem in digraphs was conjectured by Johnson et al.~[JCTB, 2001], and proved by Kawarabayashi and Kreutzer~[STOC, 2015].
Namely, they showed that there is a function $f(k)$ such that every digraph of directed tree-width at least $f(k)$ contains a cylindrical grid of order $k$ as a butterfly minor, and stated that their proof
can be turned into an {\sf XP} algorithm, with parameter $k$, that either constructs a decomposition of the appropriate width, or finds the claimed large cylindrical grid as a butterfly minor.
In this paper, we adapt some of the steps of the proof of Kawarabayashi and Kreutzer to  improve this {\sf XP} algorithm into an {\sf FPT} algorithm. Towards this, our main technical contributions are two \textsf{FPT} algorithms with parameter $k$.
The first one either produces an arboreal decomposition of width $3k-2$ or finds a haven of order $k$ in a digraph $D$, improving on the original result for arboreal decompositions by Johnson et al.~[JCTB, 2001].
The second algorithm finds a well-linked set of order $k$ in a digraph $D$ of large directed tree-width.
As tools to prove these results, we show how to solve a generalized version of the problem of finding balanced separators for a given set of vertices $T$ in \textsf{FPT} time with parameter $|T|$, a result that we consider to be of its own interest.

\medskip
\textbf{Keywords:} Digraph, directed tree-width, grid theorem, \textsf{FPT} algorithm.
\end{abstract}



\section{Introduction}
\label{sec:intro}

Width parameters can be seen as an estimation of how close a given graph is to a typical structure. For example, the \emph{tree-width} of a graph, a parameter of particular interest in the literature, measures how tightly a graph can be approximated by a tree. Namely, a \emph{tree decomposition} of a graph $G$ with bounded tree-width shows how one can place the vertices of the original graph into ``bags'' of bounded size which, in turn, can be arranged as the vertices of a tree $T$ such that the intersection between adjacent bags in $T$ are separators in $G$.
Thus, a tree decomposition exposes a form of global connectivity measure for graphs: as only a bounded number of vertices can be placed in each bag, many small separators can be identified through the decomposition.
The tree-width of graphs was first introduced by Bertele and Brioschi~\cite{Umberto.Francesco.72}, then again by Halin~\cite{Halin1976}, and finally reintroduced by Robertson and Seymour~\cite{Robertson.Seymour.86}.
For a survey on the subject, we refer the reader to~\cite{10.1007/11917496_1}.

A number of hard problems can be efficiently solved in graphs of bounded tree-width, either by making use of classical algorithmic techniques like dynamic programming, or by making use of Courcelle's Theorem~\cite{COURCELLE199012}.
Applications of algorithms based on tree decompositions range from frequency allocation problems to the \textsc{Traveling Salesman} problem \cite{Koster.Hoesel.Kolen.99,Cook.Seymour.2003}.

Given the enormous success achieved by applications based on width parameters in undirected graphs, it is no surprise that there is interest in finding similar definitions for digraphs.
Johnson et al.~\cite{Johnson.Robertson.Seymour.Thomas.01} proposed an analogous measure for tree-width in the directed case.
The \emph{directed tree-width} of a digraph measures its distance to being a directed acyclic graph (DAG for short), and an \emph{arboreal decomposition} exposes a (strong) connectivity measure of a digraph.
Reed~\cite{Reed.99} provided an intuitive exposition of the similarities between the undirected and directed cases.

Similarly to the undirected case, some hard problems become tractable when restricted to digraphs of bounded directed tree-width.
For example, Johnson et al.~\cite{Johnson.Robertson.Seymour.Thomas.01} showed that the \textsc{Directed $k$-Disjoint Paths} problem, which Fortune et al.~\cite{FORTUNE1980111} showed to be {\sf NP}-hard even for $k=2$ in general digraphs, is solvable in polynomial (more precisely, in {\sf XP}) time in digraphs of directed tree-width bounded by a constant.
A similar approach given in~\cite{Johnson.Robertson.Seymour.Thomas.01} can be applied to the \textsc{Hamilton Path} and \textsc{Hamilton Cycle} problems, \textsc{Hamilton Path with Prescribed Ends}, and others.
It is worth mentioning that Slivkins~\cite{Slivkins.03} proved that the \textsc{Directed $k$-Disjoint Paths} problem is {\sf W}$[1]$-hard even when restricted to DAGs.
As DAGs have directed tree-width zero, there is little hope for the existence of a fixed-parameter tractable ({\sf FPT} for short) algorithm for the \textsc{Directed $k$-Disjoint Paths} problem in digraphs of bounded directed tree-width.
As another example of application, a Courcelle-like theorem for directed tree-width was proved by de Oliveira Oliveira~\cite{Oliveira16}, but running in \textsf{XP} time.

It is natural to ask what can be said of a graph with large tree-width.
One of the most relevant results in structural graph theory states that undirected graphs with large tree-width contain large grid minors.
More precisely, the Grid Theorem by Robertson and Seymour~\cite{Robertson.Seymour.86} states that there is a function $f: \mathbb{N} \to \mathbb{N}$ such that every graph of tree-width at least $f(k)$ contains a $(k \times k)$-grid as a minor.
Recently, Chekuri and Chuzhoy~\cite{Chekuri:2016:PBG:3015772.2820609} gave a polynomial bound on the function $f(k)$, which was further improved by Chuzhoy and Tan~\cite{ChuzhoyT19}.

Sometimes, large tree-width (and therefore, the existence of a large grid minor) implies that we are actually working with a positive instance of a particular problem.
In this direction, Demaine et al.~\cite{Demaine:2005:SPA:1101821.1101823} presented a framework that generates \textsf{FPT} algorithms for many such problems, known as \emph{bidimensional} problems.
This list includes \textsc{Vertex Cover}, \textsc{Feedback Vertex Set}, \textsc{Longest Path}, \textsc{Minimum Maximal Matching}, \textsc{Dominating Set}, \textsc{Edge Dominating Set}, and many others.
This seminal work is currently known as \emph{Bidimensionality}~\cite{FominDHT16}.

Another application of the Grid Theorem is in the \emph{irrelevant vertex} technique, introduced by Robertson and Seymour~\cite{ROBERTSON199565,ROBERTSON2009583,ROBERTSON2012530} to solve the $k$-\textsc{Disjoint Paths} problem.
The goal is to show that every instance whose input graph violates a set of conditions contains a vertex that is ``irrelevant'', that is, a vertex whose removal generates an equivalent instance of the problem.
This leads to an iterative algorithm, reducing the problem to a smaller instance, until it satisfies sufficient conditions for its tractability.
This technique was used to solve the $k$-\textsc{Disjoint Paths} problem in \textsf{FPT} time with parameter $k$, and a number of other  problems (cf. for instance~\cite{Grohe:2011:FTS:1993636.1993700,Kleinberg:1998:DAU:276698.276867}).
For the directed case, Cygan et al.~\cite{6686155} used a similar technique to provide an {\sf FPT} algorithm for the \textsc{Directed $k$-Disjoint Paths} problem in planar digraphs.

A result analogous to the Grid Theorem for digraphs was conjectured by Johnson et al.~\cite{Johnson.Robertson.Seymour.Thomas.01} and Reed~\cite{Reed.99}, and recently proved by Kawarabayashi and Kreutzer~\cite{Kawarabayashi:2015:DGT:2746539.2746586}\footnote{The full version of~\cite{Kawarabayashi:2015:DGT:2746539.2746586} is available at \href{https://arxiv.org/abs/1411.5681v2}{https://arxiv.org/abs/1411.5681v2}.}, after having proved it for digraphs with forbidden minors~\cite{KawarabayashiK14}\footnote{In an unpublished manuscript from 2001~\cite{johnson2015excluding}, Johnson, Robertson, Seymour and Thomas gave a proof of this result for planar digraphs.}.
Namely, it is shown in~\cite{Kawarabayashi:2015:DGT:2746539.2746586} that there is a function $f: \mathbb{N} \to \mathbb{N}$ such that every digraph of directed tree-width at least $f(k)$ contains a \emph{cylindrical grid} (see Figure~\ref{fig:cylindrical_grid_intro}) of order $k$ as a \emph{butterfly minor}; all the definitions are given formally in Section~\ref{section:formal_def}.
Recently, Hatzel et al.~\cite{HatzelKK19} proved that the function $f(k)$ can be made polynomial in {\sl planar} digraphs.


\begin{figure}[h]
\centering
\scalebox{.5}{
\begin{tikzpicture}

\def \n {8}
\def \margin {3}
\def \radiusinner {1cm}
\foreach \i in {1,...,\n}
{
  \setcounter{c}{\i};
  \def \vname {\alph{c}};

  \node[blackvertex,vertex,scale=.4] (P-\vname) at ({360/\n * (\i - 1)}:\radiusinner) {};
  \draw[-{Latex[length=2mm, width=2mm]}] ({360/\n * (\i - 1)+\margin}:\radiusinner)
  arc ({360/\n * (\i - 1)+\margin}:{360/\n * (\i)-\margin}:\radiusinner);
}

\def \radiusinner {2cm}

\foreach \i in {1,...,\n}
{
  \node[blackvertex,vertex,scale=.4] at ({360/\n * (\i - 1)}:\radiusinner) {};
  \draw[-{Latex[length=2mm, width=2mm]}] ({360/\n * (\i - 1)+\margin}:\radiusinner)
  arc ({360/\n * (\i - 1)+\margin}:{360/\n * (\i)-\margin}:\radiusinner);
}

\def \radiusinner {3cm}

\foreach \i in {1,...,\n}
{

  \node[blackvertex,vertex,scale=.4] at ({360/\n * (\i - 1)}:\radiusinner) {};
  \draw[-{Latex[length=2mm, width=2mm]}] ({360/\n * (\i - 1)+\margin}:\radiusinner)
  arc ({360/\n * (\i - 1)+\margin}:{360/\n * (\i)-\margin}:\radiusinner);
}

\def \radiusinner {4cm}

\foreach \i in {1,...,\n}
{
  \setcounter{c}{\i};
  \def \vname {\Alph{c}};

  \node[blackvertex,vertex,scale=.4] (P-\vname) at ({360/\n * (\i - 1)}:\radiusinner) {};
  \draw[-{Latex[length=2mm, width=2mm]}] ({360/\n * (\i - 1)+\margin}:\radiusinner)
  arc ({360/\n * (\i - 1)+\margin}:{360/\n * (\i)-\margin}:\radiusinner);
}

\draw[-{Latex[length=2mm, width=2mm]}] (P-a) to (P-A);
\draw[-{Latex[length=2mm, width=2mm]}] (P-B) to (P-b);
\draw[-{Latex[length=2mm, width=2mm]}] (P-c) to (P-C);
\draw[-{Latex[length=2mm, width=2mm]}] (P-D) to (P-d);
\draw[-{Latex[length=2mm, width=2mm]}] (P-e) to (P-E);
\draw[-{Latex[length=2mm, width=2mm]}] (P-F) to (P-f);
\draw[-{Latex[length=2mm, width=2mm]}] (P-g) to (P-G);
\draw[-{Latex[length=2mm, width=2mm]}] (P-H) to (P-h);
\end{tikzpicture}
}%
\caption{A cylindrical grid of order $k = 4$.}
\label{fig:cylindrical_grid_intro}
\end{figure}
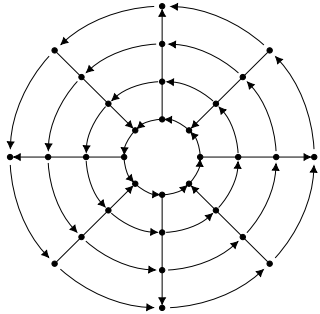
The Directed Grid Theorem has found many applications. For instance, Amiri et al.~\cite{DBLP:journals/corr/AmiriKKW16} proved that a strongly connected digraph $H$ has the Erd\H{o}s-Pósa property if and only if $H$ is a butterfly minor of some cylindrical grid of sufficiently large order.
Additionally, the authors showed that for every fixed strongly connected digraph $H$ satisfying those conditions and every fixed integer $k$, there is a polynomial-time algorithm that either finds $k$ disjoint (butterfly or topological) models of $H$ in a digraph $D$ or a set $X \subseteq V(D)$ of size bounded by a function of $k$ such that $D \setminus X$ does not contain a model of $H$.

Edwards et al.~\cite{edwards_et_al:LIPIcs:2017:7876} applied some results used in the proof of the Directed Grid Theorem~\cite{Kawarabayashi:2015:DGT:2746539.2746586} to provide an \textsf{XP} algorithm with parameter $k$ for a relaxed version of the \textsc{Directed Disjoint Paths} problem, in which every vertex of the input digraph is allowed to occur in at most two paths of a solution, when restricted to $(36k^3 + 2k)$-strongly connected digraphs.
Kawarabayashi and Kreutzer~\cite{Kawarabayashi:2015:DGT:2746539.2746586} mentioned that the Directed Grid Theorem can be used to provide, for fixed $k$, an algorithm running in polynomial time that, given a digraph $D$ and $k$ terminal pairs $(s_1, t_1), \ldots,(s_k, t_k)$, either finds a collection of paths $P_1, \ldots, P_k$ such that $P_i$ is a path from $s_i$ to $t_i$ in $D$ and every vertex of $D$ occurs in at most four paths of the collection, or concludes that $D$ does not contain a collection of pairwise disjoint paths $P_1, \ldots, P_k$ such that $P_i$ is a path from $s_i$ to $t_i$ in $D$, for $i \in [k]$.
Although Chekuri et al.~\cite{chekuri_et_al:LIPIcs:2016:6273} could not use the Directed Grid Theorem since the bound on $f(k)$ (mentioned above) is larger than required, they build on the ideas used in~\cite{johnson2015excluding} to produce their own  version of the Directed Grid Theorem for planar digraphs.

The proof of the Directed Grid Theorem by Kawarabayashi and Kreutzer~\cite{Kawarabayashi:2015:DGT:2746539.2746586} is constructive.
Namely, the authors start with an algorithm by Johnson et al.~\cite[3.3]{Johnson.Robertson.Seymour.Thomas.01}  that, given a digraph $D$ and an integer parameter $k$, outputs, in {\sf XP} time, either an arboreal decomposition of $D$ of width at most $3k-2$ or a haven of order $k$ (see Definition~\ref{def:havens_in_digraphs}).
Thus, if $D$ has directed tree-width at least $3k-1$, they obtain a haven of order $k$.
From this haven, they obtain a bramble $\mathcal{B}$ of order $k$ and size $|V(D)|^{\Ocal(k)}$. 
Finally, from $\mathcal{B}$ they find a path $P$ containing a well-linked set $A$ (see Definition~\ref{def:well-linked-sets-in-digraphs}) of size roughly $\sqrt{k}$ in \textsf{XP} time with parameter $k$.

We remark that the bound on the running time of those algorithms depends on the size of $\mathcal{B}$ since, in general, one must test whether $X \cap V(B) \neq \emptyset$ for each $B \in \mathcal{B}$ to check whether a given set $X \subseteq V(D)$ is a hitting set of $\mathcal{B}$.
The remainder of the proof of the Directed Grid Theorem~\cite{Kawarabayashi:2015:DGT:2746539.2746586} runs in {\sf FPT} time, with parameter~$k$.

\medskip
\noindent \textbf{Our approach, results, and techniques.}
By making local changes to the proofs by Johnson et al.~\cite{Johnson.Robertson.Seymour.Thomas.01} and Kawarabayashi and Kreutzer~\cite{Kawarabayashi:2015:DGT:2746539.2746586}, we show that there is an {\sf FPT} algorithm that, given a digraph $D$ and an integer $k$, either constructs an arboreal decomposition of $D$ of width at most $3k-2$, or finds a path $P$ in $D$ containing a well-linked set $A$ of size roughly $\sqrt{2k}$.
Our results and the remainder of the proof of the Directed Grid Theorem~\cite{Kawarabayashi:2015:DGT:2746539.2746586} yield an \textsf{FPT} algorithm that either constructs an arboreal decomposition of width at most $f(k)$ or a cylindrical grid of order $k$ as a butterfly minor of $D$. For completeness, we provide in Section~\ref{section:finding_a_cylindrical_grid}  an overview of how, starting from the path $P$ and the well-linked set $A$ found by our \textsf{FPT} algorithm, the proof of
Kawarabayashi and Kreutzer~\cite{Kawarabayashi:2015:DGT:2746539.2746586} yields an algorithm to find the desired  cylindrical grid in \textsf{FPT} time.
We would like to insist on the fact that the proof of our main result is based on performing local changes to the proof of Kawarabayashi and Kreutzer given in the available full version of~\cite{Kawarabayashi:2015:DGT:2746539.2746586}.
In what follows we detail our results and techniques, along with the organization of the article.

In Section~\ref{section:formal_def} we give all the necessary definitions and preliminaries, and we formally state the two main contributions of this paper, namely Theorem~\ref{theorem:first_contribution_intro} and Theorem~\ref{theorem:path_sec_contribution}. As discussed above, in Section~\ref{section:finding_a_cylindrical_grid} we sketch how these two results, combined with the remainder of the original proof in~\cite{Kawarabayashi:2015:DGT:2746539.2746586}, yield the  \textsf{FPT} algorithm stated in Corollary~\ref{theorem:second_contribution}.


Similarly to the undirected case (see, for example,~\cite[Chapter 11]{Parameterized.Complexity.Theory}), the result by Johnson et al.~\cite[3.3]{Johnson.Robertson.Seymour.Thomas.01} shows that the size of a special kind of vertex separator of some set $T \subseteq V(D)$ is intrinsically connected to the directed tree-width of $D$.
Their algorithm runs a subroutine that, given a set of vertices $T$ with $|T| \leq 2k-1$, searches for a set $Z \subseteq V(D)$ with $|Z| \leq k-1$ such that every strong component of $D \setminus Z$ intersects at most half of the vertices of $T$, or decides that none exists. 
Such a set $Z$ is known as a \emph{$T$-balanced separator}~\cite{Classes.Directed.Graphs}.
If every such search is successful, the algorithm produces an arboreal decomposition of width at most $3k-2$.
If the search fails for some set $T$, then we say that $T$ is \emph{$(k-1)$-linked}~\cite{Classes.Directed.Graphs} and use it to construct a haven of order $k$ (we show how to do this construction in Lemma~\ref{claim:haven_instance}).
In Section~\ref{sec:FPT-arboreal}, we give an {\sf FPT} algorithm (Theorem~\ref{theorem:first_contribution_extended}) that, given a digraph $D$ and a parameter $k$, outputs either an arboreal decomposition of $D$ of width at most $3k-2$ or a $(k-1)$-linked set $T$ with $|T| = 2k-1$, thus improving the result by Johnson et al.~\cite{Johnson.Robertson.Seymour.Thomas.01}, since we can easily extract a haven of order $k$ from $T$ (Lemma~\ref{claim:haven_instance}), and proving our first main contribution (Theorem~\ref{theorem:first_contribution_intro}).


We acknowledge that a sketch of a proof of a similar result, with approximation factor of $5k+10$, is given in~\cite[Theorem 9.4.4]{Classes.Directed.Graphs}.
In their proof, the authors mention how to compute a weaker version of $T$-balanced separators in {\sf FPT} time with parameter $|T|$,
and the increase on the approximation factor they guarantee is a consequence of this relaxation.
For our approximation algorithm for directed tree-width, we introduce generalized versions of balanced separators and $k$-linked sets that are also used in Section~\ref{section:cyd_grid_fpt}.
Namely, we say that a set $Z$ is a \emph{$(T,r)$-balanced separator} if every strong component of $D \setminus Z$ intersects at most $r$ vertices of $T$ and that $T$ is \emph{$(k,r)$-linked} if every $(T,r)$-balanced separator has size at least $k+1$ (see Definition~\ref{def:balanced_sep_k_linked_sets}). 

In Theorem~\ref{theorem:FPT_haven} we show that the problem of finding a $(T, r)$-balanced separator of size $s$ or deciding that $T$ is $(s,r)$-linked is \textsf{FPT} with relation to the parameter $|T|$.
We refer to this problem as \textsc{Balanced Separator} and, to solve it, we make use of an algorithm by Erbacher et al.~\cite{Erbacher.Jaeger.14} for a variation of the \textsc{Multicut} problem for digraphs, named as \textsc{Multicut With Linearly Ordered Terminals} by the authors.



Next, we prove our second main contribution (Theorem~\ref{theorem:path_sec_contribution}).
For this, we need to find a bramble $\mathcal{B}$ when the second output of the algorithm for approximate arboreal decompositions (the set $T$) is obtained, and use it to find a path $P$ containing a well-linked set $A \subseteq V(P)$ of size roughly $\sqrt{2k}$.
In order to prove Theorem~\ref{theorem:path_sec_contribution}, we proceed as follows.

In Section~\ref{sec:brambles-large-dtw} we show how to construct, from a $(k-1)$-linked set $T$ with $|T| = 2k-1$, a bramble $\mathcal{B}_T$ that is easier to work with than the general case in a number of ways.
We characterize hitting sets of $\mathcal{B}_T$ by $T$-balanced separators (Lemma~\ref{lemma:BT_order_hitting_set}) and thus applying our algorithm for \textsc{Balanced Separator}, we conclude that we can decide if the order of $\mathcal{B}_T$ is at most $s$ in \textsf{FPT} time.
In fact, we prove a slightly stronger result stating that the same can be done for some particular choices of subsets (``sub-brambles'') of $\mathcal{B}_T$.
This is a considerable improvement on the running time of the naive approach to find hitting sets of brambles, which involves going through every element of the bramble.
In particular, our characterization of hitting sets of $\mathcal{B}_T$ allows us to test if given a set $X \subseteq V(D)$ is a hitting set of $\mathcal{B}_T$ in polynomial time by enumerating the strong components of $D \setminus X$.
This is an easy observation that also holds for the bramble used in the proof of the Directed Grid Theorem~\cite{Kawarabayashi:2015:DGT:2746539.2746586}.

In Section~\ref{sec:finding_P_and_A} we show how to find $P$ and $A$.
To find $P$, we iteratively grow a path until it is a hitting set of $\mathcal{B}_T$, at each time adding one vertex and testing if the current set of vertices of the growing path is a hitting set of $\mathcal{B}_T$ (Lemma~\ref{lemma:bramble_path_polytime}).
To find $A$, we produce an ordered sequence of subpaths of $P$ each being a hitting set of a ``sub-bramble'' of $\mathcal{B}_T$ of adequate order, and pick the vertices of $A$ from the vertices between those subpaths (Lemma~\ref{lemma:split_implies_well_linked}).
The key ingredient of the first procedure is the fact that we can decide if a given set of vertices $X$ is a hitting set of $\mathcal{B}_T$ in polynomial time, as a consequence of the characterization given by Lemma~\ref{lemma:BT_order_hitting_set}.
For the second procedure, we iteratively use our algorithm for \textsc{Balanced Separator} (Theorem~\ref{theorem:FPT_haven}) to test if a bramble that is formed by a particular subset of $\mathcal{B}_T$ has adequate order.
Thus, in contrast with what is done in~\cite{Kawarabayashi:2015:DGT:2746539.2746586}\footnote{Specifically, in Lemmas 4.3 and 4.4 of the full version.}, our version of the first procedure runs in polynomial time and our version of the second procedure runs in \textsf{FPT} time, assuming that $\mathcal{B}_T$ and $T$ are given in both cases.
In order to prove Theorem~\ref{theorem:path_sec_contribution}, we introduce the notion of ``$(i)$-split'' (see Definition~\ref{definition:i-splits}), and we prove several lemmas about $(i)$-splits, namely Lemma~\ref{lemma:split_implies_well_linked} and Lemma~\ref{lemma:split_iteration}.

A roadmap of the aforementioned algorithm is given in Figure~\ref{fig:automata}.
We mark by a dashed arc the steps of~\cite{Kawarabayashi:2015:DGT:2746539.2746586} which are already {\sf FPT} and do not need to be adapted.
All others arcs represent steps that we adapt in this paper.


\newsavebox{\cydgrid}
\begin{lrbox}{\cydgrid}
\begin{tikzpicture}

\def \n {8}
\def \margin {3}
\def \radiusinner {1cm}
\foreach \i in {1,...,\n}
{
  \setcounter{c}{\i};
  \def \vname {\alph{c}};

  \node[blackvertex,vertex,scale=.4] (P-\vname) at ({360/\n * (\i - 1)}:\radiusinner) {};
  \draw[-{Latex[length=2mm, width=2mm]}] ({360/\n * (\i - 1)+\margin}:\radiusinner)
  arc ({360/\n * (\i - 1)+\margin}:{360/\n * (\i)-\margin}:\radiusinner);
}

\def \radiusinner {2cm}

\foreach \i in {1,...,\n}
{
  \node[blackvertex,vertex,scale=.4] at ({360/\n * (\i - 1)}:\radiusinner) {};
  \draw[-{Latex[length=2mm, width=2mm]}] ({360/\n * (\i - 1)+\margin}:\radiusinner)
  arc ({360/\n * (\i - 1)+\margin}:{360/\n * (\i)-\margin}:\radiusinner);
}

\def \radiusinner {3cm}

\foreach \i in {1,...,\n}
{

  \node[blackvertex,vertex,scale=.4] at ({360/\n * (\i - 1)}:\radiusinner) {};
  \draw[-{Latex[length=2mm, width=2mm]}] ({360/\n * (\i - 1)+\margin}:\radiusinner)
  arc ({360/\n * (\i - 1)+\margin}:{360/\n * (\i)-\margin}:\radiusinner);
}

\def \radiusinner {4cm}

\foreach \i in {1,...,\n}
{
  \setcounter{c}{\i};
  \def \vname {\Alph{c}};

  \node[blackvertex,vertex,scale=.4] (P-\vname) at ({360/\n * (\i - 1)}:\radiusinner) {};
  \draw[-{Latex[length=2mm, width=2mm]}] ({360/\n * (\i - 1)+\margin}:\radiusinner)
  arc ({360/\n * (\i - 1)+\margin}:{360/\n * (\i)-\margin}:\radiusinner);
}

\draw[-{Latex[length=2mm, width=2mm]}] (P-a) to (P-A);
\draw[-{Latex[length=2mm, width=2mm]}] (P-B) to (P-b);
\draw[-{Latex[length=2mm, width=2mm]}] (P-c) to (P-C);
\draw[-{Latex[length=2mm, width=2mm]}] (P-D) to (P-d);
\draw[-{Latex[length=2mm, width=2mm]}] (P-e) to (P-E);
\draw[-{Latex[length=2mm, width=2mm]}] (P-F) to (P-f);
\draw[-{Latex[length=2mm, width=2mm]}] (P-g) to (P-G);
\draw[-{Latex[length=2mm, width=2mm]}] (P-H) to (P-h);
\end{tikzpicture}
\end{lrbox}

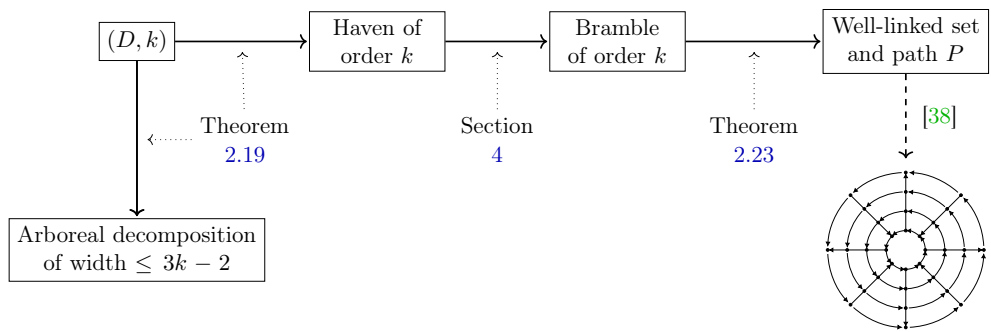
\begin{figure}[ht]
\centering
\scalebox{.79}{\begin{tikzpicture}[shorten >=1pt,->]]
\node[rectangle,draw, scale=1,text width=1cm,align=center] (P-a) at (3*0,0) {$(D,k)$};
\node[rectangle,draw,scale=1,text width=2cm,align=center] (P-b) at (4*1,0) {Haven of order $k$};
\node[rectangle,draw,scale=1,text width=2cm,align=center] (P-c) at (4*2,0) {Bramble of order $k$};
\node[rectangle,draw,scale=1,text width=2.5cm,align=center] (P-d) at (4*3.2,0) {Well-linked set and path $P$};
\node[scale=1.3,align=center] (P-e) at (4*3.2,-3.5) {\scalebox{.25}{\usebox{\cydgrid}}};
\node[rectangle,draw,scale=1,text width=4cm,align=center] (P-f) at (3*0,-3.5) {Arboreal decomposition of width $\leq 3k-2$};

\draw[thick] (P-a) -- (P-b) node[pos=.5] (P-p2) {};
\draw[thick] (P-a) -- (P-f) node[pos=.5] (P-p1) {};
\draw[thick] (P-a) -- (P-f) node[pos=.5, xshift = 1.8cm, text width=1.5cm, align=center] (P-p3) {Theorem \ref{theorem:first_contribution_intro}};
\draw[dotted] (P-p3) -- (P-p1);
\draw[dotted] (P-p3) -- (P-p2);

\draw[thick, dashed] (P-d) -- (P-e) node[pos=.5, xshift=.5cm] {\cite{Kawarabayashi:2015:DGT:2746539.2746586}};
\phantom{\draw[thick] (P-b) -- (P-c) node[pos=.5] (P-x) {};}
\phantom{\draw[thick] (P-c) -- (P-d) node[pos=.5] (P-y) {};}
\draw[thick] (P-b) -- (P-c) node[pos=.5, yshift =-1.65cm, text width = 1.5cm, align=center] (P-x1) {Section \ref{section:cyd_grid_fpt}};
\draw[thick] (P-c) -- (P-d) node[pos=.5, yshift =-1.65cm, text width = 1.5cm, align=center] (P-y1) {Theorem \ref{theorem:path_sec_contribution}};
\draw[dotted] (P-x1) -- (P-x);
\draw[dotted] (P-y1) -- (P-y);
\end{tikzpicture}%
}%
\caption{Sketch of the algorithm used in the proof of the Directed Grid Theorem~\cite{Kawarabayashi:2015:DGT:2746539.2746586}.}
\label{fig:automata}
\end{figure}

We conclude the article in Section~\ref{section:concluding_remarks} with some remarks and potential algorithmic applications of our results.


\section{Formal definitions and preliminaries}\label{section:formal_def}
In this section we give the definitions relevant to this paper, mention some known results, and present a more detailed discussion of our main contributions.

\subsection{Graphs and digraphs}
We refer the reader to~\cite{Graph.Theory} for basic background on graph theory, and recall here only some basic definitions. For a graph $G = (V,E)$, directed or not, and a set $X \subseteq V(G)$, we write $G \setminus X$ for the graph resulting from the deletion of $X$ from $G$.
If $e$ is an edge of a directed or undirected graph with \emph{endpoints} $u$ and $v$, we may refer to $e$ as $(u,v)$ and say that $e$ is \emph{incident} to $u$ and $v$.
If $e$ is an edge from $u$ to $v$ of a directed graph, we say that $e$ has \emph{tail} $u$, \emph{head} $v$, and is \emph{oriented} from $u$ to $v$. 
We also allow for loops and multiple edges.

The \emph{in-degree} (resp. \emph{out-degree}) of a vertex $v$ in a digraph $D$ is the number of edges with head (resp. tail) $v$.
The \emph{in-neighborhood} $N^-_D(v)$ of $v$ is the set $\{u \in V(D) \mid (u,v) \in E(G)\}$, and the \emph{out-neighborhood} $N^+_D(v)$ is the set $\{u \in V(D) \mid (v,u) \in E(G)\}$.
We say that $u$ is an \emph{in-neighbor} of $v$ if $u \in N^-_D(v)$ and that $u$ is an \emph{out-neighbor} of $v$ if $u \in N^+_D(v)$.

A \emph{walk} in a digraph $D$ is an alternating sequence $W$ of vertices and edges that starts and ends with a vertex, and such that for every edge $(u,v)$ in the walk, vertex $u$ (resp. vertex $v$) is the element right before (resp. right after) edge $(u,v)$ in $W$.
If the first vertex in a walk is $u$ and the last one is $v$, then we say this is a \emph{walk from $u$ to $v$}.
A \emph{path} is a digraph containing exactly a walk that contains all of its vertices and edges without repetition. 
If $P$ is a path with $V(P) = \{v_1, \ldots, v_k\}$ and $E(P) = \{(v_i, v_{i+1}) \mid i \in [k-1]\}$, we say that $v_1$ is the \emph{first} vertex of $P$, that $v_k$ is the \emph{last} vertex of $P$, and for $i \in [k-1]$ we say that $v_{i+1}$ is the \emph{sucessor in $P$} of $v_{i}$.
All paths mentioned henceforth, unless stated otherwise, are considered to be directed.

%

An \emph{orientation} of an undirected graph $G$ is a digraph $D$ obtained from $G$ by choosing an orientation for each edge $e \in E(G)$.
The undirected graph $G$ formed by ignoring the orientation of the edges of a digraph $D$ is the \emph{underlying graph} of $D$.

A digraph $D$ is \emph{strongly connected} if, for every pair of vertices $u,v \in V(D)$, there is a walk from $u$ to $v$ and a walk from $v$ to $u$ in $D$.
We say that $D$ is \emph{weakly connected} if the underlying graph of $D$ is connected.
A \emph{separator} of $D$ is a set $S \subsetneq V(D)$ such that $D \setminus S$ is not strongly connected.
If $|V(D)| \geq k+1$ and $k$ is the minimum size of a separator of $D$, we say that $D$ is \emph{$k$-strongly connected}.
A \emph{strong component} of $D$ is a maximal induced subdigraph of $D$ that is strongly connected, and a \emph{weak component} of $D$ is a maximal induced subdigraph of $D$ that is weakly connected.

For a positive integer $k$, we denote by $[k]$ the set containing every integer $i$ such that $1 \leq i \leq k$.

\subsection{Parameterized complexity}

We refer the reader to~\cite{DF13,CyganFKLMPPS15} for basic background on parameterized complexity, and we recall here only the definitions used in this article. A \emph{parameterized problem} is a language $L \subseteq \Sigma^* \times \mathbb{N}$.  For an instance $I=(x,k) \in \Sigma^* \times \mathbb{N}$, $k$ is called the \emph{parameter}.

A parameterized problem $L$ is \emph{fixed-parameter tractable} ({\sf FPT}) if there exists an algorithm $\mathcal{A}$, a computable function $f$, and a constant $c$ such that given an instance $I=(x,k)$, $\mathcal{A}$   (called an {\sf FPT} \emph{algorithm}) correctly decides whether $I \in L$ in time bounded by $f(k) \cdot |I|^c$. For instance, the \textsc{Vertex Cover} problem parameterized by the size of the solution is {\sf FPT}.

A parameterized problem $L$ is in {\sf XP} if there exists an algorithm $\mathcal{A}$ and two computable functions $f$ and $g$ such that given an instance $I=(x,k)$, $\mathcal{A}$  (called an {\sf XP} \emph{algorithm}) correctly decides whether $I \in L$ in time bounded by $f(k) \cdot |I|^{g(k)}$. For instance,  the \textsc{Clique} problem parameterized by the size of the solution is in  {\sf XP}.

Within parameterized problems, the class {\sf W}[1] may be seen as the parameterized equivalent to the class {\sf NP} of classical decision problems. Without entering into details (see~\cite{DF13,CyganFKLMPPS15} for the formal definitions), a parameterized problem being {\sf W}[1]-\emph{hard} can be seen as a strong evidence that this problem is {\sl not} {\sf FPT}.
The canonical example of {\sf W}[1]-hard problem is \textsc{Clique}  parameterized by the size of the solution.

\subsection{Arboreal decompositions and obstructions}\label{section:arboreal_decomp_obstructions}

By an \emph{arborescence} $R$ with \emph{root} $r_0$, we mean an orientation of a tree such that $R$ contains a path from $r_0$ to every other vertex of the tree.
If a vertex $v$ of $R$ has out-degree zero, we say that $v$ is a \emph{leaf} of $R$. We now define guarded sets and arboreal decompositions of digraphs. From here on, we refer to oriented edges only, unless stated otherwise. $D$ will always stand for a digraph, and $G$ for an undirected graph.
Unless stated otherwise, we define $n = |V(D)|$ and $m = |E(D)|$ when $D$ is the input digraph of some algorithm.

For $X, Y \subseteq V(D)$, an $(X,Y)$-\emph{separator} is a set of vertices $S$ such that there are no paths in $D \setminus S$ from any vertex in $X$ to any vertex in $Y$.
We make use Menger's Theorem~\cite{Menger1927} for digraphs.
\begin{theorem}[{Menger's Theorem}~\cite{Menger1927}]
\label{thm:Menger}
Let $D$ be a digraph and $X,Y \subseteq V(D)$. 
Then the minimum size of an $(X,Y)$-separator in $D$ equals the maximum number of pairwise internally vertex-disjoint paths from $X$ to $Y$ in $D$.
\end{theorem}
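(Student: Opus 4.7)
The plan is to prove the easy inequality directly and reduce the nontrivial inequality to the integral max-flow min-cut theorem via the standard vertex-splitting construction. For the easy direction, observe that if $P_1, \ldots, P_k$ are pairwise internally vertex-disjoint $(X,Y)$-paths in $D$, then any $(X,Y)$-separator $S$ must contain at least one vertex of each $P_i$; since internal vertices of the $P_i$ are pairwise distinct and endpoints lie in $X \cup Y$, these hit vertices are distinct, so $|S| \geq k$.

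For the harder direction, I would build an auxiliary capacitated network $N$ as follows. Add a super-source $s$ and a super-sink $t$. Replace each vertex $v \in V(D) \setminus (X \cup Y)$ by a pair $v^-, v^+$ joined by an arc $(v^-, v^+)$ of capacity $1$; each original arc $(u,v)$ of $D$ becomes an arc from $u^+$ to $v^-$ of capacity $\infty$ (with the convention that $x^- = x^+ = x$ whenever $x \in X \cup Y$). Finally, add arcs $(s,x)$ for every $x \in X$ and $(y,t)$ for every $y \in Y$, each of capacity $\infty$. Vertices of $X \cap Y$, which form trivial length-zero $(X,Y)$-paths, are placed in the separator at the outset and excluded from the construction.

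The key step is then the translation between the two sides. Any minimum $s$-$t$ cut in $N$ has finite value, hence cannot contain any arc of capacity $\infty$; it therefore consists only of unit-capacity arcs $(v^-, v^+)$, and the corresponding set $Z \subseteq V(D) \setminus (X \cup Y)$ is an $(X,Y)$-separator in $D$ of the same size. Conversely, given an integral $s$-$t$ flow of value $k$, classical flow decomposition yields $k$ edge-disjoint $s$-$t$ paths in $N$, and the unit capacities on the arcs $(v^-, v^+)$ force each internal vertex of $D$ to appear in at most one of them; projecting these paths back to $D$ then gives $k$ internally vertex-disjoint $(X,Y)$-paths. Applying the integral max-flow min-cut theorem to $N$ yields the desired equality.

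The main technical subtlety is the careful bookkeeping around $X \cup Y$: original arcs with both endpoints in $X \cup Y$ must be carried over to $N$ consistently, and vertices of $X \cap Y$ are trivial obstructions that any separator is forced to contain. Once these conventions are fixed the projection between $N$ and $D$ is a routine verification, so the bulk of the proof is really just packaging the max-flow min-cut theorem, which I would simply invoke rather than reprove.
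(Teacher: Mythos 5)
The paper never proves this statement: it is quoted as Menger's classical theorem with a citation, so your attempt has to stand on its own, and as written it has a genuine gap. Your network only permits cutting vertices outside $X \cup Y$: those vertices are not split and every arc incident to them has capacity $\infty$. Consequently the claim that ``any minimum $s$--$t$ cut in $N$ has finite value'' is false whenever $D$ contains an arc from a vertex of $X$ to a vertex of $Y$ (or any $X$--$Y$ path all of whose vertices lie in $X \cup Y$); in that case $N$ has no finite cut at all, while under the paper's definition an $(X,Y)$-separator may use vertices of $X \cup Y$ and can have size $1$. Your construction can never output such a separator, so max-flow min-cut does not yield the stated equality. The ``easy'' direction has the same flaw: internally disjoint paths may share endpoints in $X \cup Y$, and a separator vertex at a shared endpoint hits several paths simultaneously, so your claim that the hit vertices are distinct does not follow. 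Concretely, with $X=\{x\}$, $Y=\{y\}$ and two internally disjoint $x$--$y$ paths, $\{x\}$ is an $(X,Y)$-separator of size $1$, so the inequality asserted in your first paragraph fails for the pairing ``internally disjoint paths versus separators allowed anywhere''.

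The repair is to match the variants: split \emph{every} vertex $v$ into $v^-,v^+$ with a unit-capacity arc (including $v \in X \cup Y$), attach $s$ to $x^-$ for $x \in X$ and $y^+$ to $t$ for $y \in Y$. Then finite cuts always exist, minimum cuts correspond exactly to $(X,Y)$-separators that may meet $X \cup Y$, and integral flows decompose into pairwise \emph{fully} vertex-disjoint $X$--$Y$ paths after truncating each flow path to a minimal $X$--$Y$ subpath (a truncation step you should make explicit in any case, since flow paths can wander through $X \cup Y$). This proves the min-separator $=$ max vertex-disjoint paths form, which is the version the paper actually relies on later (linkages, well-linked sets), and it implies the internally-disjoint formulation in the only direction the paper ever uses.
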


\begin{definition}[$Z$-guarded sets]
Let $D$ be a digraph, $Z \subseteq V(D)$, and $S \subseteq V(D) \setminus Z$.
We say that $S$ is \emph{$Z$-guarded} if there is no directed walk in $D \setminus Z$ with first and last vertices in $S$ that uses a vertex of $D \setminus (Z \cup S)$.
\end{definition}
\noindent That is, informally speaking, a set $S$ is $Z$-guarded if whenever a walk starting in $S$ leaves $S$, it is impossible to come back to $S$ without visiting a vertex in $Z$. See Figure~\ref{fig:z_guarded_set} for an illustration of a $Z$-guarded set.
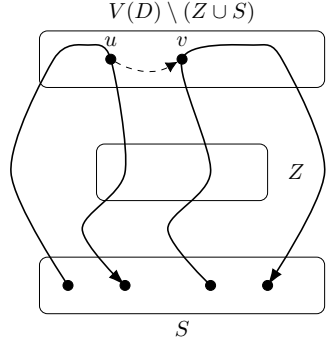
\begin{figure}[h!]
\centering
\scalebox{.75}{
\begin{tikzpicture}[scale=1]
\draw[rounded corners] (0,4) rectangle  (5,5) node [above,xshift=-2.5cm] {$V(D)\setminus (Z \cup S)$} ;
\draw[rounded corners] (0,0) rectangle (5,1) node [below,yshift=-1cm,xshift=-2.5cm] {$S$};
\draw[rounded corners] (1,2) rectangle (4,3) node [above,xshift=.5cm,yshift=-.75cm] {$Z$};
\node[blackvertex,scale=.5] (P-a) at (0.5,0.5) {};
\node (P-b) at (-0.5,2.5) {};
\node (P-c) at (0.25,4.5) {};
\node (P-d) at (0.85,4.75) {};
\node[blackvertex,label=$u$,scale=.5] (P-e) at (1.25,4.5) {};
\node (P-f) at (1.5,2.5) {};
\node (P-g) at (0.75,1.5) {};
\node[blackvertex,scale=.5] (P-h) at (1.5,0.5) {};

\draw[thick, -{Latex[length=3mm, width=2mm]}] plot[smooth] coordinates {(P-a) (P-b) (P-c) (P-d) (P-e) (P-f) (P-g) (P-h) };

\node[blackvertex,scale=.5] (P-a1) at (4,0.5) {};
\node (P-b1) at (5,2.5) {};
\node (P-c1) at (4.25,4.5) {};
\node (P-d1) at (3.75,4.75) {};
\node[blackvertex,label=$v$,scale=.5] (P-e1) at (2.5,4.5) {};
\node (P-f1) at (3,2.5) {};
\node (P-g1) at (2.25,1.5) {};
\node[blackvertex,scale=.5] (P-h1) at (3,0.5) {};

\draw[thick, -{Latex[length=3mm, width=2mm]}] plot[smooth] coordinates {(P-h1) (P-g1) (P-f1) (P-e1) (P-d1) (P-c1) (P-b1) (P-a1)};

\draw[dashed, -{Latex[length=2mm, width=2mm]}] (P-e) to [bend right =30] (P-e1);


\end{tikzpicture}%
}%
\caption{A $Z$-guarded set $S$. The dashed line indicates that there is no path from $u$ to $v$ in $V(D) \setminus (Z \cup S)$.}
\label{fig:z_guarded_set}
\end{figure}
If a set $S$ is $Z$-guarded, we may also say that $Z$ is a \emph{guard} for $S$. We remark that in~\cite{Johnson.Robertson.Seymour.Thomas.01}, the authors use the terminology of $Z$-\emph{normal} sets instead of $Z$-guarded sets.

Let $R$ be an arborescence, $r \in V(R)$, $e \in E(R)$, and $r'$ be the head of $e$.
We say that $r > e$ if there is a path from $r'$ to $r$ in $R$. 
We also say that $e \sim r$ if $r$ is the head or the tail of $e$.
To define the tree-width of directed graphs, we first need to introduce arboreal decompositions.
\begin{definition}[Arboreal decomposition]
An \emph{arboreal decomposition} $\beta$ of a digraph $D$ is a triple $(R,\mathcal{X},\mathcal{W})$ where $R$ is an arborescence, $\mathcal{X} = \{X_e : e \in E(R)\}$, $\mathcal{W} = \{W_r : r \in V(R)\}$, and $\mathcal{X},\mathcal{W}$ are collections of sets of vertices of $D$ (called \emph{bags}) such that
\begin{enumerate}
\item[\textbf{(i)}] $\mathcal{W}$ is a partition of $V(D)$ into non-empty sets, and
\item[\textbf{(ii)}] if $e \in E(R)$, then $\bigcup\{W_r : r \in V(R) \text{ and } r > e\}$ is $X_e$-guarded.
\end{enumerate}
We also say that $r$ is a \emph{leaf} of $(R,\mathcal{X,W})$ if $r$ has out-degree zero in $R$.
\end{definition}
The left hand side of Figure~\ref{fig:arboreal_decomposition} contains an example of a digraph $D$, while the right hand side shows an arboreal decomposition for it.
In the illustration of the arboreal decomposition, squares are guards $X_e$ and circles are bags of vertices $W_r$.
For example, consider the edge $e \in E(R)$ with $X_e = \{b,c\}$ from the bag $W_1$ to the bag $W_2$.
Then $\bigcup\{W_r : r \in V(R) \text{ and } r > e\} = V(D) \setminus \{a\}$ and, by item \textbf{(ii)} described above, this set must be $\{b,c\}$-guarded since $X_e = \{b,c\}$.
In other words, there cannot be a walk in $D \setminus \{b,c\}$ starting and ending in $V(D) \setminus \{a\}$ using a vertex of $\{a\}$.
This is true in $D$ since every path reaching $\{a\}$ from the remaining of the graph must do so through vertices $b$ or $c$.
The reader is encouraged to verify the same properties for the other guards in the decomposition.
\begin{figure}[ht]
\centering
\begin{subfigure}{.35\textwidth}
\scalebox{.85}{\begin{tikzpicture}

\foreach \x/\y/\name/\lpos in {
  0/5/a/90,
  -1.5/4/b/90, 1.5/4/c/90}
  \node[blackvertex, label=\lpos:{$\name$},scale=.5] (P-\name) at (\x,1.2*\y) {$\name$};

\foreach \x/\y/\name/\lpos in {
  -2.25/3/d/120, -0.75/3/e/60, 0.75/3/f/120, 2.25/3/g/60}
  \node[blackvertex, label={[label distance= -.1cm]\lpos:{$\name$}},scale=.5] (P-\name) at (\x,1.2*\y) {$\name$};

\foreach \x/\y in {
a/b, a/c, b/c, b/d, b/e, c/b, c/f, c/g, d/b, e/b, f/c, g/c, d/e, f/g}
  \draw[edge, {Latex[length=2mm, width=2mm]}-{Latex[length=2mm, width=2mm]}, line width = 1, shorten >= .1cm, shorten <= .1cm] (P-\x) to (P-\y);

\end{tikzpicture}%
}
\end{subfigure}\hspace{1cm}
\begin{subfigure}{.35\textwidth}
\scalebox{.75}{\begin{tikzpicture}
  \foreach \x/\y/\name/\idn/\lbl/\lblpos in {
  0/5/w1/a/{W_1}/180,
  0/2.5/w2/{b,c}/{W_2}/270,
  -2.5/2/w3/{d,e}/{W_3}/90, 2.5/2/w4/{f,g}/{W_4}/90}
  \node[vertex, text width=.75cm, align=center, label, label=\lblpos:{\large$\lbl$}] (P-\name) at (\x,\y) {\idn};
\draw[edge, line width = 1] (P-w1) to  node[midway, rectangle, fill=white,draw] {$b,c$}  (P-w2);
\draw[edge, line width = 1] (P-w2) to  node[midway, rectangle, fill=white,draw] {$b$}  (P-w3);
\draw[edge, line width = 1] (P-w2) to  node[midway, rectangle, fill=white,draw] {$c$}  (P-w4);
\end{tikzpicture}}
\end{subfigure}
\caption{A digraph $D$ and an arboreal decomposition of $D$ of width two. A bidirectional edge is used to represent a pair of edges in both directions.}
\label{fig:arboreal_decomposition}
\end{figure}
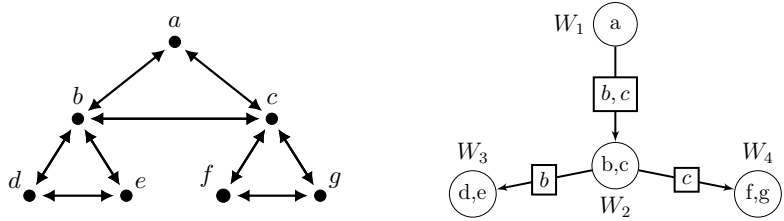
\begin{definition}[Nice arboreal decompositions]\label{def:nice_arboreal_decompositions}
We say that an arboreal decomposition $(R,\mathcal{X},\mathcal{W})$ of a digraph $D$ is \emph{nice} if
\begin{enumerate}
  \item[\textbf{(iii)}] for every $e \in E(R)$, $\bigcup\{W_r : r \in V(R), r > e\}$ induces a strong component of $D \setminus X_e$, and
  \item[\textbf{(iv)}] if $r \in V(R)$ and $r_1, \ldots, r_{\ell}$ are the out-neighbors of $r$ in $R$, then
   \[\left(\bigcup_{1 \leq i \leq \ell}W_{r_i}\right) \cap \left(\bigcup_{e \sim r}X_e\right) = \emptyset.\]
\end{enumerate}
\end{definition}
\begin{definition}[Directed tree-width]
Let $(R,\mathcal{X},\mathcal{W})$ be an arboreal decomposition of a digraph $D$.
For a vertex $r \in V(R)$, we denote by $\wdth(r)$ the size of the set $W_r \cup (\bigcup_{e \thicksim r}X_e)$.
The \emph{width} of $(R,\mathcal{X},\mathcal{W})$ is the least integer $k$ such that, for all $r \in V(R)$, $\wdth(r) \leq k+1$.
The \emph{directed tree-width}  of $D$, denoted by $\dtw(D)$, is the least integer $k$ such that $D$ has an arboreal decomposition of width $k$.
\end{definition}
\noindent We remark that DAGs have directed tree-width zero.

If $G$ is an undirected graph and $D$ the digraph obtained from $G$ by replacing every edge of $G$ with two directed edges in opposite directions then, as shown by Johnson et al.~\cite{Johnson.Robertson.Seymour.Thomas.01}, the tree-width of $G$ is equal to the directed tree-width of $D$.
Thus, deciding if a digraph $D$ has directed tree-width at most $k$, for a given integer $k$, is {\sf NP}-complete since deciding if the tree-width of an undirected graph is at most $k$ is an {\sf NP}-complete problem~\cite{doi:10.1137/0608024}.


We now formally define cylindrical grids,  butterfly contractions, butterfly minors, and some blocking structures for large directed tree-width.

\begin{definition}[Cylindrical grid]
A \emph{cylindrical grid of order $k$} is a digraph formed by the union of $k$ disjoint cycles $C_1, \ldots, C_k$ and $2k$ disjoint paths $P_1$, $P_2$, $\ldots$, $P_{2k}$ where
\begin{enumerate}
\item for $i \in [k], V(C_i) = \{v_{i,1}, v_{i,2}, \ldots, v_{i,2k}\}$ and $E(C_i) = \{(v_{i,j},v_{i, j+1} \mid j \in [2k-1])\} \cup \{(v_{i, 2k}, v_{i,1})\}$,
\item for $i \in \{1, 3, \ldots, 2k-1\}$, $E(P_i) = \{(v_{1, i},v_{2,i}), (v_{2,i},v_{3,i}), \ldots, (v_{k-1,i},v_{k,i})\}$, and
\item for $i \in \{2,4, \ldots, 2k\}$, $E(P_i) = \{(v_{k, i},v_{k-1,i}), (v_{k-1, i},v_{k-2,i}), \ldots, (v_{2,i},v_{1,i})\}$.
\end{enumerate}
\end{definition}
\noindent In other words, path $P_i$ is oriented from the first circle to the last one if $i$ is odd, and the other way around if $i$ is even.
Furthermore, every vertex of a cylindrical grid occurs in the intersection of a path and a cycle.
See Figure~\ref{fig:cylindrical_grid_intro} for an example of a cylindrical grid of order $k=4$.

\begin{definition}[Butterfly contraction and butterfly minors]
Let $D$ be a digraph.
An edge $e$ from $u$ to $v$ of $D$ is \emph{butterfly contractible} if $e$ is the only outgoing edge of $u$ or the only incoming edge of $v$.
By \emph{butterfly contracting} $e$ in $D$, we obtain a digraph $D'$ with vertex set $V(D') = V(D) \setminus  \{u,v\} \cup \{x_{u,v}\}$, where $x_{u,v}$ is a new vertex, and  $E(D') = E(D) \setminus \{e\}$.
Every incidence of an edge $f \in E(D')$ to $u$ or $v$ in $D$ becomes an incidence to $x_{u,v}$ in $D'$.
If $D'$ is generated from a subgraph of $D$ by a series a butterfly contractions, we say that $D'$ is a \emph{butterfly minor} of $D$.
\end{definition}
\noindent Notice that, in the above definition, the newly introduced vertex $x_{u,v}$ has in $D'$ the same neighbors of $u$ and $v$ in $D$.
It is not hard to see that butterfly contractions cannot generate any new paths, and that there is no such guarantee if no restrictions are imposed on which edges of a digraph can be contracted.
See Figure~\ref{fig:butterfly_contractions} for an example of this.

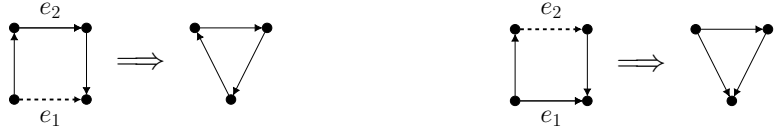
\begin{figure}[h!]
\centering
\begin{subfigure}{.4\textwidth}
\centering
\scalebox{.475}{\begin{tikzpicture}[scale=2]
\foreach \x/\y/\name in {
  0/0/a, 0/1/b, 1/1/c, 1/0/d} {
  \node[blackvertex, scale=0.75] (P-\name) at (\x,\y) {};
  }
\foreach \x/\y/\name in {
  0.5/0/ad, 0/1/b, 1/1/c} {
  \node[blackvertex, scale=0.75] (Q-\name) at (\x+2.5,\y) {};
  }

\foreach \x/\y in {
a/b, b/c, c/d}{
  \draw[-{Latex[length=2mm, width=2mm]}] (P-\x) -- (P-\y);
}
\foreach \x/\y in {
ad/b, b/c, c/ad}{
  \draw[-{Latex[length=2mm, width=2mm]}] (Q-\x) -- (Q-\y);
}
  \draw[-{Latex[length=2mm, width=2mm]}, line width=1.5, dashed] (P-a) -- (P-d) node [midway, label=-90:{\huge$e_1$}] {};
  \draw[-{Latex[length=2mm, width=2mm]}] (P-b) --  (P-c) node [midway, label=90:{\huge$e_2$}] {};

\node (P-i) at (1.75,0.5) {\Huge$\implies$};
\end{tikzpicture}}%
\end{subfigure}\hspace{1cm}
\begin{subfigure}{.4\textwidth}
\centering
\scalebox{.475}{\begin{tikzpicture}[scale=2]
\foreach \x/\y/\name in {
  0/0/a, 0/1/b, 1/1/c, 1/0/d} {
  \node[blackvertex, scale=0.75] (P-\name) at (\x,\y) {};
  }
\foreach \x/\y/\name in {
  0.5/0/ad, 0/1/b, 1/1/c} {
  \node[blackvertex, scale=0.75] (Q-\name) at (\x+2.5,\y) {};
  }

\foreach \x/\y in {
a/b, c/d}{
  \draw[-{Latex[length=2mm, width=2mm]}] (P-\x) --  (P-\y);
}
\draw[-{Latex[length=2mm, width=2mm]}] (P-a) -- (P-d) node [midway, label=-90:{\huge$e_1$}] {};
\draw[-{Latex[length=2mm, width=2mm]},line width =1.5, dashed] (P-b) --  (P-c) node [midway, label=90:{\huge$e_2$}] {};
\foreach \x/\y in {
b/ad, b/c, c/ad}{
  \draw[-{Latex[length=2mm, width=2mm]}] (Q-\x) -- (Q-\y);
}
  \draw[-{Latex[length=2mm, width=2mm]}] (P-a) -- (P-d);

\node (P-i) at (1.75,0.5) {\Huge$\implies$};
\end{tikzpicture}}%
\end{subfigure}
\caption{Butterfly contractions preserve separations. In each example the dashed edge is contracted to generate the digraph on the right. Edge $e_1$ is \textsl{not} butterfly contractible.}
\label{fig:butterfly_contractions}
\end{figure}

\begin{definition}[Well-linked sets]\label{def:well-linked-sets-in-digraphs}
Let $D$ be a digraph and $A \subseteq V(D)$.
We say that $A$ is \emph{well-linked} in $D$ if, for all disjoint $X,Y \subseteq A$ with $|X| = |Y|$, there are $|X|$ vertex-disjoint paths from $X$ to $Y$ in $D$.
The \emph{order} of a well-linked set $A$ is $|A|$.
We denote by $\wlink(D)$ the size of a largest well-linked set in $D$.
\end{definition}
\begin{definition}[Havens in digraphs]\label{def:havens_in_digraphs}
Let $D$ be a digraph. A \emph{haven of order $k$} in $D$ is a function $\beta$ assigning to every set $Z \subseteq V(D)$, with $|Z| \leq k-1$, the vertex set of a strong component of $D \setminus Z$ in such way that if $Z' \subseteq Z \subseteq V(D)$ then $\beta(Z) \subseteq \beta(Z')$. The \emph{haven number} of a digraph $D$, denoted by $\hn(D)$, is the maximum $k$ such that $D$ admits a haven of order $k$.
\end{definition}
\noindent A $k$-strongly connected digraph, for example, admits a haven of order $k$: it suffices to choose $\beta(Z) = V(D) \setminus Z$ for any $Z \subseteq V(D)$ with $|Z| \leq k-1$.
Figure~\ref{figure:haven_property} illustrates the defining property of havens.
\begin{figure}[h]
\centering
\scalebox{.75}{
\begin{tikzpicture}[scale=1]
  \draw (0,0) circle (1.5cm) node[anchor=center,yshift=1.75cm] {\large$Z$};
  \draw (0,0) circle (.5cm) node[anchor=center] {\large$Z'$};

  \draw (5.5cm,0) circle (1.5cm) node[anchor=center,yshift=1.75cm] {\large$\beta(Z')$};
  \draw (5.5cm,0) circle (.5cm) node[anchor=center] {\large$\beta(Z)$};

  \draw[dashed] (0,1.5cm) -- (5.5cm, .5cm);
  \draw[dashed] (0,-1.5cm) -- (5.5cm, -.5cm);
  \draw[dashed] (0,.5cm) -- (5.5cm, 1.5cm);
  \draw[dashed] (0,-.5cm) -- (5.5cm, -1.5cm);
\end{tikzpicture}%
}%
\caption{Illustration of the haven property.}
\label{figure:haven_property}
\end{figure}
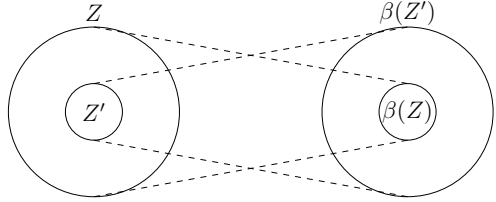

\begin{definition}[Brambles in digraphs]
A \emph{bramble} $\mathcal{B} = \{B_1, \ldots, B_\ell\}$ in a digraph $D$ is a family of strongly connected subgraphs of $D$ such that if $\{B, B'\} \subseteq \mathcal{B}$ then $V(B) \cap V(B') \neq \emptyset$ or there are edges in $D$ from $V(B)$ to $V(B')$ and from $V(B')$ to $V(B)$.
A \emph{hitting set} of a bramble $\mathcal{B}$ is a set $C \subseteq V(D)$ such that $C \cap V(B) \neq \emptyset$ for all $B \in \mathcal{B}$. The \emph{order}  of a bramble $\mathcal{B}$, denoted by $\order(\mathcal{B})$,  is the minimum size of a hitting set of $\mathcal{B}$. The \emph{bramble number} of a digraph $D$, denoted by $\bn(D)$, is the the maximum $k$ such that $D$ admits a bramble of order $k$.
\end{definition}

There is a direct relation between the haven number and the tree-width of undirected graphs.
A haven in an undirected graph is defined similarly: the function $\beta$ retains all its properties, but mapping sets of at most $k-1$ vertices to components of the graph resulting from the deletion of those vertices.

\begin{proposition}[Seymour and Thomas~\cite{Seymour.Thomas.93}]
Let $G$ be an undirected graph and $k \geq 1$ be an integer.
Then $G$ has a haven of order $k$ if and only if its tree-width is at least $k-1$.
\end{proposition}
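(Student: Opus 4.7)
The plan is to prove the two directions separately. For the forward direction (haven of order $k$ implies tree-width at least $k-1$), I would argue by contrapositive: assume $G$ admits a tree decomposition $(T, \{V_t\}_{t \in V(T)})$ of width at most $k-2$, so $|V_t| \leq k-1$ for every $t \in V(T)$, and derive a contradiction from the existence of a haven $\beta$ of order $k$. For every edge $\{s,t\}$ of $T$, removing the edge splits $T$ into subtrees $T_s, T_t$, and the intersection $S_{s,t} = V_s \cap V_t$ is a separator of $G$ separating $A_s = \bigcup_{r \in V(T_s)} V_r$ from $A_t = \bigcup_{r \in V(T_t)} V_r$. Since $|S_{s,t}| \leq k-1$, the set $\beta(S_{s,t})$ is well defined and, being a connected component of $G \setminus S_{s,t}$, lies entirely on one side of $S_{s,t}$; orient $\{s,t\}$ towards that side. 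In the finite tree $T$ this orientation has a sink $t^*$.

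Now $\beta(V_{t^*})$ is a non-empty component of $G \setminus V_{t^*}$ and, by the tree decomposition property, lies entirely in $A_{s_0} \setminus V_{t^*}$ for some neighbor $s_0$ of $t^*$. By monotonicity of $\beta$, since $S_{s_0,t^*} \subseteq V_{t^*}$, we have $\beta(V_{t^*}) \subseteq \beta(S_{s_0,t^*})$. But $t^*$ being a sink means the edge $\{s_0,t^*\}$ is oriented from $s_0$ to $t^*$, so $\beta(S_{s_0,t^*}) \subseteq A_{t^*} \setminus S_{s_0,t^*}$. Since $A_{s_0}$ and $A_{t^*}$ meet only in $S_{s_0,t^*}$, these two containments force $\beta(V_{t^*}) = \emptyset$, the desired contradiction.

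For the reverse direction (tree-width at least $k-1$ implies a haven of order $k$), the plan is to invoke the bramble/tree-width duality of Seymour and Thomas~\cite{Seymour.Thomas.93}, which asserts that $G$ has tree-width at least $k-1$ exactly when it admits a bramble of order at least $k$, and then to convert such a bramble into a haven. Given a bramble $\mathcal{B}$ of order $k$ and any $Z \subseteq V(G)$ with $|Z| \leq k-1$, $Z$ is not a hitting set of $\mathcal{B}$, so at least one element is disjoint from $Z$; since any two bramble elements touch, all un-hit elements lie in a single connected component of $G \setminus Z$, and I define $\beta(Z)$ to be that component. Monotonicity follows because every bramble element avoiding $Z \supseteq Z'$ also avoids $Z'$, so the vertices of $\beta(Z)$ are contained in $\beta(Z')$. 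The hard part will be the bramble/tree-width duality itself: it is a deep min-max theorem typically proved via an LP-duality-flavored argument or, equivalently, through a cops-and-robber game analysis in which the robber's winning strategy is precisely encoded by a bramble. If one wished to avoid citing this duality, an alternative would be to construct $\beta$ directly as such a winning robber strategy against $k-1$ cops, at the cost of essentially reproducing the same combinatorial content.
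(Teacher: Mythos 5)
The paper does not prove this proposition at all: it is stated as background and attributed to Seymour and Thomas~\cite{Seymour.Thomas.93}, so there is no in-paper argument to compare yours against. Judged on its own, your sketch follows the standard route and is essentially sound. The forward direction (orienting each tree edge $\{s,t\}$ of a width-$(k-2)$ decomposition towards the side containing $\beta(V_s \cap V_t)$, taking a sink $t^*$, and forcing $\beta(V_{t^*}) = \emptyset$ via monotonicity together with $A_{s_0} \cap A_{t^*} \subseteq V_{s_0} \cap V_{t^*}$) is complete, up to the trivial case where $T$ has a single node, in which case $G \setminus V_{t^*}$ is empty and the contradiction is immediate. In the reverse direction, the bramble-to-haven conversion is also correct, though the monotonicity step deserves one more line than you give it: it is not enough that elements avoiding $Z$ avoid $Z'$; one should add that $\beta(Z)$ is connected in $G \setminus Z'$ and contains a bramble element avoiding $Z'$, hence lies inside the unique component $\beta(Z')$ of $G \setminus Z'$ containing all such elements.

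The only substantive caveat is the one you flag yourself: the bramble--tree-width duality is precisely the deep min-max content of the cited theorem, so for that direction your argument is a reduction to (an equivalent formulation of) the Seymour--Thomas result rather than an independent proof. Since the paper likewise treats the statement as a citation, this is acceptable in context, but a genuinely self-contained proof would require reproducing the cops-and-robber or duality argument you mention.
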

\noindent For digraphs, only one implication of the previous result is known to be true.
\begin{proposition}[Johnson et al.~\cite{Johnson.Robertson.Seymour.Thomas.01}]
\label{theorem:directed_haven_treewidth}
Let $D$ be a digraph and $k$ be a non-negative integer. If $D$ has a haven of order $k$, then $\dtw(D) \geq k-1$.
\end{proposition}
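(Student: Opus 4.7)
I will argue by contradiction: assume $D$ admits both a haven $\beta$ of order $k\ge 1$ (the case $k=0$ being vacuous) and an arboreal decomposition $(R,\mathcal{X},\mathcal{W})$ of width at most $k-2$, so that $\wdth(r)\le k-1$ for every $r\in V(R)$. The contradiction will come from constructing an infinite descent in the finite arborescence $R$. For $r\in V(R)$, let $V_r$ denote the union of the bags $W_{r'}$ over all $r'$ in the subtree of $R$ rooted at $r$; recall that for every non-root $r$ the set $V_r$ is $X_{e_{\text{in}}(r)}$-guarded, where $e_{\text{in}}(r)$ is the unique edge of $R$ with head $r$.

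The descent maintains the following invariant at the current node $r$: we have a set $Z\subseteq V(D)$ with $|Z|\le k-1$ such that $\beta(Z)\subseteq V_r$. At the root $r_0$ we start with $Z:=\emptyset$, for which the invariant is immediate since $V_{r_0}=V(D)$. To go from $r$ to a child, form the local separator $Z^{\star}:=W_r\cup\bigcup_{e\sim r}X_e$, which by definition of width satisfies $|Z^{\star}|\le \wdth(r)\le k-1$. One checks directly that the current $Z$ is a subset of $Z^{\star}$ (at the root trivially; at subsequent steps $Z$ will be the guard $X_{e_{\text{in}}(r)}\subseteq\bigcup_{e\sim r}X_e$), so by monotonicity of the haven we get $\beta(Z^{\star})\subseteq\beta(Z)\subseteq V_r$, and since $\beta(Z^{\star})$ is disjoint from $Z^{\star}$ in fact $\beta(Z^{\star})\subseteq V_r\setminus Z^{\star}$. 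Because $W_r\subseteq Z^{\star}$, the residue splits as
\[
V_r\setminus Z^{\star}=\bigcup_{r'\text{ child of }r}(V_{r'}\setminus Z^{\star}).
\]
Each $V_{r'}$ is $X_{e_{r,r'}}$-guarded, and since $X_{e_{r,r'}}\subseteq Z^{\star}$ it is also $Z^{\star}$-guarded; a walk in $D\setminus Z^{\star}$ from a vertex of $V_{r'}$ leaving $V_{r'}$ and returning would contradict guardedness, so any strong component of $D\setminus Z^{\star}$ meeting $V_{r'}$ is contained in $V_{r'}$. Hence the nonempty strong component $\beta(Z^{\star})$ lies inside $V_{r'}$ for exactly one child $r'$; descend to this $r'$ and reset $Z:=X_{e_{r,r'}}$. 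Then $|Z|\le\wdth(r)\le k-1$, and by monotonicity $\beta(Z)\supseteq\beta(Z^{\star})\ne\emptyset$, so $\beta(Z)$ meets $V_{r'}$; since $V_{r'}$ is $Z$-guarded, $\beta(Z)\subseteq V_{r'}$, restoring the invariant.

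The contradiction arises when the descent reaches a leaf: if $r$ is a leaf of $R$ then $V_r=W_r\subseteq Z^{\star}$, so $V_r\setminus Z^{\star}=\emptyset$, yet the step above yields $\beta(Z^{\star})\subseteq V_r\setminus Z^{\star}$ and $\beta(Z^{\star})$ is the (nonempty) vertex set of a strong component of $D\setminus Z^{\star}$. Since the descent strictly increases the depth of the current node in $R$ and $R$ is finite, a leaf is reached in finitely many steps. The main technical difficulty I expect is keeping $|Z|\le k-1$ throughout so that $\beta(Z)$ is always defined: naively accumulating all guards along the root-to-$r$ path in $R$ would overshoot the bound. The fix is the ``reset'' in the descent step, replacing the running set by the single edge guard $X_{e_{r,r'}}$, which is always contained in $\bigcup_{e\sim r}X_e$ and therefore bounded by $\wdth(r)\le k-1$, while still being large enough to make $V_{r'}$ guarded and keep the haven's monotonicity applicable.
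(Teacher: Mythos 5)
Your argument is correct. Note that the paper itself does not prove this proposition: it is imported verbatim from Johnson, Robertson, Seymour and Thomas, so there is no in-paper proof to compare against. Your descent through the arboreal decomposition --- maintaining a set $Z$ with $|Z|\le k-1$ and $\beta(Z)\subseteq V_r$, pushing $\beta(Z^\star)$ for $Z^\star=W_r\cup\bigcup_{e\sim r}X_e$ into a single child's subtree, and resetting $Z$ to the edge guard $X_{e_{r,r'}}$ so the budget never overshoots --- is essentially the classical argument from that reference, and all the steps check out: the invariant holds at the root, the reset keeps $|Z|\le\wdth(r)\le k-1$, and the leaf case gives $\beta(Z^\star)\subseteq W_r\setminus Z^\star=\emptyset$, contradicting nonemptiness of strong components. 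One cosmetic remark: the phrase ``$V_{r'}$ is also $Z^\star$-guarded'' is not literally licensed by the paper's definition (which requires the guarded set to be disjoint from the guard, and $V_{r'}$ may meet guards of sibling edges), but this is harmless because the deduction you actually make --- a closed walk in $D\setminus Z^\star$ that starts and ends in $V_{r'}$ and visits a vertex outside $V_{r'}\cup Z^\star$ would already violate the $X_{e_{r,r'}}$-guardedness guaranteed by condition (ii) --- only uses the edge guard $X_{e_{r,r'}}\subseteq Z^\star$, and is sound as written.
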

\noindent For the reverse direction of Proposition~\ref{theorem:directed_haven_treewidth}, only an approximate version is known.
\begin{proposition}[Johnson et al.~\cite{Johnson.Robertson.Seymour.Thomas.01}]
\label{theorem:directed_tree_width}
Let $D$ be a digraph and $k$ be a positive integer.
If $\dtw(D) \geq 3k-1$ then $D$ admits a haven of order $k$.
\end{proposition}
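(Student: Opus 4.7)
The plan is to prove the contrapositive: if $D$ admits no haven of order $k$, then $D$ has an arboreal decomposition of width at most $3k-2$. The pivotal observation, obtained by contraposing Lemma~\ref{claim:haven_instance}, is that under the no-haven hypothesis every $T \subseteq V(D)$ with $|T| \leq 2k-1$ admits a $T$-balanced separator $Z$ of size at most $k-1$; for otherwise $T$ would be $(k-1)$-linked and we could extract from it a haven of order $k$, a contradiction.

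Armed with this fact, I would construct the arboreal decomposition by a top-down recursion on pairs $(U,T)$, where $U \subseteq V(D)$ is the subdigraph still to be decomposed and $T \subseteq V(D)$, with $|T| \leq 2k-1$, plays the role of an ``interface'' linking $U$ to the already processed region. The initial call is $(V(D), \emptyset)$. At each step the pivotal observation supplies a $T$-balanced separator $Z \subseteq U$ of size at most $k-1$, which is added to the current bag $W_r$. For every strong component $C$ of $D[U] \setminus Z$ we recurse on $(V(C), T_C)$ with $T_C := Z \cup (T \cap \text{Reach}(C))$, where $\text{Reach}(C)$ denotes the vertices of $T$ ``adjacent to $C$'' in $D \setminus Z$; by the $T$-balanced property we have $|T \cap \text{Reach}(C)| \leq \lfloor |T|/2 \rfloor \leq k-1$, so $|T_C| \leq 2k-2$ and the invariant is preserved. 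The guard on the edge from $r$ to the child processing $C$ is declared to be $T_C$, while the guard on the edge from $r$'s parent is the $T$ inherited at the call.

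The main obstacle is simultaneously certifying the $X_e$-guarded condition \textbf{(ii)} of an arboreal decomposition and the width bound. Condition \textbf{(ii)} follows from the separator property: any walk in $D$ leaving a child component $C$ of $D[U] \setminus Z$ and returning must traverse $Z$, which lies in the corresponding guard. For the width bound, a short accounting shows that $W_r$ together with all guards incident to $r$ is contained in $T \cup Z$, a set of size at most $(2k-1) + (k-1) = 3k-2$; hence $\wdth(r) \leq 3k-1$ at every node $r$, and the decomposition has width at most $3k-2$. The base cases (empty $U$, or $U \setminus Z = \emptyset$) produce leaves respecting the same bound, completing the construction and thereby the proof of the contrapositive.
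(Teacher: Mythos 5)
Your overall strategy---contraposing Lemma~\ref{claim:haven_instance} to obtain, under the no-haven hypothesis, a $T$-balanced separator of size at most $k-1$ for every $T$ with $|T|\le 2k-1$, and then building the decomposition recursively with guards of the form $Z$ plus a piece of the interface $T$---is exactly the route the paper takes (its proof of Theorem~\ref{theorem:first_contribution_extended}, a top-down recasting of Johnson et al.'s leaf-splitting). However, two steps fail as written. First, the separator $Z$ supplied by the pivotal observation is merely a subset of $V(D)$; your assertion $Z\subseteq U$ is unjustified. If you instead place $Z\cap U$ into the bag $W_r$, that bag may be empty (violating condition (i)), and the recursion may make no progress: already at the root, $U=V(D)$ and $T=\emptyset$, so $Z=\emptyset$ is a legitimate $T$-balanced separator, and if $D$ is strongly connected you recurse on $(V(D),\emptyset)$ forever. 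The paper repairs exactly this by enlarging $Z'$ with an arbitrary vertex $v$ of the current (too large) bag, which is possible since that bag has at least $k+1$ vertices; then $|Z|\le k$, the new bag $W_{r_0}\cap Z$ is non-empty, every step strictly shrinks the oversized leaves, and the accounting becomes $|T\cup Z|\le (2k-1)+k=3k-1$ and guards of size at most $2k-1$, still giving width $3k-2$ (your counts $2k-2$ and $(2k-1)+(k-1)$ are artifacts of the unjustified $Z\subseteq U$).

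Second, your verification of condition (ii)---``any walk leaving a child component $C$ and returning must traverse $Z$''---is false as stated: $C$ is a strong component of $D[U]\setminus Z$, not of $D\setminus Z$, so a returning walk may avoid $Z$ altogether and re-enter $U$ through a vertex of $T$, possibly one outside your guard $T_C$. To make this work you need (a) a precise definition of the interface, namely $T_C=Z\cup\bigl(T\cap V(S)\bigr)$ where $S$ is the strong component of $D\setminus Z$ containing $C$ (this is also what legitimizes the bound $|T\cap V(S)|\le\lfloor |T|/2\rfloor\le k-1$, which is stated for components of $D\setminus Z$, not of $D[U]\setminus Z$), and (b) the inductive invariant that $U$ is $T$-guarded, which you never state. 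With these, the argument goes through: a walk leaving $C$ and avoiding $Z$ cannot leave $S$ and return; inside $S$ it avoids $T\cap V(S)$, hence all of $T$, so by $T$-guardedness of $U$ it never leaves $U$ and thus stays in $D[U]\setminus Z$, contradicting that $C$ is a strong component there. The paper encodes precisely this by first taking strong components $C_i$ of $D\setminus Z$, then strong components of $C_i\setminus T$ inside the current bag, using the $T$-guardedness of that bag for the needed dichotomy. With these two repairs your recursion coincides with the paper's argument.
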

Finally, the following two lemmas show that brambles of large order and large well-linked sets are obstructions to small directed tree-width.
The proof of the first lemma can be done by converting brambles into havens and back.
For the second lemma, it is sufficient to show that any minimum hitting set of a bramble of order $k$ is well-linked and  to extract a bramble of order $k$ from a well-linked set of order $4k+1$.
The proofs are simple and can be found, for example, in~\cite[Chapter 6]{Quantitative.Graph.Theory}.

\begin{lemma}
\label{lemma:haven_bramble_relation}
Let $D$ be a digraph.
Then $\bn(D) \leq \hn(D) \leq 2\bn(D)$.
\end{lemma}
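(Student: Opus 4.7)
The plan is to prove the two inequalities independently, each by an explicit construction that turns one obstruction into the other, and then to bound the relevant parameters.

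For $\bn(D) \leq \hn(D)$, I would start from a bramble $\mathcal{B}$ of order $k = \bn(D)$ and build a haven of the same order. Given any $Z \subseteq V(D)$ with $|Z| \leq k-1$, minimality of the hitting set guarantees that some $B \in \mathcal{B}$ is disjoint from $Z$. The first key observation is that \emph{all} bramble elements disjoint from $Z$ are contained in a single strong component of $D \setminus Z$: for any two such $B, B'$, the touching condition provides either a shared vertex or a pair of edges (one in each direction) between $V(B)$ and $V(B')$, and since neither $B$ nor $B'$ meets $Z$, those witnesses survive in $D \setminus Z$; together with the strong connectivity of $B$ and $B'$, this makes $V(B) \cup V(B')$ strongly connected in $D \setminus Z$. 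Define $\beta(Z)$ to be the vertex set of that unique strong component. For monotonicity, given $Z' \subseteq Z$, any bramble element avoiding $Z$ also avoids $Z'$, and any vertex $v \in \beta(Z)$ is linked both to and from such a $B$ by paths in $D \setminus Z \subseteq D \setminus Z'$, placing $v$ in the same strong component as $B$ in $D \setminus Z'$, i.e.\ $v \in \beta(Z')$.

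For $\hn(D) \leq 2\bn(D)$, I would go the other way: from a haven $\beta$ of order $k = \hn(D)$, exhibit a bramble of order at least $\lceil k/2 \rceil$. The candidate family is
\[
\mathcal{B} \;=\; \bigl\{\, D[\beta(Z)] \;:\; Z \subseteq V(D),\ |Z| \leq \lfloor (k-1)/2 \rfloor \,\bigr\}.
\]
Each $D[\beta(Z)]$ is strongly connected because $\beta(Z)$ is, by definition, the vertex set of a strong component of $D \setminus Z$ and is disjoint from $Z$. For the bramble touching condition, given two members $D[\beta(Z_1)]$ and $D[\beta(Z_2)]$ with $|Z_1|,|Z_2| \leq \lfloor (k-1)/2 \rfloor$, we have $|Z_1 \cup Z_2| \leq k-1$, so $\beta(Z_1 \cup Z_2)$ is defined, and the monotonicity axiom forces it to be contained in both $\beta(Z_1)$ and $\beta(Z_2)$; since $\beta(Z_1 \cup Z_2)$ is nonempty (it is a strong component), the two members share a vertex. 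Finally, to bound the order from below, suppose $X$ is a hitting set with $|X| \leq \lfloor (k-1)/2 \rfloor$; then $D[\beta(X)] \in \mathcal{B}$, yet $\beta(X) \subseteq V(D) \setminus X$, so $X$ misses this bramble element, a contradiction. Hence every hitting set has size at least $\lfloor (k-1)/2 \rfloor + 1 \geq \lceil k/2 \rceil$, giving $\bn(D) \geq \lceil \hn(D)/2 \rceil$.

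I expect the mildly delicate step to be the uniqueness-of-strong-component argument in the first direction, which requires carefully combining the touching condition of the bramble with the fact that neither $Z$-avoiding bramble element uses a vertex of $Z$; everything else reduces to applying the monotonicity of $\beta$ and a straightforward counting argument on $|Z_1 \cup Z_2|$.
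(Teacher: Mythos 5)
Your proof is correct, and it follows essentially the route the paper has in mind: the haven-to-bramble direction (taking $\mathcal{B} = \{D[\beta(Z)] : |Z| \leq \lfloor (k-1)/2 \rfloor\}$ and using monotonicity on $Z_1 \cup Z_2$) is exactly the construction the paper spells out in Section~\ref{section:brambles_DGT}, and the bramble-to-haven direction, which the paper delegates to the cited reference, is the standard conversion and you carry it out correctly. In particular, your key step -- that all $Z$-avoiding bramble elements lie in a single strong component of $D \setminus Z$, which makes $\beta(Z)$ well-defined and gives monotonicity for free -- is sound, so no gaps remain.
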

\begin{lemma}
Let $D$ be a digraph.
Then $\bn(D) \leq \wlink(D) \leq 4\bn(D)$.
\end{lemma}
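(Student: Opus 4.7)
The proof splits into two independent inequalities.

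\textbf{For $\bn(D) \leq \wlink(D)$:} The plan is to fix a bramble $\mathcal{B}$ with $\order(\mathcal{B}) = \bn(D) = k$ together with a minimum hitting set $S$ of $\mathcal{B}$, so $|S| = k$, and prove that $S$ is itself well-linked; this gives $\wlink(D) \geq k$. Suppose towards a contradiction that $S$ is not well-linked: there exist disjoint $X, Y \subseteq S$ with $|X|=|Y|=t$ for which the maximum number of vertex-disjoint $X$-to-$Y$ paths is less than $t$. Menger's Theorem~\ref{thm:Menger} then yields an $(X,Y)$-separator $Z$ of size $<t$. The two candidate sets $S_1 := (S \setminus X) \cup Z$ and $S_2 := (S \setminus Y) \cup Z$ both have size strictly less than $k$, so by minimality of $S$ neither hits $\mathcal{B}$, producing $B, B' \in \mathcal{B}$ with $\emptyset \neq V(B) \cap S \subseteq X$, $\emptyset \neq V(B') \cap S \subseteq Y$, and $(V(B) \cup V(B')) \cap Z = \emptyset$. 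The bramble axiom forces $B$ and $B'$ either to share a vertex or to be joined by arcs in both directions in $D$; in either case, concatenating paths inside the strongly connected subdigraphs $B$ and $B'$ (both of which live in $D \setminus Z$) with the connecting arc (when needed) produces an $X$-to-$Y$ walk in $D \setminus Z$, contradicting that $Z$ is an $(X,Y)$-separator.

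\textbf{For $\wlink(D) \leq 4\,\bn(D)$:} I would route through the haven using Lemma~\ref{lemma:haven_bramble_relation} ($\hn(D) \leq 2\,\bn(D)$), so that it suffices to establish $\wlink(D) \leq 2\,\hn(D)$. Given a well-linked set $A$ with $|A|=m$, the target is a haven of order $\lceil m/2 \rceil$: for each $Z$ with $|Z| \leq \lceil m/2\rceil - 1$, define $\beta(Z)$ to be the strong component of $D \setminus Z$ containing strictly more than $m/2$ vertices of $A$. Uniqueness is a one-line counting argument (two disjoint such components would hold more than $m$ elements of $A$), and monotonicity is equally painless: for $Z' \subseteq Z$, each strong component of $D \setminus Z$ sits inside a strong component of $D \setminus Z'$, and the ``majority'' property is preserved under that enlargement, whence $\beta(Z) \subseteq \beta(Z')$.

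\textbf{Main obstacle.} The technical crux is existence of such a majority component for every $Z$ with $|Z| < m/2$. I would argue on the condensation DAG of $D \setminus Z$: topologically order its strong components $C_1,\dots,C_\ell$, let $f_i = |A \cap C_i|$, and $W_j = \sum_{i \leq j} f_i$. If no $f_i$ exceeds $m/2$, pick the index $j$ with $W_j$ closest to $W_\ell/2$; the topological order kills every arc of $D \setminus Z$ from $\{C_{j+1},\dots,C_\ell\}$ back to $\{C_1,\dots,C_j\}$, so by well-linkedness the $\min(W_j, W_\ell - W_j)$ vertex-disjoint paths from the former side of $A$ back to the latter must all traverse $Z$, yielding $|Z| \geq \min(W_j, W_\ell - W_j)$. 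Sharpening this inequality, in combination with the dual argument applied to the reversed digraph $\overleftarrow{D}$ (well-linkedness is preserved under arc reversal) and a separate treatment of sink and source components, to hit the clean $\lceil m/2 \rceil$ threshold is the delicate step; detailed executions of exactly this type of argument are recorded in~\cite[Chapter~6]{Quantitative.Graph.Theory}, which is where I would draw the remaining bookkeeping from.
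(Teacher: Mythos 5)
Your first inequality is fine and is exactly the route the paper indicates: it proves $\bn(D)\le\wlink(D)$ by showing that a minimum hitting set of a bramble is well-linked, and your Menger/exchange argument with $S_1=(S\setminus X)\cup Z$ and $S_2=(S\setminus Y)\cup Z$ is a correct implementation of that sketch.

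The gap is in the second inequality, and it is the step you yourself flag as ``the delicate step''. The paper does not detour through havens: its (referenced) argument extracts a bramble of order $k$ directly from a well-linked set of order $4k+1$, so it only needs that every $(A,\lfloor m/2\rfloor)$-\balsep has size around $m/4$. Your route needs the much stronger claim that every $Z$ with $|Z|\le\lceil m/2\rceil-1$ leaves a strong component of $D\setminus Z$ containing a strict majority of $A$, and your split argument cannot deliver it: it only yields $|Z|\ge\max_j\min(W_j,W_\ell-W_j)$, and when a single ``middle'' component carries $\lfloor m/2\rfloor$ vertices of $A$ this maximum is about $(|A\setminus Z|-m/2)/2$, so the guarantee bottoms out near $m/6$, far below $m/2$; reversing $D$ adds nothing, since reversal does not change the strong components of $D\setminus Z$. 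Worse, the existence claim itself is false. For $m=8$, let $A_L,A_R$ be bidirected cliques on two vertices, $A_M$ a bidirected clique on four, add all arcs from $A_L$ to $A_M\cup A_R$ and from $A_M$ to $A_R$, and let $Z$ consist of two extra vertices with all arcs from $A_M\cup A_R$ into $Z$ and from $Z$ into $A_L\cup A_M$. One checks that $A=A_L\cup A_M\cup A_R$ is well-linked: for disjoint $X,Y\subseteq A$ with $|X|=|Y|$, a suitable matching of $X$ to $Y$ has at most $\max\bigl(|Y\cap A_L|,\,|X\cap A_R|\bigr)\le 2$ pairs that must be routed ``backwards'', and these can be realized as vertex-disjoint length-two paths through the two vertices of $Z$, while every remaining pair is a single forward arc. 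Yet the strong components of $D\setminus Z$ meet $A$ in $2,4,2$ vertices, so no majority component exists although $|Z|=2\le\lceil m/2\rceil-1=3$. Hence your $\beta$ is not well defined, the chain $\wlink(D)\le 2\hn(D)\le 4\bn(D)$ is not established (the bound $\wlink(D)\le 2\hn(D)$ appears neither in the paper nor in the cited chapter, which proves the $4k+1$-to-$k$ bramble extraction instead), and the missing ``bookkeeping'' cannot be drawn from the reference. To repair the proof you should follow the paper's sketch: extract from a well-linked set of order roughly $4k$ a bramble (e.g.\ the majority bramble over $A$) and show its hitting sets have size greater than $k$, a threshold your prefix/suffix Menger argument can actually reach.
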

The proof of Proposition~\ref{theorem:directed_tree_width} given in~\cite{Johnson.Robertson.Seymour.Thomas.01} yields an {\sf XP} algorithm that correctly states that $D$ has a haven of order $k$ or produces an arboreal decomposition of $D$ of width at most $3k-2$.
Furthermore, although not explicitly mentioned in the paper, this algorithm actually produces a nice (as in Definition~\ref{def:nice_arboreal_decompositions}) arboreal decomposition for $D$, and can be used as a procedure that, given a digraph $D'$ such that $\dtw(D') \leq k-2$, generates a nice arboreal decomposition for $D'$ of width at most $3k-2$.
At each iteration, the algorithm tests whether the strong components intersecting a given set $T \subseteq V(D)$ with $|T| \leq 2k-1$ can be separated into parts containing at most a small portion of $T$.
Namely, the algorithm tests whether there is a set $Z \subseteq V(D)$ with $|Z| \leq k-1$ such that every strong component of $D \setminus Z$ contains at most $k-1$ vertices of $T \setminus Z$.
Such a set $Z$ is known as a \emph{balanced separator}. 
In this paper we consider a generalization of such sets where we can choose how many vertices of $T$ each strong component of $D \setminus Z$ can have.
\begin{definition}[$(T,r)$-balanced separators and $(k,r)$-linked sets]\label{def:balanced_sep_k_linked_sets}
Let $D$ be a digraph, $T \subseteq V(D)$, and $r$ be a positive integer.
A \emph{$(T,r)$-balanced separator} is a set of vertices $Z \subseteq V(D)$ such that every strong component of $D \setminus Z$ contains at most $r$ vertices of $T$.
If the minimum size of a $(T, r)$-balanced separator is at least $k+1$, we say that $T$ is \emph{$(k,r)$-linked}.
\end{definition}
\noindent If $r = \lfloor |T|/2 \rfloor$, $(T, r)$-balanced separators are exactly $T$-balanced separators in the classical sense as defined, for instance, in~\cite[Chapter 9]{Classes.Directed.Graphs}.
If $D$ admits a $(T,r)$-\balsep $Z$, we know that we can split $T \setminus Z$ into small strongly connected parts which are guarded by $Z$.
See Figure~\ref{figure:balanced_separator} for two examples of $(T,r)$-\balsep{s}. 
\noindent A DAG, for instance, admits a $(T,1)$-\balsep (the empty set) for any $T \subseteq V(D)$ since every strong component of a DAG is formed by a single vertex.

\begin{figure}
\centering
\begin{subfigure}{.35\textwidth}
\scalebox{.56}{\begin{tikzpicture}

\foreach \x/\y/\name in {
  0/1.5/v1, 1/1.5/v2,
  0/0.5/v3, 1/0.5/v4,
  0/-0.5/v5, 1/-0.5/v6,
  0/-1.5/v7,
  2/0.5/v9, 2/-0.5/v10}
  \node[blackvertex, scale=.75] (P-\name) at (2*\x,1.5*\y) {};
\foreach \x/\y/\name in {
  0/1.5/v1, 1/1.5/v2,
  0/0.5/v3}
  \node[blackvertex, scale=.75] (P-\name) at (2*\x,1.5*\y) {};

\foreach \x/\y/\name in {
  0/-0.5/v5, 1/-0.5/v6,
  0/-1.5/v7}
  \node[vertex, scale=.75] (P-\name) at (2*\x,1.5*\y) {};

\node (P-a) at ($(P-v1) + (-.25, .25)$) {};
\node (P-b) at ($(2,-2.25) + (.25, -.25)$) {};
\node[draw, rounded corners, thick, fit= (P-a) (P-b), label = {\huge $T_1$}] {};
\node (P-a) at ($(P-v4) + (-.25, .25)$) {};
\node (P-b) at ($(P-v9) + (.25, -.25)$) {};
\node[draw, rounded corners, dashed, thick, fit= (P-a) (P-b), label=0:{\huge $Z$},pattern=north east lines,pattern color=black!50] {};

\foreach \x/\y in {
  v1/v3, v3/v5, v5/v7,
  v6/v4, v4/v2,
  v2/v1, v3/v4, v6/v5,
  v3/v2, v7/v6,
  v9/v2, v6/v10,
  v10/v9}
  \draw[-{Latex[length=2mm, width=2mm]}, line width = 1, shorten >= .1cm, shorten <= .1cm] (P-\x) -- (P-\y);
\end{tikzpicture}}%
\end{subfigure}\hspace{1cm}
\begin{subfigure}{.35\textwidth}
\scalebox{.8}{\begin{tikzpicture}

\foreach \x/\y/\name/\lpos in {
  0/0/v_1/90,
  -1/-1/v_2/180, 1./-1/v_3/0}
  \node[blackvertex, label=\lpos:{\large$\name$},scale=.5] (P-\name) at (\x,1.2*\y) {$\name$};

\node[blackvertex, scale= .5] (P-x12) at ($(P-v_1) + (-2.5,0)$) {};
\node[blackvertex, scale= .5] (P-x13) at ($(P-v_1) + (2.5,0)$) {};
\node[blackvertex, scale= .5] (P-x23) at ($(P-v_1) + (0,-2.5)$) {};
\node[draw, rectangle, label={\Large $T_2$}, text width = 3.3cm, text height = 2.3cm, dashed] (P-T) at (0,-.6) {};

\draw[-{Latex[length=2mm, width=2mm]}, line width = 1, shorten >= .1cm, shorten <= .1cm] (P-v_1) -- (P-x12);
\draw[-{Latex[length=2mm, width=2mm]}, line width = 1, shorten >= .1cm, shorten <= .1cm] (P-x12) -- (P-v_2);

\draw[-{Latex[length=2mm, width=2mm]}, line width = 1, shorten >= .1cm, shorten <= .1cm] (P-v_3) -- (P-x13);
\draw[-{Latex[length=2mm, width=2mm]}, line width = 1, shorten >= .1cm, shorten <= .1cm] (P-x13) -- (P-v_1);

\draw[-{Latex[length=2mm, width=2mm]}, line width = 1, shorten >= .1cm, shorten <= .1cm] (P-v_2) -- (P-x23);
\draw[-{Latex[length=2mm, width=2mm]}, line width = 1, shorten >= .1cm, shorten <= .1cm] (P-x23) -- (P-v_3);


\foreach \x/\y in {
v_1/v_2, v_2/v_3, v_3/v_1}
  \draw[edge, -{Latex[length=2mm, width=2mm]}, line width = 1, shorten >= .1cm, shorten <= .1cm] (P-\x) to (P-\y);

\end{tikzpicture}}%
\end{subfigure}
\caption{Examples of \balsep{s}. On the left, $Z$ is a $(T_1, 3)$-balanced separator, and $T_1$ is $(3, 3)$-linked. On the right, each vertex $v_i$ with $i \in [3]$ constitutes a $(T_2, 1)$-\balsep.}
\label{figure:balanced_separator}
\end{figure}
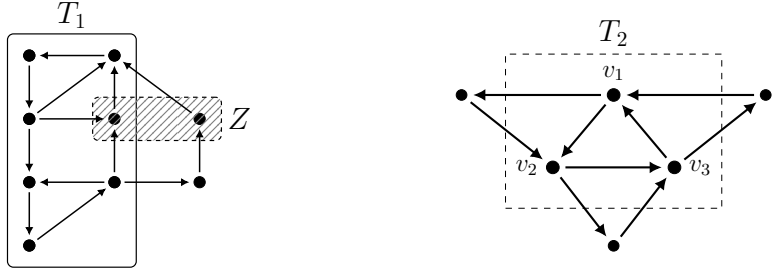

Deciding whether a digraph $D$ admits a $(T, k-1)$-balanced separator is a key ingredient for the algorithm given by Johnson et al.~\cite{Johnson.Robertson.Seymour.Thomas.01}.
Moreover, the cost of this procedure has the largest impact on the running time of their algorithm: it is the only step which is (originally) done in {\sf XP} time, while the remaining parts of the algorithm can be done in polynomial time.
In Section~\ref{section:FPT_arb_decomp}, we use of a variation of the \textsc{Multicut} problem introduced in~\cite{Erbacher.Jaeger.14} to show how to find $(T, r)$-balanced separators in {\sf FPT} time with parameter $|T|$, if any exists with size bounded from above by an integer $s$ with $s \leq |T| - 1$.
In our first main contribution, we use this result to improve on the algorithm for arboreal decompositions given in~\cite{Johnson.Robertson.Seymour.Thomas.01}.
Namely, we prove the following.
\begin{restatable}{theorem}{firstmainklinked}
\label{theorem:first_contribution_extended}
Let $D$ be a digraph and $k$ be a non-negative integer.
There is an algorithm running in time $\RunTimeAlg$ that either produces a nice arboreal decomposition of $D$ of width at most $3k-2$ or outputs a $(k-1, k-1)$-linked set $T$ with $T = 2k-1$. 
\end{restatable}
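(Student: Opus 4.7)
The plan is to revisit the recursive algorithm of Johnson et al.~\cite{Johnson.Robertson.Seymour.Thomas.01} underlying Proposition~\ref{theorem:directed_tree_width} and replace its sole \textsf{XP} ingredient --- the exhaustive search for a balanced separator --- by the \textsf{FPT} algorithm for \textsc{Balanced Separator} provided by Theorem~\ref{theorem:FPT_haven}. I will process subproblems of the form $(H, X, T)$, where $H$ is a strongly connected induced subdigraph of $D$, $X \subseteq V(D) \setminus V(H)$ is the guard inherited from the edge entering the current node of the arboreal decomposition being built (with $|X| \leq k-1$), and $T \subseteq V(H)$ is an \emph{active set} of at most $2k-1$ vertices whose balanced separation I want to control. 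The recursion is seeded by running one instance per strong component of $D$ under an empty-bag root, taking $X = \emptyset$ and $T$ equal to an arbitrary single vertex of the component.

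At each subproblem, I will invoke \textsc{Balanced Separator} on $H$ in search of a $(T, k-1)$-balanced separator $Z$ with $|Z| \leq k-1$. If such $Z$ is returned, I create a node $v_H$ with bag $W_{v_H}$ containing $Z$ (plus any vertices of $V(H)$ not contained in a strong component of $H \setminus Z$), use $X$ as the guard of the edge entering $v_H$, and for each strong component $C$ of $H \setminus Z$ produce a child whose incoming-edge guard is a minimum separator of size at most $k-1$ extracted from $X \cup Z$ via Menger's theorem (Theorem~\ref{thm:Menger}), recursing on $(C, X_C, T_C)$ with $T_C \supseteq T \cap V(C)$, padded by a fresh vertex of $C$ if $T \cap V(C) = \emptyset$. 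If no separator is found and $|T| < 2k-1$, I will augment $T$ following the selection rule of~\cite{Johnson.Robertson.Seymour.Thomas.01} and retry on the same $H$. If $|T| = 2k-1$ and no separator exists, the algorithm halts and outputs $T$, which is $(k-1,k-1)$-linked by definition, as the second alternative of the theorem.

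Since every bag lies in $T \cup Z$, its size will be at most $(2k-1)+(k-1) = 3k-2$, and every guard has size at most $k-1$, so the width will be at most $3k-2$; the niceness conditions (iii)--(iv) hold because the recursion branches exactly on the strong components of $H \setminus Z$ and all separator vertices are absorbed by the parent bag. For the running time, the recursion tree has at most $n$ internal nodes (each successful separator strictly shrinks $|V(H)|$ across recursive calls), and the inner loop that grows $T$ executes at most $2k-1$ times per node; each iteration calls \textsc{Balanced Separator} with parameter $|T| \leq 2k-1$, costing $\RunTimeAlg$ by Theorem~\ref{theorem:FPT_haven}, yielding the claimed total. The main technical obstacle will be guard management: after placing $Z$ in the current bag, the union $X \cup Z$ may have size up to $2k-2$, so for each child I must compress it to at most $k-1$ via Menger while preserving the invariants that $X_C$ genuinely guards $V(C)$ and that an eventual $(k-1,k-1)$-linked certificate lives in the correct subdigraph.
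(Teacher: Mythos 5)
Your high-level plan---rerun the construction of Johnson et al.\ and replace its exhaustive balanced-separator search by the \textsf{FPT} algorithm of Theorem~\ref{theorem:FPT_haven}---is exactly the paper's strategy, but your bookkeeping introduces a genuine gap. The critical problem is the invariant that every guard has size at most $k-1$, obtained by ``compressing'' $X \cup Z$ via Menger's theorem. This compression is unjustified and in general impossible: the guard of the edge to a child must guard the union of all bags that will eventually lie below that edge, and its minimum size can be forced up to $|X \cup Z|$, for instance when there are $2k-2$ walks leaving the child's vertex set and returning to it that are vertex-disjoint outside of it, each passing through a different vertex of $X \cup Z$; every such walk must be hit, so no guard of size $k-1$ exists. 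Menger's theorem bounds minimum $(A,B)$-separators, not guards, and nothing in the construction prevents this situation. The paper never needs such a compression: its invariant (P2) only requires $|X_e| \le 2k-1$, the child guards are taken to be $Z \cup (V(C_{f(i)}) \cap T)$ (with $|Z| \le k$ after adding one extra vertex of $W_{r_0}$, and $|V(C_{f(i)}) \cap T| \le k-1$ because $Z'$ is a $(T,k-1)$-balanced separator), and width $3k-2$ still follows because the bag of the split node together with \emph{all} incident guards is contained in $T \cup Z$, of size at most $3k-1$. Your own width accounting (``bags of size $3k-2$ plus guards of size $k-1$'') would not give width $3k-2$ anyway, since the width of a node counts the bag together with the union of the guards of all incident edges.

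Two further places where the argument is not yet a proof. First, you recurse inside a subdigraph $H$ with a growing ``active set'' $T$ and an unspecified ``selection rule''; the paper instead simply sets $T = X_{e_0}$, the guard of the edge entering the oversized leaf (so $|T| \le 2k-1$ automatically and one \textsc{Balanced Separator} call suffices per split), and guarantees progress by adding one vertex of $W_{r_0} \setminus Z'$ to $Z'$, so the oversized bag strictly shrinks at each iteration. Without a concrete growth rule, your termination claim fails, e.g.\ while $|T| \le k-1$ the empty set is a valid separator and the recursion on the strong components of $H \setminus \emptyset$ makes no progress. Second, niceness condition (iii) requires the union of bags below an edge $e$ to induce a strong component of $D \setminus X_e$ in the whole digraph $D$; with a compressed guard $X_C \subsetneq X \cup Z$ this is not established (a strong component of $H \setminus Z$ need not be one of $D \setminus X_C$), whereas with the paper's guard $Z \cup (V(C_{f(i)}) \cap T)$ it is immediate. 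Finally, the $(k-1,k-1)$-linkedness certificate must hold in $D$, not merely in $H$; this does follow by restricting any small balanced separator of $D$ to $H$, but you flag the issue without resolving it, whereas the paper avoids it entirely by always searching for the separator in $D$ itself.
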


It is also not hard to see how to use $(k,r)$-linked sets to construct havens.
The following lemma is a generalization of a result shown as part of the proof of~\cite[3.3]{Johnson.Robertson.Seymour.Thomas.01}.
\begin{lemma}\label{claim:haven_instance}
Let $D$ be a graph, $T \subseteq V(D)$ with $|T| = s$, and $r \geq \lfloor s/2 \rfloor$.
If $T$ is $(k, r)$-linked then $D$ admits a haven of order $k+1$.
\end{lemma}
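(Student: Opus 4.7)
The plan is to define $\beta$ directly from the data: for each $Z \subseteq V(D)$ with $|Z| \leq k$, let $\beta(Z)$ be the vertex set of a strong component of $D \setminus Z$ containing strictly more than $r$ vertices of $T$. I would first argue existence, then uniqueness (which gives well-definedness), and finally the monotonicity required by Definition~\ref{def:havens_in_digraphs}.

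\textbf{Existence.} Since $T$ is $(k,r)$-linked, every $(T,r)$-balanced separator has size at least $k+1$. Hence no $Z$ with $|Z| \leq k$ is a $(T,r)$-balanced separator, which by Definition~\ref{def:balanced_sep_k_linked_sets} means that some strong component of $D \setminus Z$ contains more than $r$ vertices of $T$. So a candidate for $\beta(Z)$ always exists.

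\textbf{Uniqueness.} Here is where the assumption $r \geq \lfloor s/2 \rfloor$ comes in. Suppose towards a contradiction that two distinct strong components $C_1, C_2$ of $D \setminus Z$ each contain at least $r+1$ vertices of $T$. Since $C_1$ and $C_2$ are disjoint, the number of vertices of $T$ they cover together is at least $2(r+1) \geq 2\lfloor s/2 \rfloor + 2 \geq s + 1$, contradicting $|T| = s$. (Use that $2\lfloor s/2 \rfloor \geq s-1$ in both parities.) So $\beta(Z)$ is the unique strong component of $D \setminus Z$ with more than $r$ vertices of $T$.

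\textbf{Monotonicity.} Let $Z' \subseteq Z$ with $|Z| \leq k$, and set $C = \beta(Z)$ and $C' = \beta(Z')$. Since $Z' \subseteq Z$, the set $C$ is a strongly connected subgraph of $D \setminus Z'$, so $C$ is contained in some strong component $C''$ of $D \setminus Z'$. The $(>r)$ vertices of $T$ witnessing $C = \beta(Z)$ lie in $T \setminus Z \subseteq T \setminus Z'$ and sit inside $C \subseteq C''$, so $C''$ also contains more than $r$ vertices of $T$. By the uniqueness established above (applied to $Z'$), we must have $C'' = C'$, and therefore $\beta(Z) = C \subseteq C'' = C' = \beta(Z')$, as required.

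The proof is essentially a bookkeeping argument; the only mildly delicate point is the floor inequality in the uniqueness step, where one must verify that $r \geq \lfloor s/2 \rfloor$ (rather than, say, $r > s/2$) is already enough to forbid two ``large'' components, which works because the strict inequality ``more than $r$'' costs an extra $+2$ when doubled.
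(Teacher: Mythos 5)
Your proof is correct and follows essentially the same route as the paper's: define $\beta(Z)$ as (the vertex set of) a strong component of $D \setminus Z$ meeting $T$ in more than $r$ vertices, whose existence follows from $(k,r)$-linkedness, and use the counting bound $2(r+1) \geq 2\lfloor s/2\rfloor + 2 > s$ together with the fact that $\beta(Z)$ stays strongly connected in $D \setminus Z'$ to get $\beta(Z) \subseteq \beta(Z')$. The only cosmetic difference is that you package the counting as a uniqueness statement identifying the component $C''$ with $\beta(Z')$, while the paper packages the same count as the intersection $\beta(Z) \cap \beta(Z') \neq \emptyset$ and then absorbs $\beta(Z)$ into the strong component $\beta(Z')$.
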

\begin{proof}
By hypothesis, it holds that, for every set $Z \subseteq V(D)$ with $|Z| \leq k$, there is a strong component $C$ of $D \setminus Z$ such that $|V(C) \cap T| \geq r+1$.
Let $\beta(Z) = V(C)$.
We claim that $\beta$ is a haven of order $k+1$ in $D$.
It suffices to show that if $Z' \subseteq Z$, then $\beta(Z) \subseteq \beta(Z')$.
Notice that $\beta(Z)$ induces a strongly connected subgraph of $D$ and is disjoint from $Z'$, since it is disjoint from $Z$, and thus all paths in the graph induced by $\beta(Z)$ are in $D \setminus Z'$.
Furthermore, since $|T| = s$ and $r \geq \lfloor s/2 \rfloor$, we have $\beta(Z) \cap \beta(Z') \neq \emptyset$ and the result follows as $\beta(Z')$ is a strong component of $D \setminus Z'$, which is a supergraph of $D \setminus Z$, and thus it must contain completely the strongly connected subgraph induced by $\beta(Z)$.
\end{proof}
Applying this lemma on a $(k-1, k-1)$-linked set $T$ with $|T| = 2k-1$ we obtain a haven of order $k$ and therefore we can write Theorem~\ref{theorem:first_contribution_extended} with havens instead of $(k, r)$-linked sets, as done by Johnson et al.~\cite[3.3]{Johnson.Robertson.Seymour.Thomas.01}, with the guarantee that the procedure runs in \textsf{FPT} time.
\begin{restatable}[First main contribution]{theorem}{firstmain}
\label{theorem:first_contribution_intro}
Let $D$ be a digraph and $k$ be a non-negative integer.
There is an algorithm running in time $\RunTimeAlg$ that correctly states that $D$ admits a haven of order $k$ or produces an arboreal decomposition of $D$ of width at most $3k -2$.
\end{restatable}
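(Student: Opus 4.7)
The plan is to derive Theorem~\ref{theorem:first_contribution_intro} as a direct corollary of Theorem~\ref{theorem:first_contribution_extended} combined with Lemma~\ref{claim:haven_instance}. Given the input digraph $D$ and the integer $k$, I would first invoke Theorem~\ref{theorem:first_contribution_extended} on $(D,k)$. Within time $2^{O(k \log k)} \cdot n^{O(1)}$, this algorithm produces one of two possible outputs: either a nice arboreal decomposition of $D$ of width at most $3k-2$, in which case we are already done and simply return that decomposition, or a $(k-1,k-1)$-linked set $T \subseteq V(D)$ with $|T| = 2k-1$.

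In the second case, the plan is to feed $T$ into Lemma~\ref{claim:haven_instance}. Here we need to check that the parameters meet the lemma's hypothesis: we take $s = |T| = 2k-1$ and $r = k-1$, and observe that $\lfloor s/2 \rfloor = \lfloor (2k-1)/2 \rfloor = k-1 = r$, so the condition $r \geq \lfloor s/2 \rfloor$ holds. Lemma~\ref{claim:haven_instance} then guarantees that $D$ admits a haven of order $r+1 = k$, and the algorithm can correctly report this. Note that the lemma's argument is fully constructive: the haven $\beta$ is defined by sending each $Z \subseteq V(D)$ with $|Z| \leq k-1$ to the vertex set of the unique strong component of $D \setminus Z$ that meets $T$ in at least $k$ vertices, so even an explicit oracle for $\beta$ can be produced on demand in polynomial time from $T$.

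The total running time is dominated by the call to Theorem~\ref{theorem:first_contribution_extended}, which is $2^{O(k\log k)} \cdot n^{O(1)}$ as required; the post-processing through Lemma~\ref{claim:haven_instance} is merely a structural argument (or, if one wants the haven explicitly, a polynomial-time computation per queried set $Z$). Consequently, the only real obstacle in the entire proof is encapsulated in Theorem~\ref{theorem:first_contribution_extended}, whose \textsf{FPT} guarantee in turn rests on solving \textsc{Balanced Separator} in \textsf{FPT} time with parameter $|T|$ via Theorem~\ref{theorem:FPT_haven}. Since that hard work is carried out in Section~\ref{sec:FPT-arboreal}, the proof of Theorem~\ref{theorem:first_contribution_intro} itself reduces to the two-line reduction described above.
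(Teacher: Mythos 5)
Your proposal is correct and follows essentially the same route as the paper's own proof: apply Theorem~\ref{theorem:first_contribution_extended}, return the decomposition if one is produced, and otherwise feed the $(k-1,k-1)$-linked set $T$ with $|T|=2k-1$ into Lemma~\ref{claim:haven_instance} (with $r=k-1\geq\lfloor|T|/2\rfloor$) to conclude that $D$ has a haven of order $k$, with the running time dominated by the first call. Your added remark that the haven can be produced as an explicit polynomial-time oracle (the component meeting $T$ in at least $k$ vertices being unique since $|T|=2k-1$) is a harmless strengthening, not a deviation.
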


Next, we discuss some of the steps in the proof of the Directed Grid Theorem.
\subsection{Brambles and the Directed Grid Theorem}\label{section:brambles_DGT}
The Directed Grid Theorem is as stated below.
\begin{theorem}[Kawarabayashi and Kreutzer~\cite{Kawarabayashi:2015:DGT:2746539.2746586}]\label{theorem:directed_grid_theorem}
There is a function $f: \mathbb{N} \to \mathbb{N}$ such that given any directed graph and any fixed constant $k$, in polynomial time, we can obtain either
\begin{enumerate}
  \item an arboreal decomposition of $D$ of width at most $f(k)$, or
  \item a cylindrical grid of order $k$ as a butterfly minor of $D$.
\end{enumerate}
\end{theorem}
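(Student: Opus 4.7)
The plan is to chain together the tools developed earlier in the excerpt. Given $k$, choose a parameter $k' = g(k)$ to be determined at the end, and run the approximation algorithm for directed tree-width (Proposition~\ref{theorem:directed_tree_width}) on $D$ with parameter $k'$. Either one obtains an arboreal decomposition of $D$ of width at most $3k'-2$, which furnishes outcome~(1) with $f(k) := 3g(k)-2$, or one obtains a haven of order $k'$ in $D$. In the latter case, Lemma~\ref{lemma:haven_bramble_relation} converts the haven into a bramble $\mathcal{B}$ of order at least $k'/2$, and a further processing step produces a directed path $P$ in $D$ that is a hitting set of $\mathcal{B}$ together with a well-linked set $A \subseteq V(P)$ of order roughly $\sqrt{k'}$ sitting on $P$. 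This is the prerequisite structure for the grid-building phase.

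From the pair $(P, A)$ one then builds the cylindrical grid as a butterfly minor. The high-level idea is to exploit well-linkedness to route many pairwise internally disjoint directed paths between chosen subsegments of $A$ via Menger's Theorem (Theorem~\ref{thm:Menger}); stitching such routings together produces a first closed directed walk visiting many points of $A$, from which a first cycle $C_1$ of the grid is carved. Iterating the procedure inside the subdigraph obtained by deleting a controlled portion of the already used vertices yields the nested cycles $C_2,\ldots,C_k$, while the $2k$ orthogonal paths $P_1,\ldots,P_{2k}$ with alternating orientations are produced by further applications of the well-linkedness of $A$, again via Menger. Since each butterfly contraction preserves the separation structure, provided every contracted edge is the unique out-edge of its tail or unique in-edge of its head, this yields a bona fide butterfly minor.

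The principal obstacle is controlling orientations when concatenating routings: unlike in the undirected Grid Theorem, a directed path in $D$ between two vertices of $A$ need not be reversible, so one must ensure that the routings chosen at the cycle-construction stage are \emph{compatible} with those chosen for the orthogonal paths. Kawarabayashi and Kreutzer resolve this with a delicate cleaning procedure that, whenever an obstruction to compatibility is detected, removes an irrelevant portion of $D$ and re-invokes the earlier tools on the reduced instance, the bramble shrinking by a bounded amount at each iteration. Choosing $g(k)$ as a large enough composition of the polynomial losses incurred at the bramble-to-path step, the path-to-grid step, and the cleaning iterations then yields a function $f$ for which the conclusion holds, with overall running time polynomial in $|V(D)|$ for each fixed~$k$. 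The technical heart, and the most intricate step, is certifying that the cleaning process terminates with a bramble still large enough to support the final grid construction.
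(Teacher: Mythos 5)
Your first half tracks the actual route: run the approximation algorithm behind Proposition~\ref{theorem:directed_tree_width} with a larger parameter, and in the non-decomposition branch convert the haven into a bramble (Lemma~\ref{lemma:haven_bramble_relation}) and then into a path $P$ hitting the bramble together with a well-linked set $A \subseteq V(P)$ of order roughly the square root of the bramble order (this is exactly Propositions~\ref{lemma:directed_grid_path_hitting_set} and~\ref{lemma:directed_grid_bramble_path}, and the XP running time for fixed $k$ is fine there because the bramble built from a haven has size $n^{\Ocal(k)}$). The genuine gap is the passage from $(P,A)$ to the cylindrical grid, which is precisely the technical heart of Kawarabayashi and Kreutzer's proof and which your sketch replaces by a strategy that does not work as stated. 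You propose to carve the concentric cycles $C_1,\dots,C_k$ one at a time by Menger routings between segments of $A$, deleting ``a controlled portion of the already used vertices'' and re-invoking well-linkedness on the rest. But well-linkedness of $A$ is a property of the whole digraph $D$: the disjoint paths certifying it may pass through the vertices you delete, so after deletion the surviving part of $A$ need not be well-linked at all, and nothing guarantees you can still route the next cycle or the $2k$ orthogonal paths. Likewise, separate applications of Menger for different pairs of subsets of $A$ give linkages that can intersect one another arbitrarily; making them mutually compatible (and compatible in orientation) is exactly the difficulty, and invoking an unspecified ``cleaning procedure'' with a bramble that ``shrinks by a bounded amount'' is not an argument --- no termination or loss bound is established, and this is not how the actual proof handles it.

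The real argument, as outlined in Section~\ref{section:finding_a_cylindrical_grid}, goes from $(P,A)$ to an $\ell$-linked \emph{path system}, then to a \emph{web}, then to a \emph{fence}, and only then extracts the cylindrical grid, using Ramsey-type selection and rerouting arguments at each stage rather than iterative deletion; this chain of intermediate structures is what controls the orientation and intersection problems you correctly identify as the principal obstacle. Since the statement you are proving is itself the cited Theorem~\ref{theorem:directed_grid_theorem} of Kawarabayashi and Kreutzer, a proof proposal must either reproduce (or correctly summarize) that path-system/web/fence pipeline or supply a genuinely new argument for it; as written, your step from the well-linked set to the grid is a placeholder where the theorem's actual content should be, so the proposal is incomplete at its crucial step.
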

\noindent The  proof of the Directed Grid Theorem~\cite{Kawarabayashi:2015:DGT:2746539.2746586} starts by asking if a digraph $D$ satisfies $\dtw(D) \leq f(k)$, for some integer $k$.
By Theorem~\ref{theorem:first_contribution_intro}, an approximate answer to this question can be computed in {\sf FPT} time with parameter $k \geq 0$.
If a haven is obtained, the next step uses it to construct a bramble of large order.
In order to justify our following results, we now discuss how to construct brambles from havens.

Finding a hitting set of minimum size of a bramble $\mathcal{B}$ is not an easy task.
In general, in order to check whether a given set $X$ is a hitting set of $\mathcal{B}$, the naive approach would be to go through all the elements of $\mathcal{B}$ and verify that $X$ intersects each of them.
Since a bramble $\mathcal{B}$ may contain $\Omega(2^n)$ elements, independently of its order,
this procedure is not efficient. 
For instance, consider the digraph $D$ shown in Figure~\ref{figure:bramble_example}, which has vertex set $\{v_0, v_1, \ldots, v_n\}$ and edge set $\{(v_0, v_i) \cup (v_i, v_0) \mid i \in [n] \}$.
The set $\mathcal{B} = \{D[X] \mid X \subseteq V(D) \text{ and }  v_0 \in X\}$ is easily seen to be a bramble in $D$ of order one and size $2^{|V(D)|-1}$ since there is an edge in $D$ from every vertex in $V(D) \setminus \{v_0\}$ to $v_0$ and vice-versa.
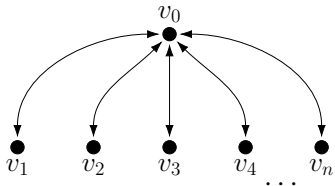
\begin{figure}[h!]
\centering
\scalebox{.5}{\begin{tikzpicture}
\node[blackvertex,label=above:{\huge $v_0$}] (P-u) at (2*3,0) {};
\foreach \x in {1, ..., 4}{
  \node[blackvertex,label=below:{\huge $v_{\x}$}] (P-v\x) at (2*\x,-3) {};
  %
}
\node[blackvertex,label=below:{\huge $v_{n}$}] (P-vn) at (2*5,-3) {};
\node[scale=2] (P-c) at (2*4.5,-4) {$\cdots$};
\draw[{Latex[length=3mm, width=2mm]}-{Latex[length=3mm, width=2mm]}, shorten >= .1cm, shorten <= .1cm, out = 180, in = 90] (P-u) to  (P-v1);
\draw[{Latex[length=3mm, width=2mm]}-{Latex[length=3mm, width=2mm]}, shorten >= .1cm, shorten <= .1cm, out = 225, in = 90] (P-u) to  (P-v2);

\draw[{Latex[length=3mm, width=2mm]}-{Latex[length=3mm, width=2mm]}, shorten >= .1cm, shorten <= .1cm] (P-u) to (P-v3);

\draw[{Latex[length=3mm, width=2mm]}-{Latex[length=3mm, width=2mm]}, shorten >= .1cm, shorten <= .1cm, out = 315, in = 90] (P-u) to  (P-v4);
\draw[{Latex[length=3mm, width=2mm]}-{Latex[length=3mm, width=2mm]}, shorten >= .1cm, shorten <= .1cm, out = 0, in = 90] (P-u) to  (P-vn);
\end{tikzpicture}}%
\caption{Example of a digraph $D$ having a bramble of order one and size $2^{|V(D)|-1}$. Here a bidirectional edge is used to represent a pair of edges in both directions.}
\label{figure:bramble_example}
\end{figure}
However, when $\mathcal{B}$ is the bramble obtained by a construction used in a proof of Lemma~\ref{lemma:haven_bramble_relation}, which we present below, then $|\mathcal{B}| = n^{\Ocal(k)}$ and thus in this case we can find hitting sets of $\mathcal{B}$ of size $k$ in \textsf{XP} time, and decide whether a given set $X \subseteq V(D)$ is a hitting set of $\mathcal{B}$ in \textsf{XP} time.

Lemma~\ref{lemma:haven_bramble_relation} implies that if $D$ is a digraph admitting a haven of order $k+1$, then $D$ contains a bramble of order at least $\lceil(k+1)/2 \rceil = \lfloor k/2\rfloor + 1$.
In fact, given such a haven, it is easy to construct the claimed bramble, as we proceed to explain.
Namely, given a haven $\beta$ of order $k+1$ in $D$, we define
$\mathcal{B} = \{D[\beta(Z)] \mid Z \subseteq V(D) \text{ and } |Z| \leq \lfloor k/2 \rfloor\}$. Note that, since $\beta$ is a haven, the elements of $\mathcal{B}$ are strongly connected subgraphs of $D$. We claim that any two elements of $\mathcal{B}$ intersect.
Indeed, let $B,B' \in \mathcal{B}$ and let $Z,Z' \subseteq V(D)$ such that $\beta(Z) = V(B)$ and $\beta(Z') = V(B')$.
Since $|Z|\leq \lfloor k/2 \rfloor$ and $|Z'|\leq \lfloor k/2 \rfloor$, we have that $|Z \cup Z'| \leq k$, and since $\beta$ is a haven of order $k+1$, it follows that $\beta (Z \cup Z') \subseteq \beta(Z) \cap \beta(Z') = V(B) \cap V(B')$ and therefore, in particular, $V(B) \cap V(B') \neq \emptyset$. Finally, let us argue about the order of $\mathcal{B}$. Consider an arbitrary vertex set $X \subseteq V(D)$ with $|X| \leq \lfloor k/2 \rfloor$. Since $\beta$ is a haven or order $k+1 \geq \lfloor k/2 \rfloor$, there is a bramble element $\beta(X) \in \mathcal{B}$ with $V(\beta(X)) \cap X = \emptyset$, and thus $\ord(\mathcal{B}) \geq \lfloor k/2\rfloor + 1$, as we wanted to prove.
Moreover, since there is one element in $\mathcal{B}$ for each $Z \subseteq V(D)$ with $|Z| \leq \lfloor k/2 \rfloor$, we conclude that $|\mathcal{B}| = n^{\Ocal(k)}$.
In~\cite{Kawarabayashi:2015:DGT:2746539.2746586}, the authors show how to obtain, from a bramble $\mathcal{B}$ of order $k(k+2)$, a path $P$ that is a hitting set of $\mathcal{B}$ containing a well-linked set $A$ of size $k$.
\begin{proposition}[Kawarabayashi and Kreutzer~{\cite[Lemma 4.3 of the full version]{Kawarabayashi:2015:DGT:2746539.2746586}}]
\label{lemma:directed_grid_path_hitting_set}
Let $D$ be a digraph and $\mathcal{B}$ be a bramble in $D$.
Then there is a path $P$ intersecting every $B \in \mathcal{B}$.
\end{proposition}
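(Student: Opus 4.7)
The plan is to prove this by an extremal argument exploiting the defining property of brambles together with the strong connectivity of each bramble element. Let $P = v_1 v_2 \ldots v_p$ be a path in $D$ that hits the maximum possible number of elements of $\mathcal{B}$ and, subject to that, uses the fewest vertices. We aim to show that $V(P)$ hits every element of $\mathcal{B}$. Suppose for contradiction that some $B^{\star} \in \mathcal{B}$ satisfies $V(P) \cap V(B^{\star}) = \emptyset$; the goal is then to construct a path hitting strictly more bramble elements than $P$, which contradicts maximality.

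By the minimality of $|V(P)|$, both endpoints $v_1$ and $v_p$ must lie in some bramble element already hit by $P$: otherwise, deleting the offending endpoint would yield a strictly shorter path with the same number of hits. Fix $B_p \in \mathcal{B}$ with $v_p \in V(B_p)$ and apply the bramble axiom to the pair $B^{\star}, B_p$. Either $V(B^{\star}) \cap V(B_p) \neq \emptyset$, in which case any vertex $x$ of this intersection lies outside $V(P)$ because $B^{\star}$ is unhit; or there is an edge in $D$ from $V(B_p)$ to $V(B^{\star})$. In either case, using the strong connectivity of $B_p$, we construct a walk $W$ starting at $v_p$, entirely contained in $V(B_p) \cup V(B^{\star})$, and ending at some vertex $y \in V(B^{\star}) \setminus V(P)$.

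The main obstacle, and the technical heart of the proof, is that the concatenation $P \cdot W$ is in general only a walk, since $W$ may revisit vertices of $P$ that happen to lie in $V(B_p)$. The plan is to shortcut: let $v_j$ be the vertex of $P$ with largest index $j$ that also appears on $W$, and form $P^{\star}$ by following $P$ from $v_1$ to $v_j$ and then following $W$ from its last occurrence of $v_j$ up to $y$. This yields an honest path ending in $V(B^{\star})$, so $P^{\star}$ hits $B^{\star}$. The delicate step is to verify that $P^{\star}$ still hits every bramble element previously hit by $P$, despite removing the suffix $v_{j+1}, \ldots, v_p$. The strategy here is to use the minimality of $|V(P)|$ together with the bramble axiom: if some $B' \in \mathcal{B}$ were uniquely represented in $V(P)$ by a discarded vertex, then applying the bramble axiom to the pair $B', B_p$ and exploiting the strong connectivity of $B_p$ should allow us to reroute the walk $W$ so as to pass through $V(B')$, either contradicting the minimality of $P$ or directly producing a path with strictly more hits than $P$. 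I expect this rerouting argument — ensuring that no previously hit element is lost in the splice — to be the most intricate part of the proof, and the point at which the full strength of the bramble axiom (edges in both directions when the elements are disjoint) will be used.
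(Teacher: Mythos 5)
There is a genuine gap, and it is exactly the step you yourself flag as the ``most intricate part'': you never prove that the spliced path $P^{\star}$ still hits every bramble element previously hit by $P$. Your extremal choice of $P$ (maximum number of hit elements, then minimum number of vertices) gives you no control over how $P$ meets $B_p$ -- the path may intersect $V(B_p)$ in many vertices, so the walk $W$ through $B_p$ can loop back into $P$, and after you cut off the suffix $v_{j+1},\ldots,v_p$ every element that was represented only on that suffix is lost. The proposed repair, ``reroute $W$ through $V(B')$ using the bramble axiom,'' does not obviously work: forcing $W$ through a further element $B'$ takes the walk outside $V(B_p)\cup V(B^{\star})$, reintroduces exactly the same revisiting problem with respect to $P$ (and with respect to the other uniquely-represented elements on the discarded suffix), and you give no monotonicity or termination argument showing this recursion ever produces a path with strictly more hits. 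As stated, the contradiction with maximality is not established. (A smaller technical slip: even the splice itself need not be a path, since after the last occurrence of $v_j$ the walk $W$ can still pass through $P$-vertices of smaller index; one must splice at the \emph{last} visit of $W$ to $V(P)$ and then shortcut the remaining walk, not at the largest-index $P$-vertex on $W$.)

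The argument of Kawarabayashi and Kreutzer~\cite{Kawarabayashi:2015:DGT:2746539.2746586}, which this paper adapts in Lemma~\ref{lemma:bramble_path_polytime}, avoids all of this by abandoning the extremal set-up in favour of a greedy growth with an explicit invariant: one maintains a path $P_i$ together with a ``current'' element $B_i\in\mathcal{B}$ such that $V(P_i)\cap V(B_i)$ is exactly the last vertex $v_i$ of $P_i$. If $P_i$ is not yet a hitting set, there is an element $B_{i+1}$ \emph{disjoint from} $V(P_i)$; the connector from $v_i$ to $B_{i+1}$ is taken inside $D[V(B_i)\cup V(B_{i+1})]$ (using strong connectivity of $B_i$ and the bramble axiom), and it cannot revisit $V(P_i)\setminus\{v_i\}$ precisely because $B_{i+1}$ misses $P_i$ and $B_i$ meets it only in $v_i$; one stops the connector at its first vertex of $B_{i+1}$ to restore the invariant. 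The path only ever grows, so previously hit elements are never lost and the number of hit elements strictly increases at each step. Your proof is missing this invariant (or a proven substitute for it), and without it the key step collapses.
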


\begin{restatable}[Kawarabayashi and Kreutzer~{\cite[Lemma 4.4 of the full version]{Kawarabayashi:2015:DGT:2746539.2746586}}]{proposition}{LemmaPathSplittingDGT}
\label{lemma:directed_grid_bramble_path}
Let $D$ be a digraph, $\mathcal{B}$ be a bramble of order $k(k+2)$ in $D$, and $P = P(\mathcal{B})$ be a path intersecting every $B \in \mathcal{B}$. Then there is a set $A \subseteq V(P)$ of size $k$ which is well-linked.
\end{restatable}
Although the statements of the previous two propositions in~\cite{Kawarabayashi:2015:DGT:2746539.2746586} are not algorithmic, algorithms for both results can be extracted from their constructive proofs.
However, the naive approach to decide if a set $X \subseteq V(D)$ is a hitting set of a bramble $\mathcal{B}$ is to check if $V(B) \cap X \neq \emptyset$ for each $B \in \mathcal{B}$.
Thus the running time of the algorithms yielded by the proofs of Propositions~\ref{lemma:directed_grid_path_hitting_set} and~\ref{lemma:directed_grid_bramble_path} is influenced by the size of the bramble given as input.
Although in general this is not efficient since, as discussed above, a bramble can have size $\Omega(2^n)$ even if it has small order, in the particular case where $\mathcal{B}$ is the bramble constructed from havens as presented above, those constructions yield \textsf{XP} algorithms with parameter $k$ since $|\mathcal{B}| = n^{\Ocal(k)}$.


In Section~\ref{section:cyd_grid_fpt} we show that, when considering a particular choice of a bramble $\mathcal{B}$ which is constructed from $(k, r)$-linked sets, for appropriate choices of $k$ and $r$, we can decide if a given set $X$ is a hitting set of $\mathcal{B}$ in polynomial time and compute hitting sets of $\mathcal{B}$ in {\sf FPT} time when parameterized by $\ord(\mathcal{B})$.
Then, we show how to obtain a path $P$ intersecting all elements of $\mathcal{B}$ in polynomial time, improving Proposition~\ref{lemma:directed_grid_path_hitting_set}.
We use this latter result to give an {\sf FPT} algorithm with parameter $\ord(\mathcal{B})$ that produces, from a path $P$ intersecting all elements of a bramble of large order, a well-linked set $A$ of size $k$ which is contained in $V(P)$.
\begin{restatable}[Second main contribution]{theorem}{secondmain}
\label{theorem:path_sec_contribution}
Let $g(k) = (k+1)(\lfloor k/2 \rfloor +1) - 1$, $D$ be a digraph and $T$ be a $(g(k) - 1, g(k) - 1)$-linked set in $D$ with $|T| = 2g(k)-1$.
There is an algorithm running in time $2^{\Ocal(k^2 \log k)}\cdot n^{\Ocal(1)}$ that finds in $D$ a bramble $\mathcal{B}$ of order $g(k)$, a path $P$ that is a hitting set of $\mathcal{B}$, and a well-linked set $A$ of order $k$ such that $A \subseteq V(P)$.
\end{restatable}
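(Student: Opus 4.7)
The plan is to proceed in three stages, closely following the roadmap sketched in the introduction for Section~\ref{sec:finding_P_and_A}. First, starting from the $(g(k)-1, g(k)-1)$-linked set $T$ of size $2g(k)-1$, I would construct the bramble $\mathcal{B}_T$ described in Section~\ref{sec:brambles-large-dtw}. By Lemma~\ref{lemma:BT_order_hitting_set}, hitting sets of $\mathcal{B}_T$ correspond exactly to $(T, g(k)-1)$-balanced separators, and the $(g(k)-1, g(k)-1)$-linkedness of $T$ immediately gives $\order(\mathcal{B}_T) \geq g(k)$. Crucially, this characterization allows one to test in polynomial time whether a given $X \subseteq V(D)$ hits every element of $\mathcal{B}_T$: simply compute the strong components of $D \setminus X$ and check that each meets $T$ in at most $g(k)-1$ vertices.

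Second, I would build a path $P$ hitting $\mathcal{B}_T$ by invoking Lemma~\ref{lemma:bramble_path_polytime}: starting from a suitably chosen vertex, iteratively extend the current path by one out-neighbor and, after each extension, test in polynomial time whether the current vertex set is already a hitting set of $\mathcal{B}_T$. Since $V(D)$ is certainly a hitting set, this process terminates after at most $n$ steps and outputs the desired path $P$ in polynomial time. The correctness of this growth strategy relies precisely on the polynomial-time membership test enabled by Lemma~\ref{lemma:BT_order_hitting_set}, which is the improvement over Proposition~\ref{lemma:directed_grid_path_hitting_set} where one would naively scan through all elements of the bramble.

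Third, and this is the hardest step, I need to extract a well-linked set $A \subseteq V(P)$ of order $k$. Here I would use the notion of $(i)$-split from Definition~\ref{definition:i-splits} together with Lemmas~\ref{lemma:split_implies_well_linked} and~\ref{lemma:split_iteration}. The strategy is to split $P$ iteratively into an ordered sequence of $k+1$ subpaths, each of which is itself a hitting set of a carefully chosen sub-bramble of $\mathcal{B}_T$ of order $\lfloor k/2 \rfloor + 1$, and then to pick the $k$ vertices of $A$ from the vertices lying between consecutive subpaths. To carry out each iteration, one must decide whether a candidate sub-bramble (again described through a $(T, r)$-balanced separator condition with a smaller value of $r$) still has the required order; by Theorem~\ref{theorem:FPT_haven} this test runs in $2^{\Ocal(|T| \log |T|)} \cdot n^{\Ocal(1)}$ time, which, since $|T| = 2g(k)-1 = \Theta(k^2)$, yields the claimed overall bound of $2^{\Ocal(k^2 \log k)} \cdot n^{\Ocal(1)}$. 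Once the $k+1$ subpaths are in place, Lemma~\ref{lemma:split_implies_well_linked} delivers the well-linkedness of $A$ via a Menger-type argument exploiting the fact that each subpath hits a sub-bramble of order $\lfloor k/2 \rfloor + 1$.

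The main obstacle is arranging the $(i)$-split iteration so that the sub-brambles retain adequate order throughout all $k$ splitting steps; this is precisely where the choice $g(k) = (k+1)(\lfloor k/2 \rfloor + 1) - 1$ is forced, since each split consumes $\lfloor k/2 \rfloor + 1$ units of order from $\mathcal{B}_T$, and after $k+1$ splits we need the bound $g(k)$ to remain large enough for Lemma~\ref{lemma:split_iteration} to apply. Carefully accounting for this budget, and verifying that the sub-bramble characterization through $(T, r)$-balanced separators propagates correctly through the iteration, constitutes the delicate part of the argument.
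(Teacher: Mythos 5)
Your proposal is correct and follows essentially the same route as the paper: build the $T$-bramble $\mathcal{B}_T$ of order $g(k)$ via Lemma~\ref{lemma:BT_order_hitting_set}, grow the hitting path $P$ via Lemma~\ref{lemma:bramble_path_polytime}, and then iterate Lemma~\ref{lemma:split_iteration} from the trivial $(0)$-split up to a $(k)$-split, invoking Lemma~\ref{lemma:split_implies_well_linked} to get well-linkedness of $\{a_1,\ldots,a_k\}$. Only cosmetic slips remain (the sub-brambles $\mathcal{B}_j$ are required to have order at least $\lfloor k/2\rfloor$, not $\lfloor k/2\rfloor+1$, and one performs $k$ splitting iterations, each consuming $\lfloor k/2\rfloor+1$ units of order), which do not affect the argument.
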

The request that we make on $\ord(\mathcal{B})$ is also an improvement when compared to Proposition~\ref{lemma:directed_grid_bramble_path}.
In the next section we give an overview of how a cylindrical grid is found in~\cite{Kawarabayashi:2015:DGT:2746539.2746586} from the output of Theorem~\ref{theorem:path_sec_contribution}.
We discuss why the algorithms used in the remaining constructive steps of their proof are naturally \textsf{FPT} to obtain the following corollary, which is an improvement of Theorem~\ref{theorem:directed_grid_theorem}.

\begin{corollary}\label{theorem:second_contribution}
Let $k$ be a non-negative integer and $D$ be a digraph. There is a function $f: \mathbb{N} \to \mathbb{N}$ and an {\sf FPT} algorithm, with parameter $k$, that either
\begin{enumerate}
  \item produces an arboreal decomposition of $D$ of width at most $f(k)$, or
  \item finds a cylindrical grid of order $k$ as a butterfly minor of $D$.
\end{enumerate}
\end{corollary}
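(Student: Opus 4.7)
The plan is to compose Theorem~\ref{theorem:first_contribution_extended}, Theorem~\ref{theorem:path_sec_contribution}, and the remainder of the Kawarabayashi--Kreutzer construction as a pipeline, exactly as depicted in Figure~\ref{fig:automata}. Let $h : \mathbb{N} \to \mathbb{N}$ be the function implicit in the second half of~\cite{Kawarabayashi:2015:DGT:2746539.2746586} for which, given a digraph $D$, a path $P$ in $D$, and a well-linked set $A \subseteq V(P)$ of order $h(k)$, there is an \textsf{FPT} algorithm with parameter $k$ that outputs a cylindrical grid of order $k$ as a butterfly minor of $D$. As claimed in~\cite{Kawarabayashi:2015:DGT:2746539.2746586} and sketched in Section~\ref{section:finding_a_cylindrical_grid}, the constructive proof of this last step is already \textsf{FPT} once $P$ and $A$ are in hand, and does not need to be re-engineered here. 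I would then simply set $f(k) := 3\,g(h(k)) - 2$.

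Given $(D,k)$, the algorithm first invokes Theorem~\ref{theorem:first_contribution_extended} on $D$ with parameter $g(h(k))$, which runs in time $2^{\mathcal{O}(g(h(k))\,\log g(h(k)))} \cdot n^{\mathcal{O}(1)}$. If a nice arboreal decomposition is returned, its width is at most $3\,g(h(k)) - 2 = f(k)$, and we output it as case~(1). Otherwise, we obtain a $(g(h(k))-1,\, g(h(k))-1)$-linked set $T$ with $|T| = 2g(h(k))-1$, which matches exactly the hypothesis of Theorem~\ref{theorem:path_sec_contribution}. Feeding $T$ into that theorem with parameter $h(k)$ produces, in time $2^{\mathcal{O}(h(k)^2 \log h(k))} \cdot n^{\mathcal{O}(1)}$, a bramble $\mathcal{B}$ of order $g(h(k))$, a path $P$ whose vertex set is a hitting set of $\mathcal{B}$, and a well-linked set $A \subseteq V(P)$ of order $h(k)$. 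Finally, we invoke the grid-extraction procedure of~\cite{Kawarabayashi:2015:DGT:2746539.2746586} on $(D,P,A)$, which by choice of $h$ returns a cylindrical grid of order $k$ as a butterfly minor of $D$, giving case~(2).

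Each of the three stages is \textsf{FPT} with parameter $k$ and the composition is merely bookkeeping, so the overall algorithm is \textsf{FPT}; the chain of parameters $k \mapsto h(k) \mapsto g(h(k))$ only affects the dependence of $f$ on $k$, not the asymptotic form of the running time. The main obstacle, in my view, is not the composition itself but verifying the premise used above: namely, that once the path $P$ and the well-linked set $A$ are provided, the remainder of the proof in~\cite{Kawarabayashi:2015:DGT:2746539.2746586} can indeed be executed in \textsf{FPT} time and not merely in \textsf{XP}. Establishing this amounts to walking through the later sections of the Kawarabayashi--Kreutzer argument and checking that each subroutine (finding linkages of the right type, routing inside the bramble, butterfly-contracting to a cylindrical grid) uses only polynomial-time primitives whose multiplicative \textsf{FPT} overhead depends on $k$ alone; this is the verification I would carry out in Section~\ref{section:finding_a_cylindrical_grid}, and it is the step that justifies the corollary beyond the mere assembly of the pipeline.
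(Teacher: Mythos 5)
Your proposal is correct and follows essentially the same route as the paper: the paper also obtains Corollary~\ref{theorem:second_contribution} by pipelining Theorem~\ref{theorem:first_contribution_extended} (invoked with the parameter dictated by the later stages of Kawarabayashi--Kreutzer) with Theorem~\ref{theorem:path_sec_contribution}, and then running the remaining constructive steps of~\cite{Kawarabayashi:2015:DGT:2746539.2746586} on the resulting path $P$ and well-linked set $A$. The verification you flag as the real remaining work --- that the path-system, web, and fence refinements are \textsf{FPT} rather than merely \textsf{XP} --- is exactly what the paper addresses (only as a sketch) in Section~\ref{section:finding_a_cylindrical_grid}, by observing that these structures consist of collections of paths whose number depends only on the parameter and whose defining properties are checkable in polynomial time.
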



\subsection{Finding a cylindrical grid}\label{section:finding_a_cylindrical_grid}
On a very high level, the proof of the Directed Grid Theorem~\cite{Kawarabayashi:2015:DGT:2746539.2746586} can be summarized into the following three steps.
Using the terminology adopted in this paper, for a function $f$ as in the statement of Theorem~\ref{theorem:directed_grid_theorem} and given a digraph $D$, we
\smallskip
\begin{itemize}
  \item [(1)] pipeline Theorem~\ref{theorem:first_contribution_extended} and Theorem~\ref{theorem:path_sec_contribution} to either produce an arboreal decomposition of $D$ of width at most $f(k)$ or construct $\mathcal{B}$, $P$, and $A$ as in the statement of the latter;
  \item[(2)] use $P$ and $A$ to construct a well-linked \emph{path system} that is formed by a collection of paths; and 
  \item[(3)] iteratively refine the paths in the path system into new structures until a (butterfly) model of a cylindrical grid is obtained.
\end{itemize}
\smallskip
\begin{figure}[h!]
\centering
\scalebox{.8}{\begin{tikzpicture}[shorten >=1pt,->]]
\node[rectangle,draw, scale=1,align=center] (P-a) at (3*0,0) {$D,f(k)$};
\node[rectangle,draw,scale=1,align=center] (P-b) at (4*1.5,0) {$(f(k)-1, f(k)-1)$-\\linked set $T$};
\node[rectangle,draw,scale=1,align=center] (P-c) at (4*1.5,-3.5) {Path $P$,\\well-linked set $A \subseteq V(P)$};
\node[rectangle,draw,scale=1,text width=2.5cm,align=center] (P-d) at (4*3,-3.5) {Path system};
\node[scale=1.3,align=center] (P-e) at (4*3,0) {\scalebox{.25}{\usebox{\cydgrid}}};
\node[rectangle,draw,scale=1,align=center] (P-f) at (3*0,-3.5) {Arboreal decomposition\\of width $\leq f(k)$};

\draw[thick] (P-a) -- (P-b) node[pos=.5] (P-p2) {};
\draw[thick] (P-a) -- (P-f) node[pos=.5] (P-p1) {};
\draw[thick] (P-a) -- (P-f) node[pos=.5, xshift = 2.5cm, align=center] (P-p3) {Theorem \ref{theorem:first_contribution_extended}\\(1)};

\draw[dotted] (P-p3) -- (P-p1);
\draw[dotted] (P-p3) -- (P-p2);

\node[rectangle, draw, scale=1, align=center] (P-r) at ($(P-d) + (0,1.25)$) {Refinement};
\node (P-s3) at ($(P-r) + (1.5,0)$) {(3)};
\draw[thick, dashed, -] (P-d) -- (P-r);
\draw[thick, dashed] (P-r) -- (P-e);
\phantom{\draw[thick] (P-b) -- (P-c) node[pos=.5] (P-x) {};}
\phantom{\draw[thick] (P-c) -- (P-d) node[pos=.5] (P-y) {};}
\draw[thick] (P-b) -- (P-c) node[pos=.5, xshift = 2cm, align=center] (P-x1) {Theorem~\ref{theorem:path_sec_contribution}\\(2)};
\draw[thick] (P-c) -- (P-d); 
\draw[dotted] (P-x1) -- (P-x);

\end{tikzpicture}}%
\caption{Illustration of steps (1)-(2)-(3).}
\label{fig:automata_steps_finding_grid}
\end{figure}
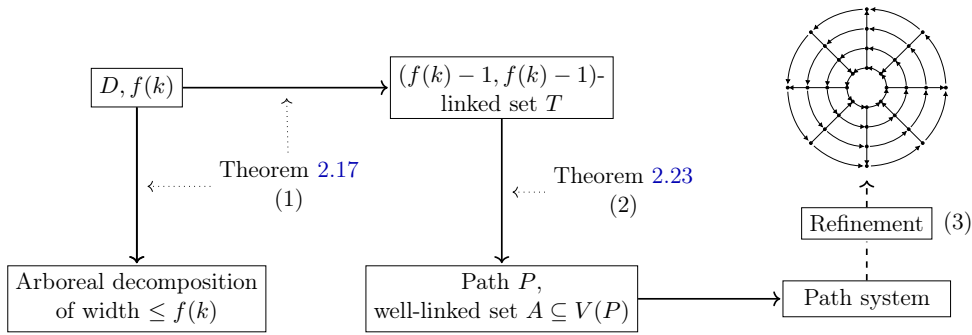
See Figure~\ref{fig:automata_steps_finding_grid} for an illustration of those steps.
As previously mentioned, we only improve on the procedures related to step~(1) and, in this section, we justify why this is sufficient to obtain Corollary~\ref{theorem:second_contribution}.
The main observations are that the algorithm runs maintaining and refining a collection of paths, where the size of the collection depends only on $k$, and that each of those refinements can be realized by iteratively testing how a given path intersects some subset of the collection.
The number of tests depends only on $k$ and each test is done in polynomial time.
We discuss here how to construct a path system from $P$ and $A$, as mentioned in step~(2) above.
For our examples, it is convenient to adopt the following definitions  from the full version of~\cite{Kawarabayashi:2015:DGT:2746539.2746586}.
\begin{definition}[Linkages]
Let $D$ be a digraph and $A,B \subseteq V(D)$ with $A \neq B$.
A \emph{linkage from $A$ to $B$} in $D$, or an \emph{$(A, B)$-linkage}, is a set of of pairwise vertex-disjoint paths from $A$ to $B$.
\end{definition}
\begin{definition}[Path system]
Let $D$ be a digraph and $\ell, p$ be two positive integers.
An \emph{$\ell$-linked path system of order $p$} is a sequence $\mathcal{S}$ with $\mathcal{S} = (\mathcal{P}, \mathcal{L}, \mathcal{A})$ where

\begin{itemize}
\item $\mathcal{P}$ is a sequence $P_1, \ldots, P_p$ of pairwise vertex-disjoint paths such that, for all $i \in [p]$, $V(P_i) \supseteq A_i^{\textsf{\emph{in}}} \cup A_i^{\textsf{\emph{out}}}$ and every vertex in $A_i^{\textsf{\emph{in}}}$ appears in $P_i$ before any vertex of $A_i^{\textsf{\emph{out}}}$; 

\item $\mathcal{L}$ is a collection $\{L_{i,j} \mid i,j \in [p] \text{ with } i \neq j\}$ of linkages where each $L_{i,j}$ is a linkage of size $\ell$ from $A_i^{\textsf{\emph{out}}}$ to $A_j^{\textsf{\emph{in}}}$; and

\item $\mathcal{A} = \{A_i^{\textsf{\emph{in}}}, A_i^{\textsf{\emph{out}}} \mid i \in [p]\}$ where each $A_i^{\textsf{\emph{in}}}$ and each $A_i^{\textsf{\emph{out}}}$ is a well-linked set of order $\ell$;
\end{itemize}
\end{definition}
Although the definition of path systems is quite loaded, it is not hard to visualize; see Figure~\ref{fig:path_system} for an illustration. 
\begin{figure}[h!]
\centering
\scalebox{.6}{
\begin{tikzpicture}[scale=.8]

\node[blackvertex, scale=.5] (u1) at (0,0) {};
\node[blackvertex, scale=.5] (u0) at ($(u1) + (0,6)$) {};

\node[blackvertex, scale=.5] (v0) at (12,0) {};
\node[blackvertex, scale=.5] (v1) at ($(v0) + (0,6)$) {};

\node[blackvertex, scale=.5] (w0) at (3,-2) {};
\node[blackvertex, scale=.5] (w1) at ($(w0) + (6,0)$) {};

\draw[-{Latex[length=3mm, width=2mm]}, shorten >= .1cm] (u0) -- (u1) node [midway] (um) {} node [pos = .25] (u0um) {} node [pos = .75] (umu1) {};
\draw ($(um) + (-0.5, 0)$) -- ($(um) + (0.5, 0)$);
\node (Ain) at ($(u0um) + (-1, 0)$) {\Large$A_1^{\textsf{in}}$};
\node (Ain) at ($(umu1) + (-1, 0)$) {\Large$A_1^{\textsf{out}}$};

\draw[-{Latex[length=3mm, width=2mm]}, shorten >= .1cm] (v0) -- (v1) node [midway] (vm) {} node [pos = .25] (v0vm) {} node [pos = .75] (vmv1) {};
\draw ($(vm) + (-0.5, 0)$) -- ($(vm) + (0.5, 0)$);
\node (Ain) at ($(v0vm) + (1, 0)$) {\Large$A_3^{\textsf{in}}$};
\node (Ain) at ($(vmv1) + (1, 0)$) {\Large$A_3^{\textsf{out}}$};

\draw[-{Latex[length=3mm, width=2mm]}, shorten >= .1cm] (w0) -- (w1) node [midway] (wm) {} node [pos = .25] (w0wm) {} node [pos = .75] (wmw1) {};
\draw ($(wm) + (0, -0.5)$) -- ($(wm) + (0, 0.5)$);
\node (Ain) at ($(w0wm) + (0, -1)$) {\Large$A_2^{\textsf{in}}$};
\node (Ain) at ($(wmw1) + (0, -1)$) {\Large$A_2^{\textsf{out}}$};

\draw[-{Latex[length=3mm, width=6mm]}, line width=4, shorten >= .1cm] ($(umu1) + (0,.5)$) -- ($(v0vm) + (0, .5)$);
\draw[-{Latex[length=3mm, width=6mm]}, line width=4, shorten >= .1cm] (umu1) -- (w0wm);

\draw[-{Latex[length=3mm, width=6mm]}, line width=4, shorten >= .1cm] ($(vmv1) + (0, .5)$) -- ($(u0um) + (0,.5)$);
\draw[-{Latex[length=3mm, width=6mm]}, line width=4, shorten >= .1cm] (vmv1) -- (w0wm);

\draw[-{Latex[length=3mm, width=6mm]}, line width=4, shorten >= .1cm] (wmw1) -- (u0um);
\draw[-{Latex[length=3mm, width=6mm]}, line width=4, shorten >= .1cm] (wmw1) -- (v0vm);

\end{tikzpicture}%
}
\caption{An $\ell$-linked path system of order $p =3$. A thick edge denotes a linkage of size $\ell$ from a set $A_i^{\text{out}}$ to a set $A_j^{\text{in}}$, with $i \neq j$.}
\label{fig:path_system}
\end{figure}
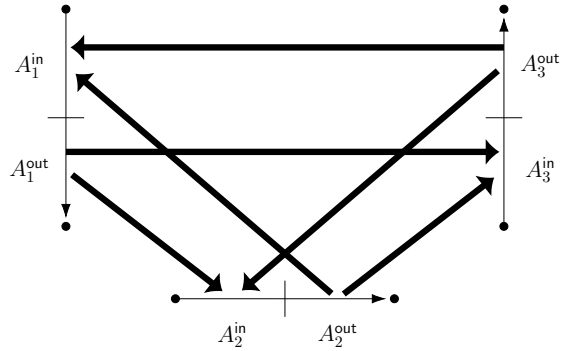
Notice that, knowing that the sets $A^i_{\textsf{in}}, A^i_{\textsf{out}}$ are well-linked, a path system is entirely formed by paths behaving in a particular way: the collection $\mathcal{P}$ of size $p$, and the collection of paths appearing in the linkages $L_{i,j}$.
Since each of those linkages has size $\ell$, an $\ell$-linked path system of order $p$ is formed by $p + 2\binom{p}{2}\ell$ paths.
With this observation, the task of constructing a path system from the output of Theorem~\ref{theorem:path_sec_contribution} becomes an easy one, as we proceed to explain.

Assume that we are given a path $P$ and a well-linked set $A$ with $|A| = 2\ell \cdot p$ and $A \subseteq V(P)$.
Let $\sigma = a_1, a_2, \ldots, a_{2\ell\cdot p}$ be an ordering of the vertices of $A$ as they appear in $P$, from the first to the last vertex of the path.
To construct an $\ell$-linked path system of order $p$, we follow $P$ in this order and, for $i \in [p]$, we define the path $P_i$ to be the subpath of $P$ from $a_{(i-1)2\ell+1}$ to $a_{i\cdot2\ell}$.
See Figure~\ref{fig:constructing_path_system} for an illustration of this procedure.
Since we know that $A$ is well-linked, and clearly every subset of a well-linked set is also well-linked, we define $A_i^{\textsf{in}}$ to be the set containing the first $\ell$ vertices of $V(P_i) \cap A$ and $A_i^{\textsf{out}}$ to be last $\ell$ vertices of $V(P_i) \cap A$ with respect to $\sigma$.

\begin{figure}[h!]
\centering
\scalebox{.45}{
\begin{tikzpicture}

\node[blackvertex, scale=.8, label={\huge $a_1$}] (a1) at (0,0) {};
\node[blackvertex, scale=.8, label={\huge $a_{2\ell+1}$}] (a2) at (6,0) {};
\node[blackvertex, scale=.8, label={\huge $a_{4\ell+1}$}] (a3) at (12,0) {};
\node[blackvertex, scale=.8, label={\huge $a_{6\ell+1}$}] (a4) at (18,0) {};
\node (ax) at ($(a4) + (1.5, 0)$) {\Huge$\cdots$};

\foreach \i in {1,...,3}{
  \node[blackvertex, scale=.5] (P-\i) at ($(a1) + (\i *1.5, 0)$) {};
}
\foreach \i in {1,...,3}{
  \node[blackvertex, scale=.5] (P-\i) at ($(a2) + (\i *1.5, 0)$) {};
}
\foreach \i in {1,...,3}{
  \node[blackvertex, scale=.5] (P-\i) at ($(a3) + (\i *1.5, 0)$) {};
}

\draw[-{Latex[length=3mm, width=2mm]}, shorten >= .1cm] (a1) -- (a2);
\draw[-{Latex[length=3mm, width=2mm]}, shorten >= .1cm] (a2) -- (a3);
\draw[-{Latex[length=3mm, width=2mm]}, shorten >= .1cm] (a3) -- (a4);
\draw[-{Latex[length=3mm, width=2mm]}, shorten >= .1cm] (a4) -- (ax);

\draw [decorate,decoration={brace,amplitude=10pt, mirror},xshift=0pt,yshift=0]
($(a1) + (0,-.75)$) -- ($(a2) + (-1.5,-.75)$) node [black,midway,yshift=-.85cm] {\Huge $P_1$};

\draw [decorate,decoration={brace,amplitude=10pt, mirror},xshift=0pt,yshift=0]
($(a2) + (0,-.75)$) -- ($(a3) + (-1.5,-.75)$) node [black,midway,yshift=-.85cm] {\Huge $P_2$};

\draw [decorate,decoration={brace,amplitude=10pt, mirror},xshift=0pt,yshift=0]
($(a3) + (0,-.75)$) -- ($(a4) + (-1.5,-.75)$) node [black,midway,yshift=-.85cm] {\Huge $P_3$};
\end{tikzpicture}%
}%
\caption{Finding the paths $P_i$ from $P$ and $A$, for $i \in [p]$.}
\label{fig:constructing_path_system}
\end{figure}
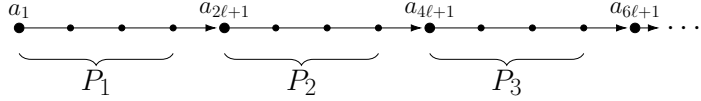

Next, for $i,j \in [p]$ with $i \neq j$, we choose $L_{i,j}$ to be a linkage from $A_i^{\textsf{out}}$ to $A_j^{\textsf{in}}$.
At least one choice for $L_{i,j}$ is guaranteed to exist because $A_i^{\textsf{in}} \cup A_j^{\textsf{out}} \subseteq A$ and $A$ is well-linked.
Moreover, we can find each linkage in polynomial time by applying Menger's Theorem (cf. Theorem~\ref{thm:Menger}) and solving a flow problem.
Hence, given $P$ and $A$ of adequate size, we can find an $\ell$-linked path system of order $p$ in polynomial time.

As in Figure~\ref{fig:grid_in_system}, it is easy to find a cylindrical grid in a sufficiently large path system if it is ``well-behaved'', that is, when the paths in $\mathcal{L}$ are pairwise internally vertex-disjoint. 
In fact, in such cases every $P_i$ models one vertex of a \emph{biclique} that is a butterfly minor of $D$.
A biclique is a digraph $H$ having a pair of edges in both directions between any two vertices of $H$.
Clearly, a biclique with $2k^2$ vertices contains a cylindrical grid of order $k$.

\begin{figure}[h!]
\centering
\begin{subfigure}{.6\textwidth}
\centering
\scalebox{.4}{\begin{tikzpicture}[scale=.8, yscale=.8]

\node[blackvertex, scale=.75] (u0) at (0,0) {};
\node[blackvertex, scale=.75] (u1) at ($(u0) + (7,0)$) {};
\node[blackvertex, scale=.75] (u2) at ($(u1) + (2,0)$) {};
\node[blackvertex, scale=.75] (u3) at ($(u2) + (7,0)$) {};

\node[blackvertex, scale=.75] (v0) at (0,8) {};
\node[blackvertex, scale=.75] (v1) at ($(v0) + (7,0)$) {};
\node[blackvertex, scale=.75] (v2) at ($(v1) + (2,0)$) {};
\node[blackvertex, scale=.75] (v3) at ($(v2) + (7,0)$) {};

\draw[-{Latex[length=3mm, width=2mm]}, shorten >= .1cm, densely dashed, thick] (u0) -- (u1) node [midway] (u0u1) {} node [pos = .25,label = 270:{\huge$1'$},blackvertex,scale=.35] (uin1) {} node [pos = .75,label = 270:{\huge$2'$},blackvertex,scale=.35] (uout1) {};
\node (in) at ($(uin1) + (-1, -2)$) {\huge in};
\node (out) at ($(uout1) + (1, -2)$) {\huge out};

\draw[-{Latex[length=3mm, width=2mm]}, shorten >= .1cm, densely dashed, thick] (u2) -- (u3) node [midway] (u2u3) {} node [pos = .25,label = 270:{\huge$3'$},blackvertex,scale=.35] (uin2) {} node [pos = .75,label = 270:{\huge$4'$},blackvertex,scale=.35] (uout2) {};
\node (in) at ($(uin2) + (-1, -2)$) {\huge in};
\node (out) at ($(uout2) + (1, -2)$) {\huge out};

\draw ($(u0u1) + (0, -0.5)$) -- ($(u0u1) + (0, 0.5)$);
\draw ($(u2u3) + (0, -0.5)$) -- ($(u2u3) + (0, 0.5)$);

\draw[-{Latex[length=3mm, width=2mm]}, shorten >= .1cm, densely dashed, thick] (v0) -- (v1) node [midway] (v0v1) {} node [pos = .25,label = 90:{\huge$2$},blackvertex,scale=.35] (vin1) {} node [pos = .75,label = 90:{\huge$3$},blackvertex,scale=.35] (vout1) {};
\node (in) at ($(vin1) + (-1, 2)$) {\huge in};
\node (out) at ($(vout1) + (1, 2)$) {\huge out};
\draw[-{Latex[length=3mm, width=2mm]}, shorten >= .1cm, densely dashed, thick] (v2) -- (v3) node [midway] (v2v3) {} node [pos = .25,label = 90:{\huge$4$},blackvertex,scale=.35] (vin2) {} node [pos = .75,label = 90:{\huge$1$},blackvertex,scale=.35] (vout2) {};
\node (in) at ($(vin2) + (-1, 2)$) {\huge in};
\node (out) at ($(vout2) + (1, 2)$) {\huge out};

\draw ($(v0v1) + (0, -0.5)$) -- ($(v0v1) + (0, 0.5)$);
\draw ($(v2v3) + (0, -0.5)$) -- ($(v2v3) + (0, 0.5)$);

\draw[-{Latex[length=3mm, width=2mm]}, shorten >= .1cm] (uout1) to [out=135, in = -90] (vin1);
\draw[-{Latex[length=3mm, width=2mm]}, shorten >= .1cm] (vout1) to [out = -45, in = 90] (uin2);
\draw[-{Latex[length=3mm, width=2mm]}, shorten >= .1cm] (uout2) to [out = 135, in = -90] (vin2);
\draw[-{Latex[length=3mm, width=2mm]}, shorten >= .1cm] (vout2) to [out = -90, in = 90] (uin1);

\path[densely dashed, thick]
  (v3) edge ($(v3) + (0,3)$)
  ($(v3) + (0,3)$) edge ($(v0) + (0,3)$)
  ($(v0) + (0,3)$) edge [-{Latex[length=3mm, width=2mm]}]  (v0);

\path[densely dashed, thick]
  (u3) edge ($(u3) + (0,-3)$)
  ($(u3) + (0,-3)$) edge ($(u0) + (0,-3)$)
  ($(u0) + (0,-3)$) edge [-{Latex[length=3mm, width=2mm]}]  (u0);

\draw[-{Latex[length=3mm, width=2mm]},densely dashed, thick] (v1) to [out = 45, in = 135] (v2);
\draw[-{Latex[length=3mm, width=2mm]},densely dashed, thick] (u1) to [out = 45, in = 135] (u2);
\end{tikzpicture}}%
\end{subfigure}\hspace{1cm}
\begin{subfigure}{.3\textwidth}
\scalebox{.45}{\begin{tikzpicture}

\def \n {4}
\def \margin {3}
\def \radiusinner {1.35cm}
\foreach \i/\l in {1/180, 2/270, 3/0, \n/90}
{
  \setcounter{c}{\i};
  \def \vname {\alph{c}};

  \node[blackvertex,vertex,scale=.4, label=\l:{\huge$\i$}] (P-\vname) at ({360/\n * (\i - 1)}:\radiusinner) {};
  \draw[-{Latex[length=2mm, width=2mm]}, densely dashed, thick] ({360/\n * (\i - 1)+\margin}:\radiusinner)
  arc ({360/\n * (\i - 1)+\margin}:{360/\n * (\i)-\margin}:\radiusinner);
}

\def \radiusinner {3.3cm}

\foreach \i/\l in {1/0, 2/90, 3/180, \n/270}
{
  \setcounter{c}{\i};
  \def \vname {\Alph{c}};

  \node[blackvertex,vertex,scale=.4, label=\l:{\huge$\i'$}] (P-\vname) at ({360/\n * (\i - 1)}:\radiusinner) {};
  \draw[-{Latex[length=2mm, width=2mm]}, densely dashed, thick] ({360/\n * (\i - 1)+\margin}:\radiusinner)
  arc ({360/\n * (\i - 1)+\margin}:{360/\n * (\i)-\margin}:\radiusinner);
}

\draw[-{Latex[length=2mm, width=2mm]}] (P-a) to (P-A);
\draw[-{Latex[length=2mm, width=2mm]}] (P-B) to (P-b);
\draw[-{Latex[length=2mm, width=2mm]}] (P-c) to (P-C);
\draw[-{Latex[length=2mm, width=2mm]}] (P-D) to (P-d);

\end{tikzpicture}}%
\end{subfigure}
\caption{An example of cylindrical grid of order two in a ``well-behaved'' path system, where we assume that the paths in $\mathcal{L}$ are pairwise internally vertex-disjoint.} 
\label{fig:grid_in_system}
\end{figure}
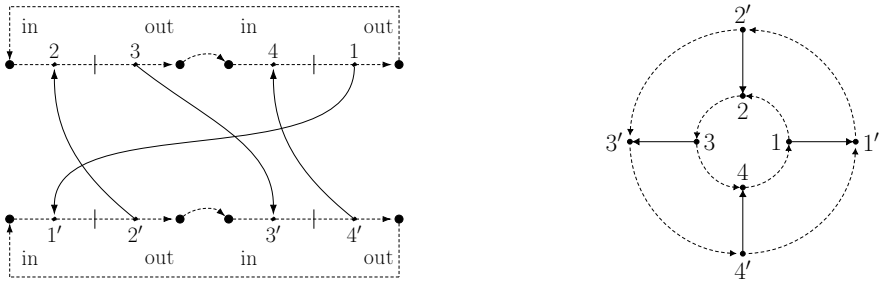

Unfortunately, in general we cannot expect every path system to behave in this way.
Hence, the proof of the Directed Grid Theorem by Kawarabayashi and Kreutzer~\cite{Kawarabayashi:2015:DGT:2746539.2746586} follows a sequence of refinements, as mentioned in item $(3)$ above, each constructing a new structure from the previous one until a cylindrical grid is obtained.
This part is represented by the dashed edges in Figure~\ref{fig:automata} and Figure~\ref{fig:automata_steps_finding_grid} and, although it is not hard to see that the algorithms realizing those constructions are naturally \textsf{FPT}, the constructive proofs are in fact the largest and most involved part of their paper.
Namely, they show how to find a \emph{web}\footnote{\label{footnote:webs_and_fences}The definitions of \emph{webs} and \emph{fences} can be found in the full version of~\cite{Kawarabayashi:2015:DGT:2746539.2746586}.} or a cylindrical grid from a path system that is sufficiently large.
If a web is obtained, then the next step is to find a \emph{fence}\cref{footnote:webs_and_fences} in it.
Lastly, they prove that we are guaranteed to find a cylindrical grid of order $k$ in any sufficiently large fence.

Fortunately, and as it is the case with path systems, webs and fences are defined around collections of paths satisfying some properties that can be easily verified in polynomial time.
Since the number of paths in a $\ell$-linked path system of order $p$ depends only on $\ell$ and $p$, we can search for a web in a path system by testing the defining properties of webs for every subset of the set of paths in the path system.
Thus, in \textsf{FPT} time with parameters $\ell$ and $p$ we can find a web in a path system.
A similar approach is viable to find fences in webs and cylindrical grids in fences and thus Corollary~\ref{theorem:second_contribution} follows from Theorem~\ref{theorem:path_sec_contribution}.


\section{Balanced separators and arboreal decompositions}
\label{sec:FPT-arboreal}

The algorithm for arboreal decompositions given in~\cite{Johnson.Robertson.Seymour.Thomas.01} starts with a trivial decomposition $(\{r\}, \emptyset, \{W_r\})$ whose underlying arborescence contains only one vertex $r$. Thus, $W_r = V(G)$.
Each iteration splits the vertices contained in an excessively large leaf of the current decomposition, if one exists, into a set of new leaves, while guaranteeing that the width of the non-leaf vertices remains bounded from above by a function of $k$.
Although this problem is not explicitly named by the authors, on each of those split operations the algorithm has to decide whether the input digraph admits a $(T, r)$-balanced separator for a given set $T$.
Formally, on each iteration the need to solve a particular case of the following problem.

\begin{boxproblem}[framed]{\textsc{Balanced Separator}}\label{problem:partitioning_set}
Input: & A digraph $D$, a set $T \subseteq V(D)$ of size $k$, and two non-negative integers $r$ and $s$.\\
Output: & A $(T,r)$-\balsep $Z$ with $|Z| \leq s$, if it exists.
\end{boxproblem}

The \textsc{Balanced Separator} problem can be naively solved by checking all $\binom{n}{s}$ sets $Z$ of size $s$ in $V(D)$ and enumerating the strong components of $D \setminus Z$.
Therefore it is in {\sf XP} with parameter $s$.
Furthermore, the process of finding balanced separators is the only step of the algorithm given in~\cite{Johnson.Robertson.Seymour.Thomas.01} that is done in \textsf{XP} time.
In the next section, we show how to compute $(T,r)$-\balsep{s} in \textsf{FPT} time with parameter $k$.
In particular, we show that a set $Z$ is a $(T,r)$-\balsep if and only if $Z$ is a solution to a separation problem introduced in~\cite{Erbacher.Jaeger.14} that is a particular case of the \textsc{Multicut} problem in digraphs.
Then, we use this result to improve the algorithm by Johnson et al.~\cite{Johnson.Robertson.Seymour.Thomas.01} for approximate arboreal decompositions (cf. Proposition~\ref{theorem:directed_tree_width}), showing that it can be done in {\sf FPT} time.
Notice that we can assume that $r \leq k - 1$ and $s \leq k - r - 1$: if $r \geq k$, the empty set is a $(T,r)$-\balsep and, if $s \geq k - r$, any choice of $s$ vertices from $T$ form a $(T,r)$-\balsep.
To avoid repetition, we make these considerations here and refrain from repeating them in the remainder of this article.
We refer to instances of \textsc{Balanced Separator} as $(D, T, k, r, s)$.

\subsection{Computing \texorpdfstring{$(T,r)$-\balsep{s}}{(T,r)-Partitioning sets} in {\sf FPT} time}
Given a graph or digraph  $D$ and a set of pairs of terminal vertices $\{(s_1, t_1)$, $(s_2, t_2)$, $\ldots, (s_k, t_k)\}$, the \textsc{Multicut} problem asks to minimize the size of a set $Z \subseteq V(D)$ such that there is no path from $s_i$ to $t_i$ in $D \setminus Z$, for $i \in [k]$.
When parameterized by the size of the solution, the problem is {\sf FPT} in undirected graphs~\cite{doi:10.1137/110855247,doi:10.1137/140961808}.
On the directed case, this problem is \textsf{FPT} in DAGs when parameterized by the size of the solution and the number of pairs of terminals~\cite{doi:10.1137/120904202}, but \textsc{W}[1]-hard in the general case even for fixed $k=4$~\cite{Pilipczuk:2018:DMW:3208319.3201775}.

A variation of \textsc{Multicut} is considered in~\cite{Erbacher.Jaeger.14}.
Namely, in the \textsc{Linear Edge Cut} problem, we are given a digraph $D$ and a collection of sets of vertices $\{S_1, \ldots, S_k\}$, and we want to find a minimum set of edges $Z$ such that there is no path from $S_i$ to $S_j$ in $D \setminus Z$ whenever $j > i$.
We remark that the authors in~\cite{Erbacher.Jaeger.14} refer to this problem as \textsc{Linear Cut} only.
This problem is {\sf FPT} when parameterized by the size of the solution:

\begin{proposition}[Erbacher et al.~\cite{Erbacher.Jaeger.14}]
\label{theorem:fpt_edge_linear_cut}
The \textsc{Linear Edge Cut} problem can be solved in time $\Ocal(4^s \cdot s \cdot n^4)$, where $s$ is the size of the solution.
\end{proposition}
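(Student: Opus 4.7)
The plan is to cast the \textsc{Linear Edge Cut} problem as a simultaneous edge-cut problem and then deploy the machinery of \emph{important separators} developed by Marx. Concretely, define $A_j := S_1 \cup \cdots \cup S_j$ and $B_j := S_{j+1} \cup \cdots \cup S_k$ for each $j \in [k-1]$. Then $Z \subseteq E(D)$ is a valid Linear Edge Cut if and only if $Z$ is an $(A_j, B_j)$-edge separator in $D$ for every $j \in [k-1]$. Note that the sources are nested, $A_1 \subsetneq A_2 \subsetneq \cdots \subsetneq A_{k-1}$, while the targets are reverse-nested, $B_1 \supsetneq B_2 \supsetneq \cdots \supsetneq B_{k-1}$; this monotonicity is what drives the algorithm.

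The algorithm is a branching procedure that maintains a partial solution $Z$ with remaining budget $s - |Z|$. At each recursive call, I first compute the smallest index $j^*$ such that $A_{j^*}$ still reaches $B_{j^*}$ in $D \setminus Z$; if no such index exists, $Z$ is returned as a feasible solution. Otherwise, I enumerate all important $(A_{j^*}, B_{j^*})$-edge separators of size at most $s - |Z|$ in $D \setminus Z$. By the classical bound of Marx, there are at most $4^{s - |Z|}$ such separators and they can be listed in time polynomial in $n$, $m$, and $4^{s - |Z|}$. For each separator $C$ in this enumeration, I recurse on $(Z \cup C, \, s - |Z \cup C|)$ and return any feasible solution found across the branches.

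The main obstacle, and the heart of the correctness proof, is an exchange lemma: among the optimal feasible solutions, some $Z^*$ has its ``first-layer'' cut equal to an important $(A_{j^*}, B_{j^*})$-separator, and thus the above restricted branching explores at least one optimal branch. The plan for this is a standard pushing argument: starting from any optimal $Z^*$, let $R^*$ be the set of vertices reachable from $A_{j^*}$ in $D \setminus Z^*$ and let $C^* \subseteq Z^*$ be the edges of $Z^*$ leaving $R^*$; one replaces $C^*$ by an important $(A_{j^*}, B_{j^*})$-separator $C'$ with $|C'| \leq |C^*|$ and whose source-side $R'$ satisfies $R' \supseteq R^*$. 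The delicate point is that, for indices $j > j^*$, the $(A_j, B_j)$-separation provided by $Z^*$ must remain valid for $Z' := (Z^* \setminus C^*) \cup C'$; here the nesting $A_j \supseteq A_{j^*}$ and $B_j \subseteq B_{j^*}$, together with the structural properties of important separators (in particular the maximality of $R'$), allow a case analysis showing that no new $(A_j, B_j)$-path is created when swapping $C^*$ for $C'$, so feasibility is preserved.

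For the running time, if the important separator chosen at the $i$-th level along a root-to-leaf path of the recursion has size $c_i$, then the number of branches at that level is at most $4^{c_i}$ and $\sum_i c_i \leq s$, so the total number of leaves of the branching tree is bounded by $\prod_i 4^{c_i} \leq 4^{\sum_i c_i} \leq 4^s$. Each recursive call performs the enumeration of important separators plus $\Ocal(k) = \Ocal(n)$ reachability tests to locate $j^*$, and the underlying max-flow computations used by Marx's enumeration cost polynomial time per node of the branching tree. Aggregating these factors yields the claimed bound of $\Ocal(4^s \cdot s \cdot n^4)$.
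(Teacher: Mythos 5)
The paper itself gives no proof of Proposition~\ref{theorem:fpt_edge_linear_cut}: it is quoted verbatim from Erbacher et al.~\cite{Erbacher.Jaeger.14}, and the paper only remarks that the same bound can alternatively be obtained by a reduction to the \textsc{Skew Separator} algorithm of Chen et al.~\cite{Chen.Liu.Lu.Sullivan.Razgon.2008}. Your proposal is essentially that alternative route: the reformulation through the nested pairs $(A_j,B_j)$ and the branching over important edge separators is the standard skew-cut argument, so the overall strategy is the right one.

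There is, however, a genuine gap exactly at the point you call the heart of the proof. Your exchange lemma only discusses the indices $j>j^*$, and the swap is performed on an optimal solution $Z^*$ of the whole instance, with $C^*$ taken to be all edges of $Z^*$ leaving $R^*$. Two problems follow. First, $C^*$ may contain edges of the already-committed partial solution $Z$, whereas $C'$ is an important separator of $D\setminus Z$ and contains none of them; hence $Z'=(Z^*\setminus C^*)\cup C'$ need not contain $Z\cup C'$, and the induction behind the branching (``some within-budget feasible solution extends $Z\cup C'$'') is not established. Second, and independently, the pushing argument gives nothing for the pairs with $j<j^*$: there $B_j\setminus B_{j^*}=S_{j+1}\cup\dots\cup S_{j^*}\subseteq A_{j^*}$ lies on the \emph{near} side of the separator, so a path starting in $A_j$ that re-enters through a restored edge of $C^*\setminus C'$ and ends in some $S_i$ with $j<i\le j^*$ terminates inside $R'$-side territory, and no contradiction with $C'$ being an $(A_{j^*},B_{j^*})$-separator can be drawn; the maximality of $R'$ does not help for such endpoints. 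The repair is standard and is presumably what you intended when choosing $j^*$ minimal: perform the exchange inside $D\setminus Z$ on $\hat Z=Z^*\setminus Z$, so that $C^*\subseteq\hat Z$, no edge of $Z$ is ever restored, and the new solution $Z''=Z\cup(\hat Z\setminus C^*)\cup C'$ satisfies $Z\cup C'\subseteq Z''$ and $|Z''|\le|Z^*|$. Then the pairs $j<j^*$ remain separated simply because they are already separated by $Z$ (minimality of $j^*$), and for $j\ge j^*$ the last restored edge on a hypothetical $A_j$-to-$B_j$ path has head in $R'$ while $B_j\subseteq B_{j^*}$, contradicting that $C'$ separates $R'$ from $B_{j^*}$ in $D\setminus Z$. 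With this the correctness goes through; the $4^s$ bound on the number of leaves should then be justified by the usual $\sum_{C}4^{-|C|}\le 1$ accounting rather than your per-level product, but that is a matter of presentation rather than substance.
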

We remark that the authors of~\cite{Erbacher.Jaeger.14} mention that this result can also be achieved by using a reduction to the \textsc{Skew Separator} algorithm given in~\cite{Chen.Liu.Lu.Sullivan.Razgon.2008}.


In this section, we show how to use the algorithm for the \textsc{Linear Edge Cut} problem to solve the vertex version, and then show how this version can be used to compute $(T,r)$-\balsep{s} in {\sf FPT} time.
We formally define the vertex version below.


\begin{boxproblem}[framed]{\textsc{Linear Vertex Cut}}\label{problem:linear_vertex_cut}
Input: & A digraph $D$, a collection of terminal sets $\mathcal{T}$, with $\mathcal{T} = \{T_1, T_2, \ldots, T_k\}$, where $T_i \subseteq V(D)$ for $i \in [k]$, and an integer $s \geq 0$.\\
Question: & Is there a set of vertices $Z \subseteq V(D)$ with $|Z| \leq s$ such that there are no paths in $D \setminus Z$ from $T_i$ to $T_j$, for $1 \leq i < j \leq k$?
\end{boxproblem}
From an instance $(D, \mathcal{T}, s)$ of \textsc{Linear Vertex Cut}, we construct an equivalent instance of $(D', \mathcal{T'}, s)$ of \textsc{Linear Edge Cut} as follows.
First, notice that any vertex $v$ occurring in the intersection of two distinct sets in $\mathcal{T}$ must be part of any solution for the instance.
Thus we can assume that every vertex of $D$ occurs in at most one set in $\mathcal{T}$.
Now, for each vertex $v \in V(D)$, add to $D'$ two vertices $v_{\sf in}$ and $v_{\sf out}$ and an edge $e_v$ from $v_{\sf in}$ to $v_{\sf out}$.
For each edge $e \in E(D)$ with tail $u$ and head $v$, add to $D'$ a set of $s+1$ parallel edges from $u_{\sf out}$ to $v_{\sf in}$.
Finally, for each $v \in T_i$, for $i \in [k]$, add a new vertex $v'$ to $D'$ together with $s+1$ edges from $v'$ to $v_{\sf in}$ and $s+1$ edges from $v_{\sf out}$ to $v'$.
Let $T'_i = \{v' \mid v \in T_i\}$ and $\mathcal{T}' = \{T'_1, \ldots, T'_k\}$.
We have the following easy lemma.
\begin{lemma}\label{lemma:equivalence_linear_edge_vertex}
An instance $(D, \mathcal{T}, s)$ of \textsc{Linear Vertex Cut} is positive if and only if the associated instance $(D', \mathcal{T}', s)$ of \textsc{Linear Edge Cut} is positive.
\end{lemma}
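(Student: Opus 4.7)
The plan is to prove the two directions of the equivalence by direct translation of cuts between $D$ and $D'$, exploiting the fact that the $(s+1)$-fold multiplicities make the ``non-$e_v$'' edges of $D'$ effectively uncuttable by any set of size at most $s$.

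For the forward direction, I would start with a vertex cut $Z \subseteq V(D)$ with $|Z| \leq s$ and set $Z_E = \{e_v : v \in Z\}$, which clearly satisfies $|Z_E| \leq s$. Suppose for contradiction there is a path $P'$ in $D' \setminus Z_E$ from $u' \in T_i'$ to $w' \in T_j'$ with $i < j$. By construction, the only edges leaving a vertex $x_{\sf in}$ in $D'$ go to $x_{\sf out}$ through $e_x$, so $P'$ must alternately traverse edges of the form $e_{x_\ell}$ (from $x_\ell{}_{\sf in}$ to $x_\ell{}_{\sf out}$) and parallel edges corresponding to original arcs $(x_\ell, x_{\ell+1}) \in E(D)$. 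Extracting the sequence $u = x_0, x_1, \ldots, x_m = w$ of underlying vertices yields a walk, and hence a path, in $D$ from $T_i$ to $T_j$. Since none of the $e_{x_\ell}$ lie in $Z_E$, none of the $x_\ell$ lie in $Z$, so this path actually lives in $D \setminus Z$, contradicting the choice of $Z$.

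For the reverse direction, given an edge cut $Z_E$ in $D'$ with $|Z_E| \leq s$, I would set $Z = \{v \in V(D) : e_v \in Z_E\}$, which obviously satisfies $|Z| \leq s$. Suppose there is a path $v_0, v_1, \ldots, v_m$ in $D \setminus Z$ from $T_i$ to $T_j$. The key observation is that each of the bundles of $s+1$ parallel edges (either the ones replacing an original arc $(v_\ell, v_{\ell+1})$ or the ones joining $v_0'$ to $v_0{}_{\sf in}$ and $v_m{}_{\sf out}$ to $v_m'$) has at most $s$ of its members in $Z_E$, so at least one edge of each bundle survives in $D' \setminus Z_E$. Using one surviving edge per bundle and the edges $e_{v_\ell}$ (which are not in $Z_E$ because $v_\ell \notin Z$), I can concatenate
\[ v_0' \to v_0{}_{\sf in} \to v_0{}_{\sf out} \to v_1{}_{\sf in} \to \cdots \to v_m{}_{\sf out} \to v_m' \]
to obtain a path in $D' \setminus Z_E$ from $T_i'$ to $T_j'$, contradicting the assumption on $Z_E$.

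Neither direction presents a real obstacle; the only subtlety I would take care to document explicitly is the observation that vertices potentially lying in $T_i \cap Z$ cause no trouble (removing $v$ from $D$ kills all paths through $v$, and in $D'$ the corresponding edge $e_v \in Z_E$ cuts every $v'$--$*$ and $*$--$v'$ route because all such routes must traverse $e_v$). The $(s+1)$-multiplicity trick is the standard device for reducing vertex cuts to edge cuts and for protecting the terminal attachments; I would state it once and invoke it on both sides of the argument.
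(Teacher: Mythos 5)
Your proof is correct and takes essentially the same route as the paper: translate the cut via $Z \leftrightarrow \{e_v \mid v \in Z\}$ and lift/project paths, with the $(s+1)$-fold multiplicities guaranteeing that bundle edges and terminal attachments cannot be severed within budget. The only cosmetic difference is in the reverse direction, where the paper takes a \emph{minimal} edge solution and argues it consists solely of $e_v$-type edges, whereas you allow an arbitrary solution $Z_E$ and simply observe that at least one edge of each bundle survives; both amount to the same multiplicity argument.
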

\begin{proof}
Let $Z \subseteq V(D)$ be a solution for $(D, \mathcal{T},s)$ and $Z' = \{e_v \mid v \in Z\} \subseteq E(D')$.
By contradiction, assume that there is a path $P'$ in $D' \setminus Z'$ from a vertex $u'$ to a vertex $v'$, for $u' \in T'_i$, $v' \in T'_j$, and $j > i$. Then there is a path $P$ from $u$ to $v$ in $D \setminus Z$ with vertex set $\{v \mid e_v \in E(P')\}$. This contradicts our choice of $Z$ and thus the necessity holds.

For the sufficiency, let $Z'$ be a minimal solution for $(D', \mathcal{T}', s)$.
Notice that all edges in $Z'$ are from a vertex $v_{\sf in}$ to its respective $v_{\sf out}$, as the budget $s$ for the size of $Z'$ does not allow any other choice.
Let $Z = \{v \mid e_v \in Z'\}$ and, by contradiction, let $P$ be a path in $D \setminus Z$ from a vertex $u$ to a vertex $v$, with $u \in T_i$, $v \in T_j$, and $j > i$.
For each edge $e \in E(P)$ with $e = (x,y)$ there is an edge $e'$ with $e' = (x_{\sf out}, y_{\sf in})$ in $D' \setminus Z'$.
Let $F'$ be the set of such edges of $D'$.
Now, there is a path $P'$ from $u_{\sf in}$ to $v_{\sf out}$ in $D'$ with edge set $\{e_v \mid v \in V(P)\} \cup F'$.
Appending to $P'$ the edges from $u'$ to $u_{\sf in}$ and from $v_{\sf out}$ to $v'$ we construct a path from $u'$ to $v'$ in $D \setminus Z'$, contradicting our choice of $Z'$. Therefore, the sufficiency also holds and the lemma follows.
\end{proof}
\noindent Combining Proposition~\ref{theorem:fpt_edge_linear_cut} and Lemma~\ref{lemma:equivalence_linear_edge_vertex} we get the following.
\begin{corollary}\label{corollary:fpt_linear_vertex_cut}
There is an  {\sf FPT}  algorithm for the \textsc{Linear Vertex Cut} problem parameterized by the size $s$ of the solution and running in time $\Ocal(4^s \cdot s \cdot n^4)$.
\end{corollary}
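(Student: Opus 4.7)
The plan is to obtain the corollary as an almost immediate consequence of the reduction developed in the paragraph preceding Lemma~\ref{lemma:equivalence_linear_edge_vertex}, together with Proposition~\ref{theorem:fpt_edge_linear_cut}. First I would describe the algorithm for \textsc{Linear Vertex Cut}: given an input $(D,\mathcal{T},s)$, we may assume (after a trivial preprocessing step that moves every vertex occurring in two distinct terminal sets into the solution and decreases $s$ accordingly) that the terminal sets in $\mathcal{T}$ are pairwise disjoint. We then build the associated instance $(D',\mathcal{T}',s)$ of \textsc{Linear Edge Cut} exactly as defined above, run the algorithm of Proposition~\ref{theorem:fpt_edge_linear_cut} on it, and, if it returns a solution $Z'\subseteq E(D')$, output the set $Z=\{v\in V(D)\mid e_v\in Z'\}$; otherwise we report that no $(T,r)$-balanced separator of size at most $s$ exists.

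Correctness follows directly from Lemma~\ref{lemma:equivalence_linear_edge_vertex}: the instance $(D,\mathcal{T},s)$ is positive if and only if $(D',\mathcal{T}',s)$ is, and the explicit translation between solutions is precisely the one used in the proof of that lemma (any minimal solution $Z'$ consists only of edges of the form $e_v$, because the $s+1$ parallel edges we installed in $D'$ guarantee that no edge other than an $e_v$ can be cheaper to cut than its corresponding vertex). Thus $Z$ is a valid solution for the original instance whenever $Z'$ exists, and conversely a solution of the original instance yields a solution of size $s$ for the edge instance.

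For the running time, the main point to check is that the reduction does not inflate the vertex count. Since the terminal sets in $\mathcal{T}$ are disjoint after preprocessing, we have $\sum_i|T_i|\le |V(D)|$, so $|V(D')|=2|V(D)|+\sum_i|T_i|\le 3|V(D)|$, i.e.\ $|V(D')|=\Ocal(n)$. The reduction itself runs in polynomial time. Applying Proposition~\ref{theorem:fpt_edge_linear_cut} to $(D',\mathcal{T}',s)$ costs $\Ocal(4^s\cdot s\cdot |V(D')|^4)=\Ocal(4^s\cdot s\cdot n^4)$, and the inverse translation $Z'\mapsto Z$ is linear in $|Z'|\le s$. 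Summing the bounds yields the claimed running time.

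The only potential obstacle is the disjointness assumption: if some vertex $v$ lies in two distinct sets $T_i,T_j$ with $i<j$, then $v$ itself provides a path from $T_i$ to $T_j$, and consequently $v$ must belong to every solution; we may therefore add $v$ to the output, delete it from $D$, and decrement $s$ before invoking the reduction. This step is executed at most $s$ times before either $s$ drops below zero (a trivial \textsf{no}-instance) or the remaining terminal sets become pairwise disjoint, so the preprocessing remains polynomial and does not affect the final parameterized bound.
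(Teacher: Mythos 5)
Your proposal is correct and follows essentially the same route as the paper, whose proof of Corollary~\ref{corollary:fpt_linear_vertex_cut} is simply the combination of the reduction to \textsc{Linear Edge Cut}, Lemma~\ref{lemma:equivalence_linear_edge_vertex}, and Proposition~\ref{theorem:fpt_edge_linear_cut}; your added details (the disjointness preprocessing and the observation that $|V(D')| = \Ocal(n)$, so the bound $\Ocal(4^s \cdot s \cdot n^4)$ is preserved) are exactly the implicit steps of the paper's argument. The only slip is cosmetic: in the failure case you should report that the \textsc{Linear Vertex Cut} instance is negative, not that no $(T,r)$-balanced separator exists, since balanced separators only enter in the later application of this corollary.
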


We now show how to solve \textsc{Balanced Separator} using \textsc{Linear Vertex Cut}.
Namely, we show that a digraph $D$ admits a $(T,r)$-\balsep $Z$ if and only if $Z$ is a solution to some instance $(D, \mathcal{T},s)$ of \textsc{Linear Vertex Cut} where $\mathcal{T}$ depends of $T$.



\begin{lemma}\label{lemma:problem_equivalence}
Let $(D, T, k, r, s)$ be an instance of \textsc{Balanced Separator}.
A set $Z \subseteq V(D)$ with $|Z| \leq s$ is a $(T,r)$-\balsep if and only if there is a partition $\mathcal{T}$ of $T$ into sets $T_1, T_2, \ldots, T_\ell$ such that
\begin{enumerate}
  \item $|T_i| \leq r$, for $i \in [\ell]$, and
  \item $Z$ is a solution for the instance $(D, \mathcal{T}, s)$ of  \textsc{Linear Vertex Cut}.
\end{enumerate}

\end{lemma}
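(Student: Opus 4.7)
The plan is to prove both directions by reasoning about the strong components of $D \setminus Z$ and how they interact with the partition $\mathcal{T}$.

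For the forward implication, assume $Z$ is a $(T,r)$-balanced separator. Let $C_1, \ldots, C_m$ denote the strong components of $D \setminus Z$. The condensation of $D \setminus Z$ is a DAG, so I can fix a topological ordering of these components in which all edges go from components of lower index to components of higher index. I then reverse this ordering and set $T_i = V(C_{m+1-i}) \cap T$, discarding empty sets and renumbering to obtain a partition $T_1, \ldots, T_\ell$ of $T$. Condition~(1) is immediate since $|V(C) \cap T| \leq r$ for every strong component $C$. For condition~(2), any path in $D \setminus Z$ from a vertex of $T_i$ to a vertex of $T_j$ with $i < j$ would project to a walk in the condensation from $C_{m+1-i}$ to $C_{m+1-j}$; but the reversed topological order forces $i > j$, a contradiction, so $Z$ is a valid solution to $(D,\mathcal{T},s)$.

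For the backward implication, suppose $\mathcal{T} = \{T_1, \ldots, T_\ell\}$ is a partition of $T$ satisfying both conditions, and consider an arbitrary strong component $C$ of $D \setminus Z$. If $|V(C) \cap T| \leq r$ we are done, so suppose for contradiction that $|V(C) \cap T| \geq r+1$. Since every part satisfies $|T_i| \leq r$, the set $V(C) \cap T$ cannot be contained in a single $T_i$, so there exist indices $i < j$ and vertices $u \in V(C) \cap T_i$, $v \in V(C) \cap T_j$. Because $C$ is strongly connected in $D \setminus Z$, there is a path from $u$ to $v$ in $D \setminus Z$, contradicting that $Z$ is a solution to the \textsc{Linear Vertex Cut} instance $(D,\mathcal{T},s)$. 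Hence every strong component of $D \setminus Z$ meets $T$ in at most $r$ vertices, so $Z$ is a $(T,r)$-balanced separator.

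Neither direction requires a genuinely hard step: the only subtle point is picking the correct orientation of the topological ordering in the forward direction so that the Linear Vertex Cut non-reachability condition lines up with the direction of edges in the condensation of $D \setminus Z$. I would double-check this orientation carefully when writing out the proof, since reversing it would yield a partition satisfying the \emph{opposite} separation property and invalidate the reduction.
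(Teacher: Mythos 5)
Your argument follows essentially the same route as the paper (order the strong components of $D \setminus Z$ by a reverse topological ordering of the condensation, and take the parts $V(C) \cap T$), and both your orientation check in the forward direction and your backward direction are correct. However, the forward direction as written has a genuine gap: you implicitly assume that the strong components of $D \setminus Z$ cover all of $T$, so that the sets $V(C_{m+1-i}) \cap T$ partition $T$. This fails whenever $Z \cap T \neq \emptyset$, which the problem explicitly allows and which is even the typical situation (for instance, when $s \geq k - r$ any $s$ vertices of $T$ form a $(T,r)$-balanced separator). In that case your $\mathcal{T}$ is a partition of $T \setminus Z$ only, whereas the lemma demands a partition of $T$, so the implication is not established.

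The fix is short and is exactly what the paper does: add each vertex $v \in T \cap Z$ as a singleton part $\{v\}$ (placed anywhere in the ordering). Since these vertices are deleted in $D \setminus Z$, no path of $D \setminus Z$ starts or ends at them, so condition (2) is unaffected, and $|\{v\}| = 1 \leq r$ because $r$ is a positive integer, so condition (1) holds as well. With this one addition your forward direction is complete, and the rest of your proof matches the paper's.
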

\begin{proof}
For the necessity, let $Z$ be a $(T,r)$-\balsep with $|Z| \leq s$.
Let $\mathcal{C}$ be the set of strong components of $D \setminus Z$ and consider an ordering $C_1, \ldots, C_\ell$ of its elements such that there is no path from $C_i$ to $C_j$ in $D \setminus Z$ whenever $j > i$.
Notice that this is the reverse of a topological ordering for the elements of $\mathcal{C}$.
Let $v_1, \ldots, v_q$ be the vertices in $T \cap Z$, if any exist.
For $i \in [\ell]$, choose $T_i = V(C_i) \cap T$ and define $\mathcal{T} = \{T_1, T_2, \ldots, T_\ell\}$ if $T \cap Z \neq \emptyset$ or $\mathcal{T} = \{T_1, T_2, \ldots, T_\ell, \{v_1\}, \ldots, \{v_q\}\}$ otherwise.
Notice that it is possible for a set $T_i$ to be empty.

Since $Z$ is a $(T,r)$-\balsep, we know that $|T_i| \leq r$ holds for all $i \in [\ell]$.
Since the vertices in a non-empty set $T_i$ are contained in exactly one strong component of $D \setminus Z$, any path between different sets in $\mathcal{T}$ must contain a path between distinct strong components of $D \setminus Z$.
Thus we conclude that there are no paths from a set $T_i$ to another set $T_j$ with $j > i$, since otherwise we would have a contradiction to our choice for the order of the elements of $\mathcal{C}$, and therefore $Z$ is a solution for the instance $(D, \mathcal{T}, s)$ of \textsc{Linear Vertex Cut}.

For the sufficiency, let $\mathcal{T}$ be as in the statement of the lemma and $Z$ be a solution for the instance $(D, \mathcal{T}, s)$ of \textsc{Linear Vertex Cut}.
First, notice that no strong component of $D \setminus Z$ can intersect two distinct sets $T,T' \in \mathcal{T}$.
Indeed, if this were the case, then there would be a path in $D \setminus Z$ from a vertex in $T$ to a vertex in $T'$ and vice-versa, contradicting the fact that $Z$ is a solution for $(D, \mathcal{T},s)$.
Thus, if $|V(C) \cap T| \geq r+1$ for some strong component $C$ of $D \setminus Z$, we have a contradiction as $C$ would intersect at least two distinct sets in $\mathcal{T}$.
We conclude that $Z$ is a $(T,r)$-\balsep and the lemma follows.
\end{proof}


The {\sf FPT} algorithm for \textsc{Balanced Separator} follows from Lemma~\ref{lemma:problem_equivalence} and Corollary~\ref{corollary:fpt_linear_vertex_cut}.
The running time is heavily tied to the number of partitions $\mathcal{T}$ that can be generated from a given set $T$ of an instance $(D, T, k, r, s)$ of \textsc{Balanced Separator}.
This value is bounded by the $k$-th \emph{ordered Bell number}~\cite{cayley_2009}. The \emph{Bell number}~\cite{10.2307/1968431} counts the number of partitions of a set, and its ordered variant also considers the number of possible orderings for each partition. The $k$-th ordered Bell number is of the form $2^{\Ocal(k \log k)}$.
From the previous discussion we get the following theorem.
\begin{theorem}\label{theorem:FPT_haven}
There is an algorithm running in time $\RunTimeAlg$ for the \textsc{Balanced Separator} problem.
\end{theorem}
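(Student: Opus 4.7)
The plan is to combine Lemma~\ref{lemma:problem_equivalence} with Corollary~\ref{corollary:fpt_linear_vertex_cut} through a brute-force enumeration over ordered partitions of $T$. More precisely, given an instance $(D, T, k, r, s)$ of \textsc{Balanced Separator}, Lemma~\ref{lemma:problem_equivalence} tells us that a $(T,r)$-balanced separator of size at most $s$ exists if and only if there is some ordered partition $\mathcal{T} = (T_1, \ldots, T_\ell)$ of $T$ with each $|T_i| \leq r$, such that the instance $(D, \mathcal{T}, s)$ of \textsc{Linear Vertex Cut} is positive. Since the correspondence in Lemma~\ref{lemma:problem_equivalence} is via the very set $Z$ itself, it suffices to search for the correct partition and then invoke the \textsf{FPT} algorithm for \textsc{Linear Vertex Cut}.

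Concretely, first I would iterate over all ordered partitions $\mathcal{T}$ of $T$ whose blocks each have size at most $r$ (equivalently, over all sequences of pairwise disjoint nonempty subsets of $T$ covering $T$). For each such $\mathcal{T}$, I would then call the algorithm of Corollary~\ref{corollary:fpt_linear_vertex_cut} on the instance $(D, \mathcal{T}, s)$. As soon as some call returns a solution $Z$, Lemma~\ref{lemma:problem_equivalence} certifies that $Z$ is a $(T,r)$-balanced separator of size at most $s$, and we output it. If no call succeeds, Lemma~\ref{lemma:problem_equivalence} guarantees that no $(T,r)$-balanced separator of size at most $s$ exists, so we report failure.

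For the running time, recall that, as noted in the paper, we may assume $s \leq k - r - 1 \leq k - 1$. The number of ordered partitions of a set of size $k$ is the $k$-th ordered Bell number, which is known to be $2^{\Ocal(k \log k)}$; restricting block sizes to be at most $r$ only reduces this count. For each partition we pay $\Ocal(4^s \cdot s \cdot n^4) = 2^{\Ocal(k)} \cdot n^{\Ocal(1)}$ by Corollary~\ref{corollary:fpt_linear_vertex_cut}, so the overall runtime is
\[
2^{\Ocal(k \log k)} \cdot 2^{\Ocal(k)} \cdot n^{\Ocal(1)} \;=\; 2^{\Ocal(k \log k)} \cdot n^{\Ocal(1)},
\]
matching $\RunTimeAlg$.

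The main obstacle is essentially administrative rather than mathematical: one must carefully enumerate ordered partitions without redundancy (e.g., using a standard recursive scheme that, for the first still-uncovered element of $T$, chooses which block it belongs to, or starts a new block), and justify the $2^{\Ocal(k\log k)}$ bound on the number of such partitions via the ordered Bell number. Everything else is already packaged: the correctness of the reduction is Lemma~\ref{lemma:problem_equivalence}, and the \textsf{FPT} solver for the resulting cut instances is Corollary~\ref{corollary:fpt_linear_vertex_cut}.
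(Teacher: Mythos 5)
Your proposal is correct and follows essentially the same route as the paper: enumerate the ordered partitions of $T$ with blocks of size at most $r$ (at most the $k$-th ordered Bell number, $2^{\Ocal(k\log k)}$ of them), solve each resulting \textsc{Linear Vertex Cut} instance via Corollary~\ref{corollary:fpt_linear_vertex_cut}, and use Lemma~\ref{lemma:problem_equivalence} together with the bound $s \leq k-r$ to get the claimed running time. Your restriction to nonempty blocks is harmless, since dropping empty blocks from an ordered partition does not change the corresponding \textsc{Linear Vertex Cut} instance.
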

\begin{proof}
Let $(D, T, k, r, s)$ be an instance of \textsc{Balanced Separator} and $\mathcal{T}^*$ be the set of all ordered partitions $\{T_1, \ldots, T_\ell\}$ of $T$ with $|T_i| \leq r$, for $i \in [\ell]$.

By Corollary~\ref{corollary:fpt_linear_vertex_cut}, we can solve instances of \textsc{Linear Vertex Cut} problem in time $\Ocal(4^s \cdot s \cdot n^4)$ for $s$ being the size of the solution.
By Lemma~\ref{lemma:problem_equivalence}, $Z$ is a $(T,r)$-\balsep if and only if there is a $\mathcal{T} \in \mathcal{T}^*$ such that the instance $(D, \mathcal{T},s)$ of \textsc{Linear Vertex Cut} is positive.
Finally, since $|\mathcal{T}^*|$ is at most the $k$-th ordered Bell number, we can solve \textsc{Balanced Separator} by testing $2^{\Ocal (k \log k)}$ instances of \textsc{Linear Vertex Cut}.
As $s \leq k-r$ (since otherwise the instance of \textsc{Balanced Separator} is trivially positive), the bound on the running time follows.
\end{proof}

\subsection{An \textsf{FPT} algorithm for approximate arboreal decompositions}\label{section:FPT_arb_decomp}

We are now ready to prove Theorem~\ref{theorem:first_contribution_intro}.
We remark that the proof below follows~\cite[3.3]{Johnson.Robertson.Seymour.Thomas.01} except that we replace the {\sf XP} procedure of the proof by our {\sf FPT} algorithm for \textsc{Balanced Separator}.
In the following proof, we need to test whether a given set $T \subseteq V(D)$ admits a $(T, k-1)$-\balsep of size at most $k-1$.
Thus we remind the reader of the discussion made in the beginning of Section~\ref{sec:FPT-arboreal}: if $|T| \leq 2k-2$, then the answer is positive since we can pick any $k-1$ vertices of $T$ to form a solution.

\firstmainklinked*

\begin{proof}
We begin with a nice arboreal decomposition $(R_0, \mathcal{X}_0, \mathcal{W}_0)$ of $D$ where $\mathcal{X}_0 = \emptyset$, $V(R_0) = \{r\}$, and $\mathcal{W}_0 = \{V(D)\}$.
We maintain an arboreal decomposition $(R,\mathcal{X,W})$ of $D$ for which the following two properties hold:
\begin{enumerate}
\item[(P1)] $|W_r \cup (\bigcup_{e \thicksim r} X_e)| \leq 3k-1$ for every $r \in V(R)$ of out-degree at least one, and
\item[(P2)] $|X_e| \leq 2k-1$ for every $e \in E(R)$.
\end{enumerate}
Notice that both (P1) and (P2) hold for $(R_0, \mathcal{X}_0, \mathcal{W}_0)$.

If (P1) holds for all $r \in V(R)$, then we have constructed an arboreal decomposition with the desired width. Otherwise, we can assume that $(R, \mathcal{X,W})$ contains at least one leaf that is \emph{too large}. That is, the width of a vertex $r_0$ of out-degree zero of $R$ is at least $3k$. If there is an edge $e_0 \in E(R)$ with head $r_0$, let $T = X_{e_0}$. Otherwise, let $T = \emptyset$. Either way, $|T| \leq 2k-1$ and $|W_{r_0}| \geq 3k - |T| \geq k+1$.

Now, we test whether $D$ contains a $(T,k-1)$-\balsep of size at most $k-1$ and, by Theorem~\ref{theorem:FPT_haven}, this test can be done in time $\RunTimeAlg$.
If $|T| \leq 2k-2$ then the answer is positive since we can pick any set of $k-1$ vertices of $T$ to form a solution.
Thus, if the answer is negative, we have $|T| = 2k-1$ and we terminate the algorithm outputting $T$.
We may now assume that $D$ contains a $(T,k-1)$-\balsep $Z'$ with $|Z'| \leq k-1$.

From the bound on the sizes of the sets, there are at least two vertices in $W_{r_0} \setminus Z'$.
Choose $v$ to be any of those two vertices, and let $Z = Z' \cup \{v\}$.
Now $|Z| \leq k$, $Z \cap W_{r_0} \neq \emptyset$, and $|V(C) \cap T| \leq k-1$ holds for every strong component $C$ of $D \setminus Z$.


Let $C_1, \ldots, C_\ell$ be the strong components of $D \setminus Z$. If $B$ is a strong component of $C_i \setminus T$, for $i \in [\ell]$, then either $V(B) \subseteq W_{r_0}$ or $V(B) \cap W_{r_0} = \emptyset$, for $W_{r_0}$ is $T$-guarded.
Let $B_1, \ldots, B_d$ be all such strong components for which $V(B_j) \subseteq W_{r_0}$ for all $j \in [d]$.
Furthermore, let $f: \mathbb{N} \to \mathbb{N}$ be a function assigning an index $j$ to an index $i$ if and only if $B_i \subseteq C_j \setminus T$. Thus, $f$ can be used to tell which set $C_j$ contains a given $B_i$. Now, $Z \cap W_{r_0}, V(B_1), \ldots, V(B_d)$ is a partition of $W_{r_0}$ into non-empty sets.
We show that this partition yields another arboreal decomposition of $D$.

Let $R'$ be the arborescence obtained from $R$ by adding a vertex $r_i$ and an edge $e_i$ from $r_0$ to $r_i$, for $i \in [d]$. Furthermore, let $X'_e = X_e$ for all $e \in E(R)$ and $W_r' = W_r$ for all $r \in V(R) \setminus \{r_0\}$. Also, let $W'_{r_0} = W_{r_0} \cap Z$ and, for $i \in [d]$, let $X'_{e_i} = Z \cup (V(C_{f(i)}) \cap T)$ and $W'_{r_i} = V(B_i)$.
Finally, define $\mathcal{X}' = \{X'_e \mid e \in E(R')\}$ and $\mathcal{W}' = \{W'_r \mid r \in V(R')\}$. As the vertices of $W_{r_0}$ have been spread into non-empty sets, we only need to verify that $(R', \mathcal{X', W'})$ is an arboreal decomposition of $D$ for which (P1) and (P2) hold; see Figure~\ref{fig:arboreal_decomposition_algorithm} for an illustration.

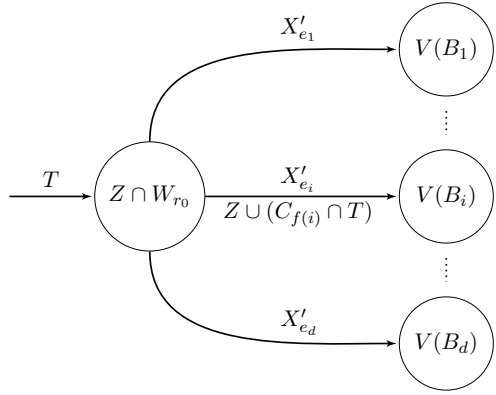
\begin{figure}[ht]
\centering
\scalebox{.78}{
\begin{tikzpicture}[shorten >=1pt]]

\node[state, scale=1,text width=1.5cm,align=center] (P-a) at (0,0) {$Z \cap W_{r_0}$};
\node[state,scale=1,text width=1.2cm,align=center] (P-b) at (5*1,-2.5) {$V(B_d)$};
\node[state,scale=1,text width=1.2cm,align=center] (P-c) at (5*1,0) {$V(B_i)$};
\node[state,scale=1,text width=1.2cm,align=center] (P-d) at (5*1,2.5) {$V(B_1)$};

\draw[edge, thick] (P-a) to [out = 90, in =180] node [pos=.7, above, sloped] {$X'_{e_1}$} (P-d);
\draw[edge, thick] (P-a) to [out = -90, in =180] node [pos=.7, above] {$X'_{e_d}$} (P-b);

\draw[edge, thick] (P-a) to (P-c);
\node (P-n) at (1*5/2,-0.3) {$Z \cup (C_{f(i)} \cap T)$};
\node (P-n) at (1*5/2, 0.3) {$X'_{e_i}$};
\node (P-t) at (-1*5/2,0) {};
\draw[thick,dotted, shorten <= .25cm, shorten >= .25cm] (P-d) -- (P-c);%
\draw[thick,dotted, shorten <= .25cm, shorten >= .25cm] (P-c) -- (P-b);%
\draw[edge,thick] (P-t) -- (P-a);
\node (P-tn) at (-1*5/3,0.3) {$T$};

\end{tikzpicture}%
}%
\caption{Spreading the vertices in $W_{r_0}$.}
\label{fig:arboreal_decomposition_algorithm}
\end{figure}

$\mathcal{W}'$ is indeed a partition of $V(D)$ into non-empty sets, as $W_{r_0}$ is partitioned into non-empty sets. For $i \in [d]$, $W'_{r_i} = V(B_i)$ and $B_i$ is a strong component of $C_{f(i)} \setminus T$.
Thus, each new leaf $r_i$ added to $R$ is such that $W'_{r_i}$ is $X'_{e_i}$-guarded and, for all $e \in E(R')$, $\bigcup\{W'_r : r \in V(R'), r > e\}$ is $X'_e$-guarded as the property remains unchanged for all $e \in E(R)$.

For $r \in V(R)$, the validity of (P1) remains unchanged.
The width of $r_0$ is bounded from above by $|T| + |Z| \leq 2k-1 + k  = 3k - 1$, as desired, for $W'_{r_0} \subseteq Z$ and $\bigcup_{e \thicksim r_0}X'_e \subseteq T \cup Z$.
(P2) remains true in $(R', \mathcal{X', W'})$ for all $e \in E(R)$. For $e_i$, $i \in [d]$, $|X_{e_i}| \leq |Z| + |V(C_{f(i)}) \cap T|$.
By the assumption that $ (D,T,2k-1,k-1,k-1)$ is a positive instance of \textsc{Balanced Separator}, $|Z| + |V(C_{f(i)} \cap T| \leq k + k - 1 = 2k-1$.

Observe that, since each $B_i$ is disjoint from $T \cup Z$, $(R', \mathcal{X', W'})$ is actually  a {\sl nice} arboreal decomposition.

Now, if no leaf of $(R', \mathcal{X', W'})$ is too large, we end the algorithm returning this arboreal decomposition of $D$.
Otherwise, we repeat the aforementioned procedure with new choices for $T$ and $W_{r_0}$.


Finally, the running time holds by Theorem~\ref{theorem:FPT_haven}, since $\mathcal{W}$ partitions $V(D)$ into non-empty sets and each iteration decreases the number of vertices in leaves that have width at least $3k$.
\end{proof}
The proof of Theorem~\ref{theorem:first_contribution_intro} easily follows from Lemma~\ref{claim:haven_instance} and Theorem~\ref{theorem:first_contribution_extended}.
\firstmain*
\begin{proof}
Applying Theorem~\ref{theorem:first_contribution_extended} with input $D$, we either produce an arboreal decomposition of $D$ of width at most $3k-2$ or find a set $T \subseteq V(D)$ with $|T| = 2k-1$ such that there is no $(T, k-1)$-\balsep in $D$.
Now, by Lemma~\ref{claim:haven_instance} applied with inputs $D$, $T$, $r = k-1$, and $s=k-1$, we conclude that $D$ admits a haven of order $k$ and the result follows.
\end{proof}

Next, we show to use Theorem~\ref{theorem:first_contribution_extended} to construct a bramble in digraphs of large directed tree-width that is easier to work with than the usual construction that depends on havens (see, for instance, ~\cite[Chapter 6]{Quantitative.Graph.Theory}).


\section{Brambles and well-linked systems of paths}\label{section:cyd_grid_fpt}
Let $T$ be the set constructed by Theorem~\ref{theorem:first_contribution_extended} applied to a digraph $D$ with $n$ vertices and  $\dtw(D) \geq 3k-1$, and let $\mathcal{H}$ be the haven obtained by applying Lemma~\ref{claim:haven_instance} with input $D$ and $T$.
We remark that from $\mathcal{H}$ it is possible to construct a bramble $\mathcal{B}$ of order $\lfloor k/2 \rfloor$ and size  $|V(D)|^{\Ocal(k)}$ (see the discussion in Section~\ref{section:brambles_DGT}).
In this particular case the naive approach yields an \textsf{XP} algorithm to find a hitting set of $\mathcal{B}$ of size $k$ in \textsf{XP} time with parameter $k$, by checking all $\binom{n}{k}$ subsets $X$ of $V(D)$ with size $k$ and testing whether $X \cap V(B) \neq \emptyset$ for each $B \in \mathcal{B}$, and thus \textsf{XP} algorithms can be extracted from the constructive proofs of Propositions~\ref{lemma:directed_grid_path_hitting_set} and~\ref{lemma:directed_grid_bramble_path} assuming that these properties hold for the input brambles.
In Section~\ref{sec:brambles-large-dtw}, we show how to construct from $T$ a bramble $\mathcal{B}_T$ of order $k$ in digraphs with directed tree-width at least $3k-1$ that skips havens and is more efficient in the following two ways.

First, this construction allows us to verify whether an induced subgraph $D'$ of $D$ contains an element of $\mathcal{B}_T$ by looking only at the strong components of $D'$.
This allows us to test if a given set $X \subseteq V(D)$ is a hitting set of $\mathcal{B}_T$ in polynomial time.
Second, we show that a set $Y \subseteq V(D)$ is a minimum hitting set of $\mathcal{B}_T$ if and only if $Y$ is a solution for an appropriately defined instance of \textsc{Balanced Separator}.
Since we showed that this problem is {\sf FPT} with parameter $|T|$ (Theorem~\ref{theorem:FPT_haven}), we can compute hitting sets of $\mathcal{B}_T$ in {\sf FPT} time with parameter $\ord(\mathcal{B}_T)$.
Then, in Section~\ref{sec:finding_P_and_A} we use those results to prove stronger versions of Propositions~\ref{lemma:directed_grid_path_hitting_set} and~\ref{lemma:directed_grid_bramble_path}.



\subsection{Brambles in digraphs of large directed tree-width}
\label{sec:brambles-large-dtw}

We now define \emph{$T$-brambles} and some of its properties when $T$ is the set obtained by applying Theorem~\ref{theorem:first_contribution_extended} to a digraph $D$ with $\dtw(D) \geq 3k-1$.


\begin{definition}
Let $D$ be a digraph and $T \subseteq V(D)$ with $|T| = 2k-1$.
The \emph{$T$-bramble} $\mathcal{B}_T$ of $D$ is defined as
\[\mathcal{B}_T = \{B \subseteq D \mid B \text{ is induced, strongly connected, and } |V(B) \cap T| \geq k\}.\]
\end{definition}
\noindent Notice that $\mathcal{B}_T$ is a bramble since, as $|T|=2k-1$, any two of its element intersect.
We remark that, in general, it is possible that $\ord(\mathcal{B}_T)$ is very small: it is in fact zero if, for example, no two vertices of $T$ lay in the same strong component of $D$.
Note also that $\mathcal{B}_T$ may be empty if, for instance, any strong component of $D$ has size strictly smaller than $k$.

\begin{lemma}\label{lemma:BT_order_hitting_set}
Let $D$ be a digraph and $T$ be a $(k-1, k-1)$-linked set  of size $2k-1$ in $D$. 
Then the $T$-bramble $\mathcal{B}_T$ is a bramble of order $k$ and a set $X \subseteq V(D)$ is a hitting set of $\mathcal{B}_T$ if and only if $X$ is a $(T,k-1)$-\balsep.
\end{lemma}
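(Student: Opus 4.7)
The plan is to prove the statement in three short steps: verifying the bramble axioms, establishing the characterization of hitting sets, and then deducing the order.

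First, I would check that $\mathcal{B}_T$ really is a bramble. Every $B\in\mathcal{B}_T$ is strongly connected by definition, so it remains to show that any two elements $B,B'\in\mathcal{B}_T$ intersect in a vertex. This is immediate from a pigeonhole argument on $T$: since $|V(B)\cap T|\geq k$ and $|V(B')\cap T|\geq k$ while $|T|=2k-1$, we have $|V(B)\cap V(B')\cap T|\geq k+k-(2k-1)=1$, so $V(B)\cap V(B')\neq\emptyset$, which is (even more than) what the bramble definition requires.

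Next, I would prove the characterization of hitting sets by two direct arguments. For the ``separator $\Rightarrow$ hitting set'' direction, suppose $X\subseteq V(D)$ is a $(T,k-1)$-balanced separator, and let $B\in\mathcal{B}_T$. If $V(B)\cap X=\emptyset$ then $V(B)\subseteq V(D)\setminus X$ is strongly connected, hence contained in a single strong component $C$ of $D\setminus X$, and then $|V(C)\cap T|\geq |V(B)\cap T|\geq k$, contradicting that $X$ is a $(T,k-1)$-balanced separator. For the converse, assume $X$ is a hitting set of $\mathcal{B}_T$ but not a $(T,k-1)$-balanced separator. Then some strong component $C$ of $D\setminus X$ satisfies $|V(C)\cap T|\geq k$, so $D[V(C)]$ belongs to $\mathcal{B}_T$; yet $V(C)\cap X=\emptyset$ by construction, contradicting that $X$ hits $\mathcal{B}_T$.

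Finally, I would determine $\ord(\mathcal{B}_T)$. The lower bound $\ord(\mathcal{B}_T)\geq k$ follows immediately from the characterization above together with the hypothesis that $T$ is $(k-1,k-1)$-linked, which says precisely that every $(T,k-1)$-balanced separator has size at least $k$. For the matching upper bound, pick any $X\subseteq T$ with $|X|=k$; then $|T\setminus X|=k-1$, so every strong component $C$ of $D\setminus X$ satisfies $|V(C)\cap T|=|V(C)\cap(T\setminus X)|\leq k-1$, making $X$ a $(T,k-1)$-balanced separator and hence, by the characterization, a hitting set of $\mathcal{B}_T$ of size $k$. Combining both bounds gives $\ord(\mathcal{B}_T)=k$.

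I do not expect a serious obstacle here: the bramble axiom reduces to one pigeonhole inequality, the two implications in the characterization are essentially tautological once one unfolds the definitions of strong component and of $(T,r)$-balanced separator, and the upper bound on the order is witnessed by any $k$-subset of $T$. The only point that merits explicit comment, to avoid a pedantic worry, is that $\mathcal{B}_T$ is non-empty (otherwise calling its order $k$ would be vacuous), but this follows from the $(k-1,k-1)$-linked hypothesis applied to $X=\emptyset$: some strong component of $D$ must meet $T$ in at least $k$ vertices.
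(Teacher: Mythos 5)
Your proof is correct and follows essentially the same route as the paper: the same pigeonhole argument on $|T|=2k-1$ for intersection and the upper bound, and the same identification of hitting sets of $\mathcal{B}_T$ with $(T,k-1)$-balanced separators, with the only cosmetic difference that you deduce $\ord(\mathcal{B}_T)\geq k$ from the characterization while the paper argues it directly from the $(k-1,k-1)$-linked hypothesis. Your added remark on non-emptiness of $\mathcal{B}_T$ is a harmless extra check.
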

\begin{proof}
Let $D$, $T$ and $\mathcal{B}_T$ be as in the statement of the lemma.
Since $|T| = 2k-1$, any set containing $k$ vertices of $T$ is a hitting set of $\mathcal{B}$.
Thus $\ord(\mathcal{B}_T) \leq k$.
Let $Z \subseteq V(D)$ with $|Z| \leq k-1$.
By definition of $(k-1, k-1)$-linked sets, $D$ does not contain any $(T,k-1)$-\balsep of size $k-1$, and hence there is a strong component $B$ of $D \setminus Z$ such that $|V(B) \cap T| \geq k$.
Since $V(B) \cap Z = \emptyset$ and $B \in \mathcal{B}_T$, we conclude that $Z$ is not a hitting set of $\mathcal{B}_T$ and therefore $\ord(\mathcal{B}_T) = k$.

For the second part of the lemma, let $X$ be a hitting set of $\mathcal{B}_T$.
Then $|V(C) \cap T| \leq k-1$ holds for every strong component $C$ of $D \setminus X$ and, by definition, $X$ is a $(T,k-1)$-\balsep.
Similarly, if $X$ is a $(T, k-1)$-\balsep then, by definition of $\mathcal{B}_T$, $X$ is a hitting set of $\mathcal{B}_T$ and the result follows.
\end{proof}

Note that we can check whether a given set $X \subseteq V(D)$ is a hitting set of $\mathcal{B}_T$ by enumerating the strong components of $D \setminus X$ and, for each such a component $C$, checking whether $|V(C) \cap T| \geq k$. This can be done in time $\Ocal(n+m)$.
For the remainder of this section, and unless stated otherwise, let $T$ be a $(k-1, k-1)$-linked set with $|T| = 2k-1$.
In what follows, we use $T$-brambles to adapt Proposition~\ref{lemma:directed_grid_bramble_path} into an \textsf{FPT} algorithm.


To prove our version of Proposition~\ref{lemma:directed_grid_bramble_path}, we start with a $T$-bramble $\mathcal{B}_T$ of order $g(k)$ (the value of $g(k)$ is specified later) in a digraph $D$ with $\dtw(D) \geq 3g(k)-1$, and then we show how to find in polynomial time a path $P(\mathcal{B}_T)$ that is a hitting set of $\mathcal{B}_T$, adapting the proof of Proposition~\ref{lemma:directed_grid_path_hitting_set} shown in~\cite[Lemma 4.3 of the full version]{Kawarabayashi:2015:DGT:2746539.2746586}.
Next, we need to show how to split $\mathcal{B}_T$ into brambles of order at least $\lceil k/2 \rceil$ whose elements are intersected by subpaths of $P(\mathcal{B}_T)$.
We do this by growing a subpath of $P'$ of $P(\mathcal{B}_T)$ iteratively while checking, on each iteration, whether the set $\mathcal{B}'_T$ of elements of $\mathcal{B}_T$ intersecting $V(P')$ is a bramble of adequate order.

We now show how our choice of $\mathcal{B}_T$ allows us to estimate the order of $\mathcal{B}'_T$ by computing the order of its ``complement bramble'' $\mathcal{B}_T \setminus \mathcal{B}'_T$, and we show how to do this procedure in {\sf FPT} time with parameter $\ord(\mathcal{B}_T)$.
These ideas are formalized by the following definitions and results.
\begin{definition}\label{def:BX}
Let $X \subseteq V(D)$ and $\mathcal{B}$ be a bramble in $D$. The \emph{restricted bramble} $\mathcal{B}(X)$ contains the elements of $\mathcal{B}$ intersecting $X$ and its \emph{complement bramble} $\overline{\mathcal{B}}(X)$ contains the elements of $\mathcal{B}$ disjoint from $X$.
Formally,
\[\mathcal{B}(X) = \{B \in \mathcal{B} \mid V(B) \cap X \neq \emptyset\},\]
\[\overline{\mathcal{B}}(X) = \{B \in \mathcal{B} \mid V(B) \cap X = \emptyset \}.\]
\end{definition}
Notice that both $\mathcal{B}(X)$ and $\overline{\mathcal{B}}(X)$ are brambles, as both are subsets of a bramble $\mathcal{B}$. Additionally, $\mathcal{B}(X)$ is disjoint from $\overline{\mathcal{B}}(X)$ and the union of a  hitting set of the former with a hitting set of the latter is a hitting set of $\mathcal{B}$.
From this remark, we have that
\begin{equation}\label{eq:bramble_order}
\ord(\mathcal{B}(X)) + \ord(\overline{\mathcal{B}}(X)) \geq \ord(\mathcal{B}),
\end{equation}
and although in general the order of $\mathcal{B}(X)$ is hard to compute, we can estimate it by knowing the order of its complement bramble $\overline{\mathcal{B}}(X)$ and $\ord(\mathcal{B})$.

Consider now the brambles $\mathcal{B}_T$, $\mathcal{B}_T(X)$, and $\overline{\mathcal{B}_T}(X)$ for some $X \subseteq V(D)$.
The following results show that hitting sets of $\overline{\mathcal{B}_T}(X)$ are exactly $(T \setminus X, k-1)$-\balsep{s} in $D \setminus X$.

\begin{lemma}\label{lemma:haven_like_bramble_x}
Let $X,Z \subseteq V(D)$ and $B$ be a strongly connected subgraph of $D$.
Then $B \in \overline{\mathcal{B}_T}(X)$ and $V(B) \cap Z = \emptyset$ if and only if $B$ is a strongly connected subgraph of $D \setminus (Z \cup X)$ with $|V(B) \cap T| \geq k$.

\end{lemma}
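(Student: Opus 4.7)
The plan is to prove both implications by unpacking the relevant definitions. The lemma is essentially a bookkeeping statement asserting that, when $V(B)$ avoids $Z \cup X$, being a witness for membership in $\overline{\mathcal{B}_T}(X)$ within $D$ is equivalent to being the same kind of witness inside the induced subdigraph $D \setminus (Z \cup X)$. There is no real combinatorial obstacle; the only point to watch is that the properties ``induced'' and ``strongly connected'' transfer cleanly in both directions between $D$ and $D \setminus (Z \cup X)$ once we know $V(B) \cap (Z \cup X) = \emptyset$.

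For the forward implication, I would assume $B \in \overline{\mathcal{B}_T}(X)$ together with $V(B) \cap Z = \emptyset$. By Definition~\ref{def:BX} and the definition of $\mathcal{B}_T$, this gives that $B$ is an induced strongly connected subgraph of $D$ with $|V(B) \cap T| \geq k$ and $V(B) \cap X = \emptyset$. Combining the disjointness conditions yields $V(B) \subseteq V(D) \setminus (Z \cup X)$, so deleting $Z \cup X$ from $D$ removes no vertex or edge of $B$. Hence $B$ is still an induced strongly connected subgraph of $D \setminus (Z \cup X)$ with the required $|V(B) \cap T| \geq k$.

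For the reverse implication, I would start from the assumption that $B$ is a strongly connected subgraph of $D \setminus (Z \cup X)$ with $|V(B) \cap T| \geq k$. Then $V(B) \subseteq V(D) \setminus (Z \cup X)$ is immediate, giving both $V(B) \cap Z = \emptyset$ and $V(B) \cap X = \emptyset$. Since $D \setminus (Z \cup X)$ is an induced subgraph of $D$ and $B$ is induced and strongly connected in it, the same holds for $B$ viewed inside $D$. Together with $|V(B) \cap T| \geq k$ this shows $B \in \mathcal{B}_T$, and the disjointness from $X$ then places $B$ in $\overline{\mathcal{B}_T}(X)$, completing the equivalence.

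The proof requires only Definition~\ref{def:BX} and the definition of $\mathcal{B}_T$; no appeal to the $(k-1,k-1)$-linkedness of $T$ or to Lemma~\ref{lemma:BT_order_hitting_set} is needed. I expect the whole argument to fit in a few lines, and the main use of this lemma afterwards will be to turn the task of finding minimum hitting sets of $\overline{\mathcal{B}_T}(X)$ into a \textsc{Balanced Separator} instance on $D \setminus X$ with terminal set $T \setminus X$, which is the reason the statement is phrased with an arbitrary ``test set'' $Z$.
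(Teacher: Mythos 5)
Your proposal is correct and follows essentially the same route as the paper: both directions are proved by unpacking Definition~\ref{def:BX} and the definition of $\mathcal{B}_T$, observing that disjointness of $V(B)$ from $Z \cup X$ lets $B$ pass unchanged between $D$ and $D \setminus (Z \cup X)$. Your write-up is, if anything, slightly more explicit than the paper's two-line argument about why inducedness and strong connectivity transfer, but the substance is identical.
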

\begin{proof}
For the necessity, assume that $B \in \overline{\mathcal{B}_T}$ and $V(B) \cap Z = \emptyset$.
Then by the definition of $\overline{\mathcal{B}_T}(X)$, $B$ is a strongly connected subgraph of $D \setminus (Z \cup X)$ intersecting $T$ in at least $k$ vertices.

For the sufficiency, assume that $B$ is a strongly connected subgraph of $D \setminus (Z \cup X)$ containing at least $k$ vertices of $T$.
Then $B \in \overline{\mathcal{B}_T}(X)$ by the definition of $\overline{\mathcal{B}_T}(X)$ and the lemma follows since it is disjoint from $Z \cup X$.
\end{proof}

\noindent  The contrapositive of Lemma~\ref{lemma:haven_like_bramble_x} characterizes hitting sets of $\overline{\mathcal{B}_T}(X)$.
\begin{corollary}\label{corollary:haven_like_bramble_x_contrapositive}
Let $X,Z \subseteq V(D)$. $Z$ is a hitting set of $\overline{\mathcal{B}_T}(X)$ if and only if $Z$ is a $(T \setminus X, k-1)$-\balsep in $D \setminus X$.
\end{corollary}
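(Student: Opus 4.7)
The plan is to derive the corollary as the direct contrapositive of Lemma~\ref{lemma:haven_like_bramble_x}, carefully translating the statement ``$Z$ hits every element of $\overline{\mathcal{B}_T}(X)$'' into a statement about strong components of $D \setminus (X \cup Z)$. Since $\overline{\mathcal{B}_T}(X)$ is defined in terms of strongly connected subgraphs $B$ with $|V(B) \cap T| \geq k$ and disjoint from $X$, the whole argument is a short chain of equivalences; there is no real combinatorial work to do beyond bookkeeping.

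First, I would unfold the definition of a hitting set: $Z$ is a hitting set of $\overline{\mathcal{B}_T}(X)$ if and only if there is no $B \in \overline{\mathcal{B}_T}(X)$ with $V(B) \cap Z = \emptyset$. Next, I would invoke Lemma~\ref{lemma:haven_like_bramble_x}, which tells us that the existence of such a $B$ is equivalent to the existence of a strongly connected subgraph $B$ of $D \setminus (X \cup Z)$ with $|V(B) \cap T| \geq k$. Negating both sides, $Z$ hits $\overline{\mathcal{B}_T}(X)$ if and only if every strongly connected subgraph of $D \setminus (X \cup Z)$ contains at most $k-1$ vertices of $T$.

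The second step is to reduce ``every strongly connected subgraph'' to ``every strong component''. One direction is immediate since strong components are themselves strongly connected subgraphs; the other direction uses that every strongly connected subgraph is contained in a unique strong component $C$, and $|V(B) \cap T| \leq |V(C) \cap T|$. This reduction shows the condition is equivalent to: every strong component $C$ of $D \setminus (X \cup Z)$ satisfies $|V(C) \cap T| \leq k-1$.

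Finally, since any such component $C$ is disjoint from $X$, we have $V(C) \cap T = V(C) \cap (T \setminus X)$, so the condition rephrases as ``every strong component of $(D \setminus X) \setminus Z$ contains at most $k-1$ vertices of $T \setminus X$'', which is precisely the definition of $Z$ being a $(T \setminus X, k-1)$-\balsep in $D \setminus X$, giving the claimed equivalence. The only point requiring any care is the strongly-connected-subgraph versus strong-component translation, but even that is a one-line observation, so I do not anticipate any real obstacle in writing this up formally.
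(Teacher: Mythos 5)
Your proposal is correct and follows exactly the route the paper takes: Corollary~\ref{corollary:haven_like_bramble_x_contrapositive} is obtained there simply as the contrapositive of Lemma~\ref{lemma:haven_like_bramble_x}, and your chain of equivalences (unfolding the hitting-set definition, negating the lemma, passing from strongly connected subgraphs to strong components, and using disjointness from $X$ to replace $T$ by $T \setminus X$) is just a careful write-up of that same argument.
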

Therefore, we can decide whether $\ord(\overline{\mathcal{B}_T}(X)) \leq s$ by testing whether $D$ admits a $(T\setminus X,k-1)$-\balsep of size $s$.
The following result is a direct consequence of Theorem~\ref{theorem:FPT_haven} and Corollary~\ref{corollary:haven_like_bramble_x_contrapositive}.
\begin{corollary}\label{corollary:hitting_set_FPT}
For any $X \subseteq V(D)$, there is an algorithm running in time $\RunTimeAlg$ that decides whether $\ord(\overline{\mathcal{B}_T}(X)) \leq s$.
\end{corollary}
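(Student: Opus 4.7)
The plan is to combine the characterization of hitting sets of $\overline{\mathcal{B}_T}(X)$ given by Corollary~\ref{corollary:haven_like_bramble_x_contrapositive} with the \textsf{FPT} algorithm for \textsc{Balanced Separator} given by Theorem~\ref{theorem:FPT_haven}. More concretely, by Corollary~\ref{corollary:haven_like_bramble_x_contrapositive}, the hitting sets of $\overline{\mathcal{B}_T}(X)$ are precisely the $(T \setminus X, k-1)$-balanced separators of the digraph $D \setminus X$. Therefore deciding whether $\ord(\overline{\mathcal{B}_T}(X)) \leq s$ is logically equivalent to deciding whether $D \setminus X$ admits a $(T \setminus X, k-1)$-balanced separator of size at most $s$.

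The procedure is therefore as follows. First construct the digraph $D' = D \setminus X$ and the terminal set $T' = T \setminus X$ in polynomial time. Then invoke the algorithm from Theorem~\ref{theorem:FPT_haven} on the instance $(D', T', |T'|, k-1, s)$ of \textsc{Balanced Separator}. If it returns a $(T', k-1)$-balanced separator $Z$ of size at most $s$, then $Z$ is a hitting set of $\overline{\mathcal{B}_T}(X)$ of size at most $s$ and we answer YES; otherwise no such separator exists, no such hitting set exists, and we answer NO.

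It remains to verify that the running time fits into the bound $\RunTimeAlg$. Since $T$ has size $2k-1$, we have $|T'| \leq 2k-1$, and the algorithm of Theorem~\ref{theorem:FPT_haven} runs in time $2^{\Ocal(|T'| \log |T'|)} \cdot n^{\Ocal(1)} = 2^{\Ocal(k \log k)} \cdot n^{\Ocal(1)}$ on the instance above. The preprocessing is polynomial in $n$, so the total running time matches $\RunTimeAlg$, as claimed. There is no real obstacle here: the whole content of the corollary is the translation of the order of the complement bramble into a \textsc{Balanced Separator} query, and that translation was established in Corollary~\ref{corollary:haven_like_bramble_x_contrapositive}.
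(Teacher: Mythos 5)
Your proposal is correct and follows exactly the paper's argument: the corollary is obtained by translating $\ord(\overline{\mathcal{B}_T}(X)) \leq s$ into the existence of a $(T \setminus X, k-1)$-balanced separator of size at most $s$ in $D \setminus X$ via Corollary~\ref{corollary:haven_like_bramble_x_contrapositive}, and then invoking the \textsc{Balanced Separator} algorithm of Theorem~\ref{theorem:FPT_haven}, whose running time with $|T \setminus X| \leq 2k-1$ is $\RunTimeAlg$. Your running-time accounting matches the paper's as well.
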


Next, we show how to find such a path $P(\mathcal{B}_T)$ as described above and a well-linked set $A$ of size roughly $\sqrt{2k}$ that is contained in $V(P(\mathcal{B}_T))$.



\subsection{Finding \texorpdfstring{$P(\mathcal{B}_T)$}{P} and \texorpdfstring{$A$}{A}}\label{sec:finding_P_and_A}

The proof of the next lemma is an adaptation of the proof of~\cite[Lemma 4.3 of the full version]{Kawarabayashi:2015:DGT:2746539.2746586} to our scenario.
We exploit the fact that we can check whether a given set of vertices is a hitting set of $\mathcal{B}_T$ in polynomial time: by Lemma~\ref{lemma:BT_order_hitting_set}, a set $X \subseteq V(D)$ is a hitting set of $\mathcal{B}_T$ if and only if $X$ is a $(T, k-1)$-\balsep, and we can check if a given set $X$ is a $(T, k-1)$-\balsep by enumerating the strong components of the input digraph.
\begin{lemma}\label{lemma:bramble_path_polytime}
Let $D$ be a digraph, let $T$ be a $(k-1, k-1)$-linked set of size $2k-1$, and consider the $T$-bramble $\mathcal{B}_T$.
There is an algorithm running in time $\Ocal(n ( n + m))$ that produces a path $P$ that is a hitting set of $\mathcal{B}_T$.
\end{lemma}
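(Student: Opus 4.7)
The plan is to follow the iterative path-extension strategy used in Lemma~4.3 of the full version of~\cite{Kawarabayashi:2015:DGT:2746539.2746586}, with the crucial change that the test ``is $V(P)$ a hitting set of $\mathcal{B}_T$?'' is now performed in polynomial time via Lemma~\ref{lemma:BT_order_hitting_set}, rather than by inspecting every element of the bramble. This modification is what ultimately brings the running time down from \textsf{XP} to polynomial, without affecting the algorithmic skeleton of the original argument.

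First, I would initialize $P$ to a single vertex $v_1$, chosen to lie in some strong component $C^{*}$ of $D$ with $|V(C^{*}) \cap T| \geq k$ if one exists; if no such component exists then $\emptyset$ is already a hitting set of $\mathcal{B}_T$ and the algorithm halts. Then I would enter a loop that, at each iteration, computes the strong components of $D \setminus V(P)$ in time $\Ocal(n+m)$ using Tarjan's algorithm, and tests whether each intersects $T$ in at most $k-1$ vertices. By Lemma~\ref{lemma:BT_order_hitting_set}, this test is equivalent to $V(P)$ being a hitting set of $\mathcal{B}_T$; if the test succeeds, the algorithm returns $P$. Otherwise, there is a strong component $C$ of $D \setminus V(P)$ with $|V(C) \cap T| \geq k$, and we must extend $P$.

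For the extension, let $v_r$ be the last vertex of $P$. I would search for an out-neighbor $u \in V(D) \setminus V(P)$ of $v_r$ from which $V(C)$ remains reachable in $D \setminus V(P)$, and append $u$ to $P$. The crux of the correctness argument is showing that such a $u$ always exists whenever $V(P)$ is not a hitting set. This follows from the bramble property of $\mathcal{B}_T$: any two elements of $\mathcal{B}_T$ either intersect or are joined by edges in both directions, so an easy induction yields that $D[\bigcup_{B \in \mathcal{B}_T} V(B)]$ induces a strongly connected subgraph of $D$; provided $v_1$ is chosen inside this subgraph, the current endpoint $v_r$ continues to lie within it, guaranteeing the availability of the required extension toward $C$.

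Since each iteration strictly increases $|V(P)|$ by one, the loop terminates after at most $n$ iterations. Each iteration performs a strong-components computation and a reachability search, both in $\Ocal(n+m)$ time, giving the total running time $\Ocal(n(n+m))$ stated in the lemma. The main technical obstacle will be a rigorous justification that the extension step never gets stuck---that is, whenever $V(P)$ is not yet a hitting set, the current endpoint $v_r$ admits an out-neighbor in $V(D) \setminus V(P)$ from which the missed big strong component $C$ is still reachable in $D \setminus V(P)$. This is precisely where the bramble structure of $\mathcal{B}_T$ enters the argument, mirroring the corresponding step in~\cite{Kawarabayashi:2015:DGT:2746539.2746586} but now recast in the language of $T$-brambles and $(T,k-1)$-balanced separators.
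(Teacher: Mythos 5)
Your reduction of the hitting-set test to checking that every strong component of $D \setminus V(P)$ contains at most $k-1$ vertices of $T$ (via Lemma~\ref{lemma:BT_order_hitting_set}), and the $\Ocal(n(n+m))$ accounting, are exactly right and match the paper. The gap is in the one place you yourself flag as the crux: the claim that the extension step never gets stuck, and the justification you sketch for it does not work. First, the invariant you propose --- that the endpoint $v_r$ stays inside $U=\bigcup_{B\in\mathcal{B}_T}V(B)$ --- is not maintained by your extension rule: the appended vertex $u$ is only required to be an out-neighbor of $v_r$ outside $V(P)$ from which $C$ is reachable in $D\setminus V(P)$, and such a vertex need not belong to any induced strongly connected subgraph containing $k$ vertices of $T$ (there may be no path from $C$ back to $u$), hence need not lie in $U$. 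Second, even when $v_r\in U$, strong connectivity of $D[U]$ only gives a path from $v_r$ to $C$ in $D$; nothing prevents every such path from passing through $V(P)\setminus\{v_r\}$, so it does not produce an out-neighbor $u\notin V(P)$ with $C$ reachable in $D\setminus V(P)$. As written, the central correctness claim of the lemma is therefore unproven.

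The paper closes exactly this hole with a different invariant: the current path $P_i$ meets a designated bramble element $B_i$ (initially the strong component of $D$ containing at least $k$ vertices of $T$) in precisely its last vertex $v_i$. When a new big component $B_{i+1}$ of $D\setminus V(P_i)$ is found, the bramble property applied to the pair $B_i,B_{i+1}$ yields a path from $v_i$ into $B_{i+1}$ inside $D[V(B_i)\cup V(B_{i+1})]$, which is automatically disjoint from $V(P_i)\setminus\{v_i\}$ because $B_i$ meets $P_i$ only in $v_i$ and $B_{i+1}$ avoids $P_i$; the whole subpath is appended and the invariant is restored. Your vertex-by-vertex scheme can in fact be repaired, but by a different argument than the one you give: since $|T|=2k-1$, at most one strong component of $D\setminus V(P)$ can contain $k$ or more vertices of $T$, so the target $C$ is unique at every step, and one then argues inductively that the reachability condition under which $v_r$ was appended transfers to the next iteration (either $v_r$ lies outside the previous target, which survives and equals the new $C$ and the old witness path supplies the needed out-neighbor, or $v_r$ lies inside it, in which case the new $C$ is a piece of it and is reachable from $v_r$ within it). Without the paper's invariant or such a uniqueness-plus-induction argument, the ``never gets stuck'' step remains a genuine gap.
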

\begin{proof}


If $\ord(\mathcal{B}_T) \geq 1$, then there is an element $B \in \mathcal{B}_T$ and a strong component $C$ of $D$ such that $V(B) \subseteq V(C)$ and, by the definition of $\mathcal{B}_T$, we know that $D[V(C)] \in \mathcal{B}_T$.
Define $B_1 = D[V(C)]$, let $v_1$ be any vertex of $B_1$, and define $P_1$ as the path containing only the vertex $v_1$ and $V(P_0) = \emptyset$.
We proceed to grow a path by iterating from $P_1$ to $P_{k'}$ where they all start from $v_1$, each $P_i$ with $i \geq 2$ contains $P_{i-1}$, and $P_{k'}$ is a hitting set of $\mathcal{B}_T$.
Throughout our process, we maintain a collection of elements $B_i \in \mathcal{B}_T$ such that $V(P_i)$ intersects $V(B_i)$ only in the last vertex $v_i$ of $P_i$.
Since $|V(P_1)| = 1$ and $v_1 \in T \subseteq V(B_1)$, this condition trivially holds for $P_1$.
Assume now that $i$ paths have been chosen this way, with $i \geq 1$.

Consider the last vertex $v_i$ of the path $P_i$ and the element $B_i$ of $\mathcal{B}_T$ with $V(P_i) \cap V(B_i) = \{v_i\}$.
By Lemma~\ref{lemma:BT_order_hitting_set}, $V(P_i)$ is a hitting set of $\mathcal{B}_T$ if and only if $V(P_i)$ is a $(T,k-1)$-\balsep, and this can be tested in time $\Ocal(n+m)$ by enumerating all strong components of $D \setminus V(P_i)$.
If $V(P_i)$ is a hitting set of $\mathcal{B}_T$, then we terminate the algorithm returning $P_i$. 
Otherwise, $V(P_i)$ is not a $(T, k-1)$-\balsep and thus there is a strong component $F$ of $D \setminus V(P_i)$ with $|V(F) \cap T| \geq k$.
Therefore, $D[V(F)]$ is an element of $\mathcal{B}_T$ whose vertices are disjoint from $V(P_i)$ and we choose $B_{i+1} = D[V(F)]$.

Since $\mathcal{B}_T$ is a bramble, we can find a path $P'$ from $v_i \in V(P_i) \cap V(B_i)$ to a vertex $v_{i+1} \in B_{i+1}$ in $D[V(B_i) \cup V(B_{i+1})]$ such that $V(P') \cap V(B_{i+1}) = \{v_{i+1}\}$.
Moreover, $v_i$ is the only vertex of $P_i$ in $B_i$ and thus the path $P'$ does not contain any vertex in $V(P_i) \setminus \{v_i\}$.
Now, let $P_{i+1}$ be the path obtained from $P_i$ by appending $P'$.
By our choice of $P'$, we know that only the last vertex $v_{i+1}$ of $P_{i+1}$ is in $V(B_{i+1})$, as desired, and $V(P_{i+1})$ hits strictly more elements of $\mathcal{B}_T$ than $V(P_i)$.
We repeat the aforementioned procedure now considering the vertex $v_{i+1}$, the path $P_{i+1}$, and the element $B_{i+1}$ of $\mathcal{B}_T$.

Since we can enumerate the strong components of a subgraph of $D$ in time $\Ocal(n+m)$ (see, for instance, \cite[Chapter 6]{Graph.Theory}), at the $i$-th iteration we can find $B_{i+1}$, the path $P_{i+1}$, and the vertex $v_{i+1}$ in time $\Ocal(n+m)$.
Finally, the procedure eventually terminates as $|V(P)| \leq n$ and thus the bound on the running time follows.
\end{proof}


For the remainder of this section, we assume that $g(k) = (k+1)(\lfloor k/2 \rfloor +1) - 1$, that $D$ is a digraph containing a $(g(k)-1, g(k)-1)$-linked set $T$ of size $2g(k)-1$, consider the $T$-bramble $\mathcal{B}_T$, and fix $P$ to be the path received by applying Lemma~\ref{lemma:bramble_path_polytime} with inputs $D$, $T$, and $\mathcal{B}_T$.
To prove Theorem~\ref{theorem:path_sec_contribution}, we use the following definition.
\begin{definition}[$(i)$-split]\label{definition:i-splits}
An \emph{$(i)$-split} $\mathcal{S}$ of $P$ is a collection formed by a set $\{Q_j \mid j \in [i]\}$ of subpaths of $P$, a subpath $P_i$ of $P$, a set of brambles $\{\mathcal{B}_j \mid j \in [i]\}$, a set of vertices $\{a_j \mid j \in [i]\}$, and a set of vertices $X_i$ such that
\begin{enumerate}
\item for $j \in [i]$, vertex $a_j$ is the successor in $P$ of the last vertex of $Q_j$, and, if $j \leq i-1$, the first vertex of $Q_{j+1}$ is the successor in $P$ of vertex $a_j$,
\item for $j \in [i]$, $\ord(\mathcal{B}_j) \geq \lfloor k/2 \rfloor$,
\item for $j \in [i]$, $\mathcal{B}_j \subseteq \mathcal{B}_T$ and $V(Q_j)$ is a hitting set of $\mathcal{B}_j$,
\item $P_i$ is the subpath of $P$ from the successor in $P$ of the last vertex of $Q_i$ to the last vertex of $P$, and
\item $X_i = \bigcup_{j \in [i]}(V(P_j) \cup \{a_j\})$, and
\item the order of $\overline{\mathcal{B}_T}(X_i)$ satisfies
$$\ord(\overline{\mathcal{B}_T}(X_i)) \geq g(k) - i \left(\left\lfloor\frac{k}{2}\right\rfloor +1 \right).$$
\end{enumerate}
\end{definition}
See Figure~\ref{fig:split_bramble_set_X} for an example of a $(2)$-split.
We remark that a $(0)$-split for $P$ consists only of the path $P_0$ with $P_0 = P$ and the empty set $X_0$.
\begin{figure}[!ht]
\centering
\scalebox{.45}{
\begin{tikzpicture}
\node(P-X) at (0,0) {};
\node[scale=.8, blackvertex, label=90:{\huge $a_{2}$}] (P-a) at ($(P-X) + (4,0)$) {};

\node[scale=.8, blackvertex] (P-p2out) at ($(P-a) + (-2,0)$) {};
\node[scale=.8, blackvertex] (P-p2in) at ($(P-p2out) + (-3,0)$) {};
\node[scale=.8, blackvertex, label=90:{\huge $a_1$}] (P-a1) at ($(P-p2in) + (-2,0)$) {};
\node[scale=.8, blackvertex] (P-p1out) at ($(P-a1) + (-2,0)$) {};
\node[scale=.8, blackvertex] (P-p1in) at ($(P-p1out) + (-3,0)$) {};

\draw[-{Latex[length=3mm, width=2mm]}, shorten >= .1cm] (P-p1in) -- (P-p1out) node[midway,above,yshift=.5cm] {\huge $Q_1$};
\foreach \i in {1,...,4}{
  \node[blackvertex, scale=.5] (P-\i) at ($(P-p1in) + (\i *0.5, 0)$) {};
}

\draw[-{Latex[length=3mm, width=2mm]}, shorten >= .1cm] (P-p2in) -- (P-p2out)  node[midway,above,yshift=.5cm ] {\huge $Q_2$};
\foreach \i in {1,...,4}{
  \node[blackvertex, scale=.5] (P-\i) at ($(P-p2in) + (\i *0.5, 0)$) {};
}

\draw[-{Latex[length=3mm, width=2mm]}, shorten >= .1cm] (P-p1out) -- (P-a1);
\draw[-{Latex[length=3mm, width=2mm]}, shorten >= .1cm] (P-a1) -- (P-p2in);
\draw[-{Latex[length=3mm, width=2mm]}, shorten >= .1cm] (P-p2out) -- (P-a);

\draw[shorten >= .2cm, shorten <= .2cm] ($(P-p1in)+ (0,-1)$) -- ($(P-p1in)+ (0,-2)$);
\draw[shorten >= .2cm, shorten <= .2cm] ($(P-a)+ (0,-1)$) -- ($(P-a)+ (0,-2)$);
\node (P-n) at ($(P-p1in) + (6,-1.5)$) {\huge $X_2$};
\draw[dashed, shorten >= .2cm, shorten <= .2cm] ($(P-p1in) + (0,-1.5)$) -- (P-n);
\draw[dashed, shorten >= .2cm, shorten <= .2cm] (P-n) -- ($(P-a) + (0,-1.5)$);

\node[scale=.8, blackvertex] (P-b) at ($(P-a) + (2,0)$) {};
\node[scale=.8, blackvertex] (P-c) at ($(P-b) + (10,0)$) {};
\draw[-{Latex[length=3mm, width=2mm]}, shorten >= .1cm] (P-a) -- (P-b);
\draw[-{Latex[length=3mm, width=2mm]}, shorten >= .1cm] (P-b) -- (P-c);

\draw[shorten >= .2cm, shorten <= .2cm] ($(P-b)+ (0,-1)$) -- ($(P-b)+ (0,-2)$);
\draw[shorten >= .2cm, shorten <= .2cm] ($(P-c)+ (0,-1)$) -- ($(P-c)+ (0,-2)$);
\node (P-n) at ($(P-b) + (5,-1.5)$) {\huge $V(P_2) = V(P) \setminus X_2$};
\draw[dashed, shorten >= .2cm, shorten <= .2cm] ($(P-b) + (0,-1.5)$) -- (P-n);
\draw[dashed, shorten >= .2cm, shorten <= .2cm] (P-n) -- ($(P-c) + (0,-1.5)$);

\foreach \i in {1,...,5}{
  \node[draw,circle,align=center,scale=4] (P-\i) at  ($(P-b) + (1.6*\i,0)$) {};
  }

\draw [decorate,decoration={brace,amplitude=10pt},xshift=0pt,yshift=0]
($(P-b) + (1,0.8)$) -- ($(P-c) + (-1.5,0.8)$) node [black,midway,yshift=.85cm] {\huge $\overline{\mathcal{B}_T}(X_2)$};

\end{tikzpicture}
}%
\caption{Illustration of a $(2)$-split of $P$. A circle represents an element of the bramble $\overline{\mathcal{B}}(X_2)$.}
\label{fig:split_bramble_set_X}
\end{figure}
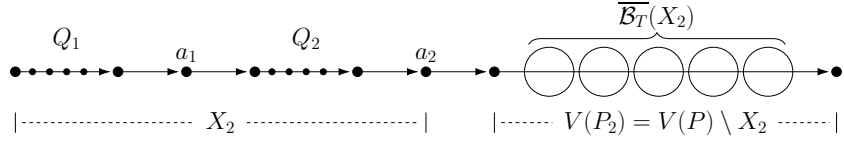

Now, the proof of Theorem~\ref{theorem:path_sec_contribution} follows three steps.
First, Lemma~\ref{lemma:split_implies_well_linked} states that the set of vertices $\{a_1, \ldots, a_i\}$ of an $(i)$-split of $P$ is well-linked when  $i \leq k$.
Thus our goal is to construct a $(k)$-split of $P$.
Then, Lemma~\ref{lemma:split_iteration} states that, for $i \geq 0$, we can construct an $(i+1)$-split of $P$ from an $(i)$-split of $P$ in \textsf{FPT} time if $\ord(\overline{\mathcal{B}}(X_i))$ is large enough.
Finally, the proof of Theorem~\ref{theorem:path_sec_contribution} starts from a $(0)$-split of $P$ and iterates Lemma~\ref{lemma:split_iteration} until a $(k)$-split is constructed.

\begin{lemma}\label{lemma:split_implies_well_linked}
Let $\mathcal{S}_i$ be an $(i)$-split of $P$ with $i \in [k]$.
Then the set $A$ with $A = \{a_1, \ldots, a_i\}$ is well-linked in $D$.
\end{lemma}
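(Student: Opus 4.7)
The plan is to prove the claim by contradiction using Menger's theorem (Theorem~\ref{thm:Menger}). Fix disjoint $X,Y\subseteq A$ with $|X|=|Y|=\ell$. Because $X\cup Y\subseteq A$, $|A|=i$, and $i\in[k]$, we have $2\ell\leq i\leq k$ and hence $\ell\leq\lfloor k/2\rfloor$. It suffices to show that every $(X,Y)$-separator in $D$ has size at least $\ell$: once this holds, Menger's theorem supplies $\ell$ internally vertex-disjoint $X$-$Y$ paths, whose distinct endpoints in $X$ and $Y$ make them fully vertex-disjoint. Suppose, toward a contradiction, that some $(X,Y)$-separator $S$ satisfies $|S|\leq\ell-1<\lfloor k/2\rfloor$.

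For each $j\in[i]$, Property~(2) of the $(i)$-split yields $\ord(\mathcal{B}_j)\geq\lfloor k/2\rfloor>|S|$, so $S$ is not a hitting set of $\mathcal{B}_j$, and hence one may pick an element $B_j\in\mathcal{B}_j$ with $V(B_j)\cap S=\emptyset$. By Property~(3), each such $B_j$ also lies in $\mathcal{B}_T$. From here, the strategy is to exhibit, for every $a_x\in X$ and $a_y\in Y$, a directed path from $a_x$ to $a_y$ in $D\setminus S$, which contradicts the separator property of $S$. The argument assembles two ingredients: a \emph{global bramble connectivity} step showing that $V(B_1),\ldots,V(B_i)$ all lie in one strong component of $D\setminus S$, and a \emph{local routing} step showing that each $a_j$ lies in the same strong component of $D\setminus S$ as $V(B_j)$. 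The global step is immediate from the definition of bramble applied to $\mathcal{B}_T$: any two $B_{j_1},B_{j_2}$ either intersect or are joined by edges in both directions in $D$, and since each $B_j$ is strongly connected and disjoint from $S$, they all lie in a common strong component of $D\setminus S$.

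The main obstacle is the local routing step. Property~(1) gives a $P$-edge from the last vertex $p_j$ of $Q_j$ to $a_j$, and Property~(3) gives $V(Q_j)\cap V(B_j)\neq\emptyset$. The natural attempt---travel inside the strongly connected $B_j$ to a vertex of $V(B_j)\cap V(Q_j)$, walk along $Q_j$ to $p_j$, and cross the edge $(p_j,a_j)$, with a symmetric return route---breaks if $Q_j$ or $p_j$ meets $S$. The plan for circumventing this is to exploit the slack $|S|<\ord(\mathcal{B}_j)$: among the many elements of $\mathcal{B}_j$ avoiding $S$, one selects one whose intersection with $Q_j$ lies strictly beyond the last vertex of $S$ on $Q_j$, so that the portion of $Q_j$ actually used is $S$-free; and bramble connectivity inside $\mathcal{B}_j$, combined with the strong connectivity of each element, is then used to produce the return route from $a_j$ back into $V(B_j)$ through a different element. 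Verifying the details of this routing, and in particular handling the case $p_j\in S$ by the same slack argument, is the delicate point of the proof; combined with the global bramble-connectivity step, it places every $a_x\in X$ and every $a_y\in Y$ in a common strong component of $D\setminus S$ and delivers the required contradiction.
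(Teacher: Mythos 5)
There is a genuine gap, and it sits exactly where you flag the ``delicate point.'' Your plan hinges on a \emph{local routing} claim: that each $a_j$ (at least each $a_j\in X\cup Y\setminus S$) lies in the same strong component of $D\setminus S$ as an $S$-free element of $\mathcal{B}_j$. For a \emph{fixed} $j$ this is simply not provable from the hypotheses, and your ``slack'' repair does not save it. The only edges of $D$ you control are those of $P$ and those inside bramble elements; if the predecessor $p_j$ of $a_j$ on $P$ lies in $S$, or if the first vertex of $Q_{j+1}$ lies in $S$, then $a_j$ may have no other in- or out-edges at all in $D$, and no choice of a different element of $\mathcal{B}_j$ creates one. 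Likewise, nothing guarantees an $S$-free element of $\mathcal{B}_j$ whose intersection with $Q_j$ lies beyond the last $S$-vertex of $Q_j$: the order bound $\ord(\mathcal{B}_j)\geq\lfloor k/2\rfloor>|S|$ only gives \emph{some} $S$-free element, with no control on \emph{where} it meets $Q_j$. A single separator vertex placed on $P$ immediately before or after one particular $a_j$ already defeats the routing for that $j$, which is consistent with $|S|\le \ell-1$ whenever $\ell\ge 2$.

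The missing idea is a counting argument over \emph{disjoint} units rather than a per-vertex repair, which is what the paper does. To contradict separator-hood you only need one path from some vertex of $X$ to some vertex of $Y$ in $D\setminus S$. For the $X$-side consider the $\ell$ pairwise disjoint sets $\{a_j\}\cup V(Q_{j+1})$ over $a_j\in X$ (setting $Q_{i+1}=P_i$ and $\mathcal{B}_{i+1}=\overline{\mathcal{B}_T}(X_i)$, whose order is still at least $\lfloor k/2\rfloor$ by property (6) and $i\le k$ --- this boundary case $j=i$ is also absent from your write-up, and it is where the hypothesis $i\in[k]$ is used); since $|S|\le \ell-1$, one such unit avoids $S$ entirely, and since $|S|<\ord(\mathcal{B}_{j+1})$ there is $B\in\mathcal{B}_{j+1}$ disjoint from $S$, so $a_j$ reaches $V(B)$ along the untouched $Q_{j+1}$. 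Symmetrically, among the disjoint units $\{a_\ell\}\cup V(Q_\ell)$ over $a_\ell\in Y$ one avoids $S$, giving $B'\in\mathcal{B}_\ell$ disjoint from $S$ and a route from $V(B')$ through $Q_\ell$ into $a_\ell$. Since any two elements of $\mathcal{B}_T$ intersect, concatenating these pieces yields the single required $X$--$Y$ path in $D\setminus S$. Note also the asymmetry this exploits (forward path $Q_{j+1}$ for sources in $X$, backward path $Q_\ell$ for sinks in $Y$), which replaces your stronger, unprovable two-way claim that $a_j$ and $V(B_j)$ share a strong component; your ``global bramble connectivity'' step, by contrast, is fine.
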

\begin{proof}
Let $X$ and $Y$ be disjoint subsets of $A$ such that $|X| = |Y| = r$ for some $r \in [i]$.
Suppose, by contradiction, that there is no set of $r$ pairwise internally disjoint paths from $X$ to $Y$ in $D$.
Then, by Menger's Theorem, there is an $(X,Y)$-separator $S \subseteq V(D)$ such that $|S| \leq r-1$.

Let $Q_{i+1} = P_i$ and $\mathcal{B}_{i+1} = \overline{\mathcal{B}}(X_i)$.
By the definition of $(i)$-splits and our choice of $Q_{i+1}$, we know that  for every $a_j \in A$ with $j \in [i]$, $Q_{j}$ is a path ending on the vertex occurring in $P$ before $a_j$, and $Q_{j+1}$ is a path starting on the first vertex occurring in $P$ after $a_j$ (see Figure~\ref{fig:split_bramble_set_X} for an example when $i=2$).
Moreover, we have
$$\ord(\mathcal{B}_{i+1}) \geq g(k) - i\left(\left\lfloor \frac{k}{2} \right\rfloor + 1\right)$$
which implies that $\ord(\mathcal{B}_{i+1}) \geq \lfloor k/2 \rfloor$ since $i \leq k$.

As $|S| \leq r-1 \leq \lfloor k/2 \rfloor  - 1$ there is a $j \in [i-1]$ such that $a_j \in X \setminus S$ and $S \cap V(Q_{j+1}) = \emptyset$.
Furthermore, since $S$ is not large enough to be a hitting set of $\mathcal{B}_{j+1}$, there must be $B \in \mathcal{B}_{j+1}$ such that $S \cap V(B) = \emptyset$.
Similarly, there are $a_\ell \in Y \setminus S$ and $B' \in \mathcal{B}_\ell$ such that $S \cap V(Q_\ell) = \emptyset$ and $S \cap V(B') = \emptyset$.

By choice,  clearly $V(Q_{j+1})$ is a hitting set of $\mathcal{B}_{j+1}$.
Since $S$ is disjoint from $V(Q_{j+1}) \cup V(B)$ and $V(B)$ induces a strongly connected subgraph of $D$, we conclude that there is in $D \setminus S$ a path from $a_{j}$ to any vertex in $V(B)$.
Similarly, there is a path from any vertex in $V(B')$ to $a_{\ell}$ in $D \setminus S$.
Finally, since every pair of elements in $\mathcal{B}_T$ intersect, we conclude that there is a path in $D \setminus S$ from $a_{j}$ to $a_{\ell}$ using the path $Q_{j+1}$, the vertices in $V(B) \cup V(B)'$, and the path $Q_{\ell}$.
This contradicts our choice of $S$, and thus we conclude that every $(X,Y)$-separator in $D$ must have size at least $r$, and the result follows by Menger's Theorem.
\end{proof}

\begin{lemma}\label{lemma:split_iteration}
Let $\mathcal{S}_i$ be an $(i)$-split of $P$ with $i \leq k-1$.
Then in time $2^{\Ocal(k^2 \log k)}\cdot n^{\Ocal(1)}$ we can construct an $(i+1)$-split of $P$.
\end{lemma}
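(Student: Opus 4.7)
The plan is to extend the given $(i)$-split by greedily growing a new subpath along $P_i$, the unused suffix of $P$, and stopping at the first vertex where a carefully chosen threshold on the complement bramble's order is crossed. Concretely, for $j = 0, 1, \ldots, |V(P_i)|$, let $Q^{(j)}$ be the length-$j$ prefix of $P_i$ and set
\[\phi(j) := \ord\bigl(\overline{\mathcal{B}_T}(X_i \cup V(Q^{(j)}))\bigr).\]
Two facts about $\phi$ will drive the argument. First, $\phi(0) \geq g(k) - i(\lfloor k/2\rfloor + 1)$ by the $(i)$-split hypothesis, while $\phi(|V(P_i)|) = 0$ since $V(P)$ hits $\mathcal{B}_T$ by Lemma~\ref{lemma:bramble_path_polytime}. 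Second, $\phi$ is monotone non-increasing and $1$-Lipschitz: adding a single vertex $v$ to $X_i$ can drop $\ord(\overline{\mathcal{B}_T}(\cdot))$ by at most one, because a minimum hitting set of the new (smaller) bramble, augmented by $\{v\}$, already hits the old one.

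I would then pick $j^*$ as the smallest index for which $\phi(j^*) \leq g(k) - (i+1)(\lfloor k/2\rfloor + 1) + 1$. By the Lipschitz property combined with the minimality of $j^*$, this inequality is tight, so $\phi(j^*+1) \geq g(k) - (i+1)(\lfloor k/2\rfloor + 1)$. The candidate $(i+1)$-split then takes $Q_{i+1} := Q^{(j^*)}$, with $a_{i+1}$ the successor of the last vertex of $Q_{i+1}$ in $P$, $\mathcal{B}_{i+1} := \{B \in \overline{\mathcal{B}_T}(X_i) : V(B) \cap V(Q_{i+1}) \neq \emptyset\}$, and $P_{i+1}$ the remaining suffix of $P$. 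Conditions~1, 3, 4, and~5 of Definition~\ref{definition:i-splits} hold by construction, and condition~6 is exactly the bound on $\phi(j^*+1)$ since $X_i \cup V(Q^{(j^*+1)}) = X_{i+1}$. For condition~2, I would invoke the subadditivity~\eqref{eq:bramble_order} applied to the bramble $\overline{\mathcal{B}_T}(X_i)$ split at $V(Q^{(j^*)})$, obtaining $\ord(\mathcal{B}_{i+1}) + \phi(j^*) \geq \ord(\overline{\mathcal{B}_T}(X_i))$, which combined with the tight value of $\phi(j^*)$ gives $\ord(\mathcal{B}_{i+1}) \geq \lfloor k/2\rfloor$.

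For the running time, I would locate $j^*$ by scanning $j$ from $0$ upward along $P_i$ and, at each step, invoking Corollary~\ref{corollary:hitting_set_FPT} with $X := X_i \cup V(Q^{(j)})$ to test whether $\phi(j)$ has reached the threshold. Since $|T| = 2g(k) - 1 = \Theta(k^2)$, each such call runs in $2^{\Ocal(k^2 \log k)}\cdot n^{\Ocal(1)}$ time, and at most $n$ calls are needed, matching the claimed bound. The delicate point---and the step I expect to require the most care---is verifying that the target threshold lies in the range $[1, \phi(0)]$, so that $j^*$ exists and still leaves at least one vertex $v_{j^*+1}$ available to serve as $a_{i+1}$; this is precisely where the choice $g(k) = (k+1)(\lfloor k/2\rfloor + 1) - 1$ together with the inductive bound $i \leq k-1$ enter, ensuring that the bramble hierarchy has enough room to support all $k$ rounds of extraction that Theorem~\ref{theorem:path_sec_contribution} requires.
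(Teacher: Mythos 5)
Your proposal is correct and takes essentially the same route as the paper's proof: grow $Q_{i+1}$ vertex by vertex along $P$, test $\ord(\overline{\mathcal{B}_T}(X_i \cup V(Q_{i+1})))$ via Corollary~\ref{corollary:hitting_set_FPT}, stop at the first crossing of the threshold $g(k)-(i+1)(\lfloor k/2\rfloor+1)+1$, and use the fact that this order drops by at most one per added vertex together with Inequality~\ref{eq:bramble_order} to certify conditions~2 and~6. Your explicit verification that the threshold is at least $1$ for $i\leq k-1$ (so that $j^*$ exists and a successor $a_{i+1}$ remains on $P$) is a detail the paper leaves implicit, and your stopping value $\phi(j^*)=g(k)-(i+1)(\lfloor k/2\rfloor+1)+1$ matches the value the paper's argument actually uses.
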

\begin{proof}
For a digraph $F$, for the sake of notational simplicity, we abbreviate --recall Definition~\ref{def:BX}-- $\mathcal{B}(V(F))$ and $\overline{\mathcal{B}}(V(F))$ as $\mathcal{B}(F)$ and $\overline{\mathcal{B}}(F)$, respectively, and write $\mathcal{B}(v)$ and $\overline{\mathcal{B}}(v)$ (omitting the braces) for $v \in V(F)$.
Let $\mathcal{B'} = \overline{\mathcal{B}}(X_i)$.

The goal is to construct a subpath $Q_{i+1}$ of $P$ starting on the first vertex of $P$ appearing after the vertex $a_i$ (or simply the first vertex of $P$ if $i=0$) such that
$$\ord(\mathcal{B'}(Q_{i+1})) \geq \left\lfloor\frac{k}{2}\right\rfloor.$$
That is, the order of the bramble containing the elements of $\mathcal{B}$ which are disjoint from $X_i$ while intersecting $V(Q_{i+1})$ is at least $\lfloor k/2 \rfloor$.
We start with $V(Q_{i+1}) = \emptyset$.
By Inequality~\ref{eq:bramble_order}, we have that
$$\ord(\mathcal{B}'(Q_{i+1})) \geq \ord(\mathcal{B}') - \ord(\overline{\mathcal{B}'}(Q_{i+1}))$$
at any point of the procedure.
Now, we iteratively grow $Q_{i+1}$, adding one vertex at a time while testing, at each newly added vertex, whether
$$\ord(\overline{\mathcal{B}'}(Q_{i+1})) \leq g(k) - i\left(\left\lfloor\frac{k}{2}\right\rfloor + 1\right) - 1 - \left\lfloor \frac{k}{2} \right\rfloor.$$
Observe that, when $V(Q_{i+1}) = \emptyset$, we have $\overline{\mathcal{B}'}(Q_{i+1}) = \mathcal{B}'$ and thus
$$\ord(\overline{\mathcal{B}'}(Q_{i+1})) \geq g(k) - i\left(\left\lfloor \frac{k}{2}\right\rfloor +1\right) >  g(k) - i\left(\left\lfloor \frac{k}{2}\right\rfloor +1\right) - \left\lfloor\frac{k}{2}\right\rfloor.$$

As $\mathcal{B'} = \overline{\mathcal{B}}(X_i)$, we have $\overline{\mathcal{B}'}(Q_{i+1}) = \overline{\mathcal{B}}(X_i \cup V(Q_{i+1}))$ and thus, by Corollary~\ref{corollary:hitting_set_FPT}, we can test whether $\ord(\overline{\mathcal{B}'}(Q_{i+1})) \leq s$ in time $2^{\Ocal(k^2 \log k)}\cdot n^{\Ocal(1)}$ for any $s \in [g(k)]$ since $g(k) = \Ocal(k^2)$.

On a negative answer, we add to $Q_{i+1}$ the first vertex of $P$ not contained in $V(Q_{i+1}) \cup X_i$ and repeat the test.
On the first time we obtain a positive answer to this test, we set $\mathcal{B}_{i+1} = \mathcal{B}'(Q_{i+1})$, define $a_{i+1}$ to be the first vertex appearing in $P$ after the last vertex of $Q_{i+1}$, and stop the procedure.
In this case, we have that $\ord(\mathcal{B}_{i+1}) \geq \lfloor k/2 \rfloor$ and since $\ord\left(\overline{\mathcal{B}'}(Q_{i+1})\right)$ can decrease by at most one each time we increase by one the size of $V(Q_{i+1})$, this procedure actually ends with $\ord(\overline{\mathcal{B}'}(Q_{i+1})) = g(k) - i(\lfloor k/2 \rfloor + 1) - \lfloor k/2 \rfloor$.

Now, we define $X_{i+1} = X_i \cup V(Q_{i+1}) \cup \{a_{i+1}\}$ and $P_{i+1}$ to be the subpath of $P$ with $V(P_{i+1}) = V(P) \setminus X_{i+1}$.
Finally, let $\mathcal{B}^* = \overline{\mathcal{B}'}(Q_{i+1})$.
Then by Inequality~\ref{eq:bramble_order}, 
$$\ord(\mathcal{B}^*(P_{i+1})) \geq \ord(\mathcal{B}^*) - \ord(\overline{\mathcal{B}^*}(P_{i+1}))$$
and observing that $\mathcal{B}^*(P_{i+1}) = \overline{\mathcal{B}}(X_{i+1})$, we conclude that
$$\ord(\overline{\mathcal{B}}(X_{i+1})) \geq g(k) - i\left(\left\lfloor \frac{k}{2}\right\rfloor +1\right) - \left\lfloor\frac{k}{2}\right\rfloor - 1 = g(k) - (i+1)\left(\left\lfloor \frac{k}{2}\right\rfloor +1\right),$$
as required, since $\overline{\mathcal{B}^*}(P_{i+1}) = \overline{\mathcal{B}^*}(a_{i+1})$ and thus $\ord(\overline{\mathcal{B}^*}(P_{i+1})) \leq 1$.
Then, we output the $(i+1)$-split $\mathcal{S}_{i+1}$ of $P_i$ formed by the sequence of paths $Q_1, \ldots, Q_{i+1}$, the path $P_{i+1}$, the sequence of brambles $\mathcal{B}_1, \ldots, \mathcal{B}_{i+1}$, the set of vertices $\{a_1, \ldots, a_{i+1}\}$, and the set of vertices $X_{i+1}$.
\end{proof}

We remark that the bramble $\overline{\mathcal{B}'}(Q_{i+1})$ is used only in the proof of Lemma~\ref{lemma:split_implies_well_linked} and thus we do not need to maintain it during the algorithm.
However, if we want to store this information, it suffices to maintain the set $T$, the set $X_i$, and the path $Q_{i+1}$ since the bramble $\overline{\mathcal{B}'}(Q_{i+1})$ is equal to the bramble $\mathcal{B}(Q_{i+1})$ in the digraph $D \setminus X_i$.
We are now ready to prove Theorem~\ref{theorem:path_sec_contribution}.
\secondmain*
\begin{proof}
By Lemma~\ref{lemma:BT_order_hitting_set}, the $T$-bramble $\mathcal{B}_T$ has order $g(k)$ and, by Lemma~\ref{lemma:bramble_path_polytime}, we can find a path $P$ that is a hitting set of $\mathcal{B}_T$ in polynomial time.
We start with a trivial $(0)$-split $\mathcal{S}_0$ of $P$ where $P_0 = P$ and $X_0 = \emptyset$.

For $i \in \{0, \ldots, k-1\}$, we apply Lemma~\ref{lemma:split_iteration} with input $\mathcal{S}_i$ to obtain an $(i+1)$-split $\mathcal{S}_{i+1}$ of $P$ in time $2^{\Ocal(k^2 \log k)}\cdot n^{\Ocal(1)}$.
After the last iteration, we obtained a $(k)$-split $\mathcal{S}_k$ of $P$ and, by Lemma~\ref{lemma:split_implies_well_linked}, the set of vertices $\{a_1, \ldots, a_k\}$ of $\mathcal{S}_k$ is well-linked in $D$ and all such vertices are in $V(P)$, as desired.
\end{proof}

By following the remainder of the proof of the Directed Grid Theorem~\cite{Kawarabayashi:2015:DGT:2746539.2746586}, which yields {\sf FPT} algorithms for all the remaining steps (see Section~\ref{section:finding_a_cylindrical_grid}), we can validate Corollary~\ref{theorem:second_contribution}.


\section{Concluding remarks}\label{section:concluding_remarks}
The main consequence of our results is an {\sf FPT} algorithm with parameter $k$ that either produces an arboreal decomposition of width at most $f(k)$ for a digraph $D$ or constructs a cylindrical grid of order $k$ as a butterfly minor of $D$, for some computable function $f(k)$.
This is achieved by adapting some of the steps used in the proof of the Directed Grid Theorem from Kawarabayashi and Kreutzer~\cite{Kawarabayashi:2015:DGT:2746539.2746586}.

For the first possible output of this algorithm, we improve on a result from~\cite{Johnson.Robertson.Seymour.Thomas.01} by providing an {\sf FPT} algorithm with parameter $k$ that either produces an arboreal decomposition of a digraph $D$ with width at most $3k-2$, or concludes that $D$ has a haven of order $k$.
As a tool to prove this result, we consider a generalization of the problem of finding balanced separators in digraphs (we remind the reader that our definition of balanced separators extends the classical definition that can be found, for example, in~\cite{Quantitative.Graph.Theory}) and show how to solve it in {\sf FPT} time with parameter $|T|$.
Since in the undirected case balanced separators are strongly related to the tree-width of undirected graphs, and the only result for balanced separators in the directed case considered only a relaxed version of the problem (see~\cite[Chapter 6]{Quantitative.Graph.Theory}), we consider this result to be of its own interest.

Although it is possible to construct a bramble $\mathcal{B}$ of order $\lfloor k/2 \rfloor$ from a haven of order $k$, this construction is not {\sl efficient} in general, in the sense that we must go through all elements of $\mathcal{B}$ to verify whether a given set $X$ is a hitting set of $\mathcal{B}$.
Motivated by this, we consider a definition of brambles, which we call $T$-brambles, which naturally occur in digraphs of large directed tree-width that are better to work with in a number of ways.
For instance, by reducing to the problem of computing $(T, r)$-balanced separators for $T$, we show how to compute hitting sets of $T$-brambles in {\sf FPT} time when parameterized by $|T|$.

We use our results for $T$-brambles in digraphs of large tree-width to show how to find, in ${\sf FPT}$ time with parameter $k$, a path that is a hitting set of a $T$-bramble $\mathcal{B}_T$ of order $(k+1)(\lfloor k/2 \rfloor +1)$ and a well-linked set of size $k$ that is contained in this path.
This is the second step that we change in the proof of the Directed Grid Theorem~\cite{Kawarabayashi:2015:DGT:2746539.2746586}.
From this point forward, the remaining steps in the proof yield {\sf FPT} algorithms.

Kreutzer and Ordyniak~\cite{KREUTZER20114688} and Ganian et al.~\cite{GANIAN201488} showed that many important problems in digraphs remain hard when restricted to digraphs of bounded directed tree-width.
In particular, Kreutzer and Ordyniak~\cite{KREUTZER20114688} showed that the \textsc{Directed Feedback Vertex Set} (\textsc{DFVS}) problem is \textsf{NP}-complete even when restricted to digraphs of directed tree-width at most five.
However, some open problems in digraphs may benefit from an approach resembling Bidimensionality using our \textsf{FPT} algorithm for the Directed Grid Theorem.
For example, Bezáková et al.~\cite{Bezakova.Curticapean.Dell.Fomin.2016} asked whether the \textsc{Longest Detour} problem in digraphs could be solved by using the Directed Grid Theorem.
To provide more potential applicability of our results, we briefly discuss the parameterized tractability of the \textsc{DFVS} problem.

Chen et al.~\cite{Chen.Liu.Lu.Sullivan.Razgon.2008} provided an algorithm running in time $2^{\Ocal(k \log k)} \cdot n^{\Ocal(1)}$ for the \textsc{DFVS} problem, where $k$ is the size of the solution.
Bonamy et al.~\cite{10.1007/978-3-030-00256-5_6} showed that, when parameterized by the tree-width $t$ of the underlying graph, \textsc{DFVS} is solvable in time $2^{\Ocal(t \log t)} \cdot n^{\Ocal(1)}$ in general digraphs and the dependency on the parameter is improved to $2^{\Ocal(t)}$ when restricted to planar digraphs.
When parameterized by the feedback vertex set number of the underlying graph, Bergougnoux et al.~\cite{benjamin_et_al:LIPIcs:2017:8112} showed that \textsc{DFVS} admits a polynomial kernel in general digraphs, and a linear kernel in digraphs that are embeddable on surfaces of bounded genus.

On the one hand, \textsc{DFVS} remains hard even when restricted to digraphs of directed tree-width at most five~\cite{KREUTZER20114688}, but on the other hand both of the aforementioned parameters related to the underlying graph are individually stronger than the directed tree-width of the input digraph and, by the Directed Grid Theorem~\cite{Kawarabayashi:2015:DGT:2746539.2746586}, every positive instance of \textsc{DFVS} parameterized by the size $k$ of the solution occurs in a digraph of bounded directed tree-width: since a cylindrical grid of order $r$ contains a set of $r$ vertex-disjoint  cycles and butterfly contractions do not generate new paths, the minimum size of a feedback vertex set of a digraph $D$ is at least the order of the largest cylindrical grid that is as a butterfly minor of $D$.
Now, by Corollary~\ref{theorem:second_contribution}, in \textsf{FPT} time with parameter $k$ we can either find a certificate that the considered instance of \textsc{DFVS} is negative (a cylindrical grid of order $k+1$ that is a butterfly minor of the input digraph), or produce an arboreal decomposition of the input digraph of width at most $f(k)$, for some computable function $f: \mathbb{N} \to \mathbb{N}$.

Thus, it is sensible to ask whether similar or improved results for \textsc{DFVS} (when parameterized by the tree-width or the feedback vertex set number of the underlying graph, as previously mentioned) can be proved if we consider that the input digraph has bounded directed tree-width, since by the above discussion we can restrict instances of \textsc{DFVS} to this class of digraphs.

One could also consider the tractability of hard problems in digraphs of bounded directed tree-width under stronger parameterizations.
For example, Lopes and Sau\xspace\cite{LoSa20} recently showed that a relaxation for the \textsc{Directed Disjoint Paths} problem, a notoriously hard problem in digraphs, admits a kernelization algorithm for some choices of parameters.
In this spirit, it seems plausible that combining directed tree-width with other parameters may lead to \textsf{FPT} algorithms for hard problems, and in this context the \textsf{FPT} algorithm presented in this paper may become handy.

It is worth mentioning that Giannopoulou et al.~\cite{GiannopoulouKKK20} recently provided an analogous version of the Flat Wall Theorem~\cite{ROBERTSON199565} for directed graphs, which may have interesting algorithmic applications when combined with our results.

Finally, the attempts to obtain a Bidimensionality theory for directed graphs, such as the one presented by Dorn et al.~\cite{DornFLRS13}, are so far less satisfying that the undirected version, from the point of view of generality and efficiency of the obtained algorithms. We hope that our \textsf{FPT} version of the Directed Grid Theorem will have a relevant role in an eventual Bidimensionality theory for directed graphs.



\bibliographystyle{abbrv-initials}
\bibliography{references}
\end{document}